\DeclareFontFamily{U}{mathb}{}%
\DeclareFontShape{U}{mathb}{m}{n}{
  <5> <6> <7> <8> <9> <10>
  <10.95> <12> <14.4> <17.28> <20.74> <24.88>
  mathb10
  }{}
\DeclareSymbolFont{mathb}{U}{mathb}{m}{n}
\DeclareMathSymbol{\mminus}{\mathbin}{mathb}{"01}   %
\newcommand{\autoref}[1]{\cref{#1}}
\DeclareFontFamily{U}{MnSymbolA}{}
\DeclareFontShape{U}{MnSymbolA}{m}{n}{
    <-6>  MnSymbolA5
   <6-7>  MnSymbolA6
   <7-8>  MnSymbolA7
   <8-9>  MnSymbolA8
   <9-10> MnSymbolA9
  <10-12> MnSymbolA10
  <12->   MnSymbolA12}{}
\DeclareFontShape{U}{MnSymbolA}{b}{n}{
    <-6>  MnSymbolA-Bold5
   <6-7>  MnSymbolA-Bold6
   <7-8>  MnSymbolA-Bold7
   <8-9>  MnSymbolA-Bold8
   <9-10> MnSymbolA-Bold9
  <10-12> MnSymbolA-Bold10
  <12->   MnSymbolA-Bold12}{}
\DeclareSymbolFont{MnSyA}{U}{MnSymbolA}{m}{n}
\DeclareRobustCommand{\overleftharpoon}{\mathpalette{\overarrow@\leftharpoonfill@}}
\DeclareRobustCommand{\overrightharpoon}{\mathpalette{\overarrow@\rightharpoonfill@}}
\def\leftharpoonfill@{\arrowfill@\leftharpoondown\mn@relbar\mn@relbar}
\def\rightharpoonfill@{\arrowfill@\mn@relbar\mn@relbar\rightharpoonup}
\DeclareMathSymbol{\leftharpoondown}{\mathrel}{MnSyA}{'112}
\DeclareMathSymbol{\rightharpoonup}{\mathrel}{MnSyA}{'100}
\DeclareMathSymbol{\mn@relbar}{\mathrel}{MnSyA}{'320}
\newcommand{\semFunct}[1]{\left \llbracket #1 \right \rrbracket}
\newcommand{\sort}{\mathcal{S}}
\newcommand{\sign}{\Sigma}
\newcommand{\cate}[1]{\mathscr{#1}}
\newcommand{\interpretation}[1]{\mathcal{#1}}
\newcommand{\objects}[1]{Obj(#1)}
\newcommand{\arrows}[1]{Ar(#1)}
\newcommand{\arity}{ar}
\newcommand{\coarity}{coar}
\newcommand{\freeCat}{\cate{C}_\sign}
\newcommand{\normalised}[1]{\uparrow \! {#1} \! \downarrow}
\newcommand{\freeNormalCat}{\normalised{\freeCat}}
\newcommand{\lTerm}[3]{#1 \vdash #2 \colon #3}
\tikzset{%
    symbol/.style={%
        draw=none,
        every to/.append style={%
            edge node={node [sloped, allow upside down, auto=false]{$#1$}}}
    }
}
\newcommand{\outin}[2]{
out=\IfEqCase{#1}{
{ne}{-45}
{se}{45}
{sw}{135}
{nw}{-135}
}[0],
in=\IfEqCase{#2}{
{ne}{-45}
{se}{45}
{sw}{135}
{nw}{-135}
}
}
\newcommand{\compassToDegrees}[1]{
\IfEqCase{#1}{
{ne}{-45}
{se}{45}
{sw}{135}
{nw}{-135}
}[#1]
}
\newcommand{\inedge}[1]{
to [in=#1]
}
\newcommand{\inedgetest}[1]{\expandafter\inedge\expandafter{\compassToDegrees{#1}}}
\theoremstyle{plain}
\newtheorem{theorem}{Theorem}[section]
\newtheorem*{theorem*}{Theorem}
\newtheorem{lemma}[theorem]{Lemma}
\newtheorem*{lemma*}{Lemma}
\renewcommand{\autoref}{\cref}          %
\crefname{equation}{Equation}{Equations}
\crefname{theorem}{Theorem}{Theorems}
\crefname{lemma}{Lemma}{Lemmata}
\crefname{example}{Example}{Examples}
\theoremstyle{definition}
\newtheorem{definition}[theorem]{Definition}
\newtheorem{example}[theorem]{Example}
\newtheorem*{remark}{Remark}
\renewcommand{\phi}{\varphi}                	    %
\newcommand{\blank}{\phantom{ }}                    %
\newcommand{\tensor}{\otimes}                       %
\newcommand{\C}{\mathscr{C}}
\renewcommand{\hom}[1]{\left[#1\right]}
\DeclareSymbolFont{bbold}{U}{bbold}{m}{n}
\DeclareSymbolFontAlphabet{\mathbbold}{bbold}
\DeclareMathSymbol{\mminus}{\mathbin}{mathb}{"01}
\newcommand{\id}{\mathrm{id}}
\DeclareMathOperator{\Set}{\mathsf{Set}}
\DeclareFontFamily{U}{min}{}
\DeclareFontShape{U}{min}{m}{n}{<-> udmj30}{}
\newsavebox{\pullbackdrawing}
\sbox\pullbackdrawing{%
\begin{tikzpicture}%
\draw (0,0) -- (1ex,0ex);%
\draw (1ex,0ex) -- (1ex,1ex);%
\end{tikzpicture}}
\newsavebox{\pushforwarddrawing}
\sbox\pushforwarddrawing{%
\begin{tikzpicture}[rotate=180, transform shape]%
\draw (0,0) -- (1ex,0ex);%
\draw (1ex,0ex) -- (1ex,1ex);%
\end{tikzpicture}}
\def\slashedarrowfill@#1#2#3#4#5{%
  $\m@th\thickmuskip0mu\medmuskip\thickmuskip\thinmuskip\thickmuskip
  \relax#5#1\mkern-7mu%
  \cleaders\hbox{$#5\mkern-2mu#2\mkern-2mu$}\hfill
  \mathclap{#3}\mathclap{#2}%
  \cleaders\hbox{$#5\mkern-2mu#2\mkern-2mu$}\hfill
  \mkern-7mu#4$%
}
\def\rightslashedarrowfill@{%
  \slashedarrowfill@\relbar\relbar\mapstochar\rightarrow}
\newcommand\xprofto[2][]{%
  \ext@arrow 0055{\rightslashedarrowfill@}{#1}{#2}}
\tikzset{commutative diagrams/.cd, 
    prof/.style = "\shortmid"{marking, sloped},
    }
\DeclareMathSymbol\simplex  \mathord{bbold}{"01}
\DeclareFontFamily{U}{mathx}{\hyphenchar\font45}
\DeclareFontShape{U}{mathx}{m}{n}{<->mathx10}{}
\DeclareSymbolFont{mathx}{U}{mathx}{m}{n}
\DeclareMathSymbol{\endint}{\mathop}{mathx}{"B3}
\DeclareFontFamily{U}{mathx}{\hyphenchar\font45}
\DeclareFontShape{U}{mathx}{m}{n}{
      <5> <6> <7> <8> <9> <10>
      <10.95> <12> <14.4> <17.28> <20.74> <24.88>
      mathx10
      }{}
\DeclareSymbolFont{mathx}{U}{mathx}{m}{n}
\DeclareMathAccent{\widecheck}{0}{mathx}{"71}
\newcommand{\lift}{\mathrel{\multimap}}
\tikzset{shorten <>/.style={shorten >=#1,shorten <=#1}}
\newcommand{\elift}{
\mathbin{
    \vcenter{
      \hbox{${-}$}
    }
      \kern-1.25pt
    {
      \vcenter{\hbox{${\scriptstyle\diamond}$}}
    }
  }
}
\DeclareMathSymbol{\boxbackslash}{2}{mathb}{"6E}
\DeclareRobustCommand{\rvdots}{%
  \vbox{
    \baselineskip4\p@\lineskiplimit\z@
    \kern-\p@
    \hbox{.}\hbox{.}\hbox{.}
  }}
\def\instring#1#2{TT\fi\begingroup
  \edef\x{\endgroup\noexpand\in@{#1}{#2}}\x\ifin@}
    \let\nvec\Vec
    \def\Vec#1{\nvec{\vphantom b\smash{#1}}}
    \renewcommand{\Vec}[1]{\overrightarrow{\vphantom a\smash{#1}}}
    \renewcommand{\Vec}[1]{
    \if\instring{#1}{ABCDEFGHIJKLMNOPQRSTUVWXYZ}
    \vv{\vphantom{A}\smash{#1}}
    \else
    \vv{\vphantom{\tiny{a}}\smash{#1}}
    \fi}
  \newcommand{\superimpose}[2]{%
  {\ooalign{$#1\@firstoftwo#2$\cr\hfil$#1\@secondoftwo#2$\hfil\cr}}}
    \newcommand{\forgetdagger}[1]{\ifthenelse{\equal{#1}{(-)}}{^{\dagger}}{#1^{\dagger}}}
    \newcommand{\freedagger}[1]{#1^{\dagger}}
    \let\f\freedagger
    \renewcommand{\freedagger}[1]{\ifthenelse{\equal{#1}{}}{\f{(-)}}{\f{#1}}}
    \newcommand{\freemulti}[1]{#1^{\triangleright}}
    \let\f\freemulti
    \renewcommand{\freemulti}[1]{\ifthenelse{\equal{#1}{}}{\f{(-)}}{\f{#1}}}
    \newcommand{\forgetmulti}[1]{#1_{\triangleright}}
    \let\f\forgetmulti
    \renewcommand{\forgetmulti}[1]{\ifthenelse{\equal{#1}{}}{\f{(-)}}{\f{#1}}}
    \let\f\freemonoidal
    \renewcommand{\forgetmulti}[1]{\ifthenelse{\equal{#1}{}}{\f{(-)}}{\f{#1}}}
    \let\f\forgetmonoidal
    \renewcommand{\forgetmulti}[1]{\ifthenelse{\equal{#1}{}}{\f{(-)}}{\f{#1}}}
    \renewcommand{\forgetdagger}[1]{L_{\dagger}#1}
    \renewcommand{\freedagger}[1]{R_{\dagger}#1}
    \renewcommand{\freemulti}[1]{L_{\triangleright}#1}
    \renewcommand{\forgetmulti}[1]{R_{\triangleright}#1}
\definecolor{cb-orange}{cmyk}{0,0.5,1,0}
\definecolor{cb-light-blue}{cmyk}{0.8,0,0,0}
\definecolor{cb-green}{cmyk}{0.97,0,0.75,0}
\definecolor{cb-yellow}{cmyk}{0.1,0.05,0.9,0}
\definecolor{cb-blue}{cmyk}{1,0.5,0,0}
\definecolor{cb-red}{cmyk}{0,0.8,1,0}
\definecolor{cb-pink}{cmyk}{0.1,0.7,0,0}
\definecolor{cb-purple}{RGB}{62,35,115}
\definecolor{alt-plum}{RGB}{51,34,136}
\definecolor{alt-green}{RGB}{17,119,51}
\definecolor{alt-turq}{RGB}{68,170,153}
\definecolor{alt-blue}{RGB}{136,204,238}
\definecolor{alt-yellow}{RGB}{221,204,119}
\definecolor{alt-pink}{RGB}{204,102,119}
\definecolor{alt-purple}{RGB}{170,68,153}
\definecolor{alt-red}{RGB}{136,34,85}
\colorlet{c1}{cb-orange}
\colorlet{c2}{cb-light-blue}
\colorlet{c3}{cb-green}
\colorlet{c4}{cb-pink}
\colorlet{c5}{cb-blue}
\colorlet{c6}{cb-red}
\colorlet{c7}{cb-yellow}
\colorlet{c8}{alt-red}
\tikzset{
  blank/.style={
    inner sep = 0cm, 
    outer sep = 0cm, 
    align=center, 
    minimum size=0pt
  },
  label/.style={
    minimum size=5mm, 
    inner sep=0.1cm,
    fill=white
  },
  midarrow/.style={
    currarrow,
    sloped,
    scale=0.5,
    allow upside down
  },
  shorten <>/.style={
    shorten >=#1,
    shorten <=#1
  },
  bullet/.style={
    rectangle,
    draw,
    fill=black,
    minimum size=3pt,
    inner sep=0pt,
  },
  colourBullet/.style={
    circle,
    draw=#1,
    fill=#1,
    minimum size=3pt,
    inner sep=0pt,
  },
  double colour/.style 2 args={fill=#1, 
  path picture={
    \fill[#2, sharp corners] 
    (path picture bounding box.south) -|
    (path picture bounding box.north east) --
    (path picture bounding box.north) --
    cycle;
    }
  },
  modification/.style 2 args={
    rounded corners=2,
    double colour={#1}{#2},
    minimum size=5mm,
    inner sep = 0.1cm, 
    outer sep = 0cm, 
    align=center,
    text=white
  },
  functorNode/.style={
    modification={#1}{#1}
  },
  twocellarr/.style={
    shorten <= 10pt, 
    shorten >= 10pt
  },
  twocell/.style={
    functorNode={black}
  },
  twocellX/.style ={
    rectangle,
    thick,
    fill=white,
    rounded corners=0.5,
    inner sep = 0.1cm, 
    outer sep = 0cm, 
    align=center,
    minimum width={#1 pt},
    draw=black,
    line width=\lineWidth
    },
  mod/.style ={
    rectangle,
    thick,
    fill=white,
    rounded corners=0.5,
    inner sep = 0.1cm, 
    outer sep = 0cm, 
    align=center,
    minimum width={#1 pt},
  },
  mod2/.style ={
    rectangle,
    dashed,
    thick,
    fill=white,
    rounded corners=0.5,
    inner sep = 0.1cm, 
    outer sep = 0cm, 
    align=center,
    minimum width={#1 pt},
  },
  colChange/.style ={
    rectangle,
    thick,
    fill=black,
    rounded corners=0.5,
    inner sep = 0.1cm, 
    outer sep = 0cm, 
    align=center,
    minimum width={#1 pt},
  },
  only in/.style={
    to path={
      let
      \p1 = ($ (\tikztotarget) - (\tikztostart) $)
      in
      .. controls 
      ($(\tikztostart)!{veclen(\x1,\y1)/3}!($(\tikztotarget)+(#1:{veclen(\x1,\y1)/3})$)$)
      and
      ($(\tikztotarget)+(#1:{veclen(\x1,\y1)/3})$)
      .. (\tikztotarget)
    }
  },
  only out/.style={
    \path draw (\tikztostart) --node[](temp1){} (\tikztotarget-|)
    to path={
      let
      \p1 = ($ (\tikztotarget) - (\tikztostart) $)
      in
      .. controls 
      ($(\tikztostart)+(#1:{veclen(\x1,\y1)/3})$)
      and
      ($(\tikztotarget)!{veclen(\x1,\y1)/3}!($(\tikztostart)+(#1:{veclen(\x1,\y1)/3})$)$)
      .. (\tikztotarget)
    }
  },
  braidTop/.style={
    to path={
      let
      \p1=($(\tikztotarget)-(\tikztostart)$)
      in
      ..controls +(0,\y1/2).. (\tikztotarget)
    }
  },
  braidBottom/.style={
    to path={
      let
      \p1=($(\tikztotarget)-(\tikztostart)$)
      in
      ..controls +(\x1/2,0).. (\tikztotarget)
    }
  },
  NTDash/.style={
    dash pattern= on 0.20cm off 0.05cm,
  },
}
\newcommand{\gettikzxy}[3]{%
    \tikz@scan@one@point\pgfutil@firstofone#1\relax
    \edef#2{\the\pgf@x}%
    \edef#3{\the\pgf@y}%
}
\xdef\gapWidth{1.2}
\xdef\hlWidth{1.2}
\xdef\lineWidth{0.4}
\xdef\outerLineWidth{0.2}
\newcommand{\drawStartNodes}[1]{
    \xdef\xCoord{0}
      \foreach \i/\j/\k in {#1}
      {
        \coordinate (\i) at (\xCoord,0);
        \draw (\i) node[label,anchor=south] {\j};
        \xdef\xCoord{\xCoord+\k}
      }
    }
\newcommand{\drawEndNodes}[1]{
    \foreach \i/\j in {#1}
    {
        \draw(\i) node[label,anchor=north]{\j};
    }
}
\newcommandx{\joinTarget}[5][4=\nodesWidth,5=-1,usedefault]{
    \readlist*\inputJT{#1}
    \xdef\xCoordinateList{}
    \xdef\yCoordinateList{}

    \foreachitem \n \in \inputJT{
        \gettikzxy{(\n)}{\ax}{\ay}
        \xdef\xCoordinateList{\ax,\xCoordinateList}
        \xdef\yCoordinateList{\ay,\yCoordinateList}
    }
    \tikzmath{
        \minX=min(\xCoordinateList);
        \maxX=max(\xCoordinateList);
        \xCoordinate=(\maxX+\minX)/2;
        \minY=min(\yCoordinateList);
        \maxY=max(\yCoordinateList);
        \width=\maxX-\minX;
    }
    \ifdim #5 pt < 0 pt
        \xdef\yCoordinate{\minY}
    \else
        \xdef\yCoordinate{\maxY}
    \fi{}
    \xdef#4{\width}
    \coordinate[shift={(0,#2)}] (#3) at (\xCoordinate pt, \yCoordinate pt);
}
\tikzset{
  myArrow/.style 2 args={
    draw=black,
    line width=0.3mm,
    postaction={
      draw,decorate,decoration={
        markings,%
        mark=at position {#1*\pgfdecoratedpathlength-0.15cm} with {\coordinate (bent arrow 1);},
        mark=at position {#1*\pgfdecoratedpathlength-0.05cm} with {\coordinate (bent arrow 2);},
        mark=at position {#1*\pgfdecoratedpathlength+0.05cm} with {\coordinate (bent arrow 3);},
        mark=at position {#1*\pgfdecoratedpathlength+0.15cm} with {\coordinate (bent arrow 4);
          \draw[-{Latex[bend,#2, width=1.5mm, length=1.5mm]}, ] (bent arrow 1) to[curve through={(bent arrow 2) .. (bent arrow 3)}]  (bent arrow 4);
        }
      }
    }
  },
  bracket/.style={
      draw=white,
      double=#1,
      double distance=0.3mm,
      line width=0.5mm,
  },
}
\xdef\dashed{dotted}
\newcommandx{\drawArrowShift}[3][3=,usedefault]{
    \coordinate[shift={(#2)}] (temp) at (#1);
    \draw[myArrow/.list={{0.25}{},{0.75}{}},#3] (#1) to[out=-90, in=90] (temp);
    \coordinate (#1) at (temp);
}
\newcommandx{\dShiftE}[3][3=, usedefault]{
    \coordinate[shift={(#2)}] (temp) at (#1);
    \draw[#3] (#1) to[out=-90, in=90] (temp);
    \coordinate (#1) at (temp);
}
\newcommandx{\uShiftE}[3][3=,usedefault]{
    \coordinate[shift={(#2)}] (temp) at (#1);
    \draw[#3] (temp) to[out=90, in=-90] (#1);
    \coordinate (#1) at (temp);
}
\newcommandx{\dShift}[2]{
  \drawArrowShift{#1}{#2}
}
\newcommandx{\ddShift}[2]{
  \drawArrowShift{#1}{#2}[\dashed]
}
\newcommandx{\sShift}[2]{
    \coordinate[shift={(#2)}] (temp) at (#1);
    \draw[line width=0.3mm, \dashed] (#1) to[out=-90, in=90] (temp);
    \coordinate (#1) at (temp);
}
\newcommandx{\drawArrowTest}[2]{
    \coordinate[shift={(#2)}] (temp) at (#1);
    \draw[myArrow/.list={{0.5}{},{0.5}{}}] (#1) to[out=-90, in=90] (temp);
    \coordinate (#1) at (temp);
}
\newcommandx{\drawBracketShift}[4][3=left, 4=red, usedefault]{
    \coordinate[shift={(#2)}] (temp) at (#1);
    \xdef\tempath{(#1) to[out=-90, in=90] (temp)}
    \draw[bracket=#4] \tempath;
    \draw[myArrow/.list={{0.25}{#3},{0.75}{#3}}, draw=#4,] \tempath;
    \coordinate (#1) at (temp);
}
\newcommandx{\ubShift}[4][3=left, 4=red, usedefault]{
    \coordinate[shift={(#2)}] (temp) at (#1);
    \xdef\tempath{(temp) to[out=90, in=-90] (#1)}
    \draw[bracket=#4] \tempath;
    \draw[myArrow/.list={{0.25}{#3},{0.75}{#3}}, draw=#4,] \tempath;
    \coordinate (#1) at (temp);
}
\newcommandx{\bShift}[4][3=left, 4=red, usedefault]{
  \drawBracketShift{#1}{#2}[#3][#4]
}
\newcommandx{\upArrowShift}[3][3=,usedefault]{
    \coordinate[shift={(#2)}] (temp) at (#1);
    \draw[myArrow/.list={{0.25}{},{0.75}{}},#3] (temp) to[out=90, in=-90] (#1);
    \coordinate (#1) at (temp);
}
\newcommandx{\uShift}{
  \upArrowShift
}
\newcommandx{\udShift}[2]{
  \upArrowShift{#1}{#2}[\dashed]
}
\newcommandx{\leftBracketShift}[4][3=right, 4=red, usedefault]{
    \coordinate[shift={(#2)}] (temp) at (#1);
    \xdef\tempath{(#1) to[out=-90, in=90] (temp)}

    \draw[bracket=#4] \tempath;
    \draw[myArrow/.list={{0.25}{#3},{0.75}{#3}}, draw=#4] \tempath;

    \coordinate (#1) at (temp);
}
\newcommandx{\drawCup}[4][3=0.5,4=, usedefault]{
    \path (#1) to node[midway](temp){} (#2);
    \coordinate[shift={(0,-#3)}] (temp) at (temp);
    \xdef\tempath{(#1) to[out=-90, in=180] (temp) to[out=0, in=-90] (#2)}
    \draw[myArrow/.list={{0.25}{},{0.75}{}},#4]\tempath;
}
\newcommandx{\dCup}{
  \drawCup
}
\newcommandx{\bCupAcross}[4][3=0.5,4=left,usedefault]{
    \path (#1) to node[midway](temp){} (#2);
    \coordinate[shift={(0,-#3)}] (temp) at (temp);
    \xdef\tempath{(#1) to[out=-90, in=180] (temp) to[out=0, in=-90] (#2)}
    \draw[myArrow/.list={{0.25}{#4},{0.75}{#4}},red]\tempath;
}
\newcommandx{\ddCup}[3][3=0.5, usedefault]{
  \drawCup{#1}{#2}[#3][\dashed]
}
\newcommandx{\dbCup}[4][3=0.5,4=, usedefault]{
    \path (#1) to node[midway](temp){} (#2);
    \coordinate[shift={(0,-#3)}] (temp) at (temp);
    \xdef\tempath{(#1) to[out=-90, in=0] (temp) to[out=180, in=-90] (#2)}
    \draw[myArrow/.list={{0.25}{},{0.75}{}},#4]\tempath;
}
\newcommandx{\ddbCup}[3][3=0.5,usedefault]{
  \dbCup{#1}{#2}[#3][\dashed]
}
\newcommandx{\drawCap}[4][3=0.5,4=, usedefault]{
    \path (#1) to node[midway](temp){} (#2);
    \coordinate[shift={(0,#3)}] (temp) at (temp);
    \xdef\tempath{(#1) to[out=90, in=180] (temp) to[out=0, in=90] (#2)}
    \draw[myArrow/.list={{0.25}{},{0.75}{}},#4]\tempath;
}
\newcommandx{\dCap}{
  \drawCap
}
\newcommandx{\dbCap}[4][3=0.5,4=, usedefault]{
    \path (#1) to node[midway](temp){} (#2);
    \coordinate[shift={(0,#3)}] (temp) at (temp);
    \xdef\tempath{(#1) to[out=90, in=0] (temp) to[out=180, in=90] (#2)}
    \draw[myArrow/.list={{0.25}{},{0.75}{}},#4]\tempath;
}
\newcommandx{\ddCap}[3][3=0.5, usedefault]{
  \drawCap{#1}{#2}[#3][\dashed]
}
\newcommandx{\ddbCap}[3][3=0.5, usedefault]{
  \dbCap{#1}{#2}[#3][\dashed]
}
\newcommandx{\bracketCup}[4][3=0.5, 4=red, usedefault]{
    \path (#1) to node[midway](temp){} (#2);
    \coordinate[shift={(0,-#3)}] (temp) at (temp);
    
    \xdef\tempath{(#1) to[out=-90, in=180] (temp)}
    \draw[bracket=#4] \tempath;
    \draw[myArrow/.list={{0.4}{right}}, draw=#4]\tempath;

    \xdef\tempath{(#2) to[out=-90, in=0] (temp)}
    \draw[bracket=#4] \tempath;
    \draw[myArrow/.list={{0.4}{left}}, draw=#4]\tempath;
}
\newcommandx{\upBCup}[4][3=0.5, 4=red, usedefault]{
    \path (#1) to node[midway](temp){} (#2);
    \coordinate[shift={(0,-#3)}] (temp) at (temp);
    
    \xdef\tempath{(temp) to[out=180, in=-90] (#1)}
    \draw[bracket=#4] \tempath;
    \draw[myArrow/.list={{0.4}{left}}, draw=#4]\tempath;

    \xdef\tempath{(temp) to[out=0, in=-90] (#2)}
    \draw[bracket=#4] \tempath;
    \draw[myArrow/.list={{0.4}{right}}, draw=#4]\tempath;
}
\newcommandx{\bCup}[4][3=0.5, 4=red, usedefault]{
  \bracketCup{#1}{#2}[#3][#4]
}
\newcommandx{\bracketCupBack}[4][3=0.5, 4=red, usedefault]{
    \path (#1) to node[midway](temp){} (#2);
    \coordinate[shift={(0,-#3)}] (temp) at (temp);
    
    \xdef\tempath{(#1) to[out=-90, in=180] (temp)}
    \draw[bracket=#4] \tempath;
    \draw[myArrow/.list={{0.4}{left}}, draw=#4]\tempath;

    \xdef\tempath{(#2) to[out=-90, in=0] (temp)}
    \draw[bracket=#4] \tempath;
    \draw[myArrow/.list={{0.4}{right}}, draw=#4]\tempath;
}
\newcommandx{\XbbCup}[4][3=0.5, 4=red, usedefault]{
    \path (#1) to node[midway](temp){} (#2);
    \coordinate[shift={(0,-#3)}] (temp) at (temp);
    
    \xdef\tempath{(temp) to[in=-90, out=180] (#1)}
    \draw[bracket=#4] \tempath;
    \draw[myArrow/.list={{0.6}{left}}, draw=#4]\tempath;

    \xdef\tempath{(temp) to[in=-90, out=0] (#2)}
    \draw[bracket=#4] \tempath;
    \draw[myArrow/.list={{0.6}{right}}, draw=#4]\tempath;
}
\newcommandx{\bbCup}[4][3=0.5, 4=red, usedefault]{
  \bracketCupBack{#1}{#2}[#3][#4]
}
\newcommandx{\bracketCap}[4][3=0.5, 4=red, usedefault]{
    \path (#1) to node[midway](temp){} (#2);
    \coordinate[shift={(0,#3)}] (temp) at (temp);
    
    \xdef\tempath{(temp) to[out=180, in=90] (#1)}
    \draw[bracket=#4] \tempath;
    \draw[myArrow/.list={{0.6}{right}}, draw=#4]\tempath;

    \xdef\tempath{(temp) to[out=0, in=90] (#2)}
    \draw[bracket=#4] \tempath;
    \draw[myArrow/.list={{0.6}{left}}, draw=#4]\tempath;
}
\newcommandx{\buCap}[4][3=0.5, 4=red, usedefault]{
    \path (#1) to node[midway](temp){} (#2);
    \coordinate[shift={(0,#3)}] (temp) at (temp);
    
    \xdef\tempath{(#1) to[out=90, in=180] (temp)}
    \draw[bracket=#4] \tempath;
    \draw[myArrow/.list={{0.6}{left}}, draw=#4]\tempath;

    \xdef\tempath{(#2) to[out=90, in=0] (temp)}
    \draw[bracket=#4] \tempath;
    \draw[myArrow/.list={{0.6}{right}}, draw=#4]\tempath;
}
\newcommandx{\XbbCap}[4][3=0.5, 4=red, usedefault]{
    \path (#1) to node[midway](temp){} (#2);
    \coordinate[shift={(0,#3)}] (temp) at (temp);
    
    \xdef\tempath{(#1) to[in=180, out=90] (temp)}
    \draw[bracket=#4] \tempath;
    \draw[myArrow/.list={{0.4}{left}}, draw=#4]\tempath;

    \xdef\tempath{(#2) to[out=90, in=0] (temp)}
    \draw[bracket=#4] \tempath;
    \draw[myArrow/.list={{0.4}{right}}, draw=#4]\tempath;
}
\newcommandx{\XbCap}[4][3=0.5, 4=red, usedefault]{
    \path (#1) to node[midway](temp){} (#2);
    \coordinate[shift={(0,#3)}] (temp) at (temp);
    
    \xdef\tempath{(#1) to[in=180, out=90] (temp)}
    \draw[bracket=#4] \tempath;
    \draw[myArrow/.list={{0.4}{right}}, draw=#4]\tempath;

    \xdef\tempath{(#2) to[out=90, in=0] (temp)}
    \draw[bracket=#4] \tempath;
    \draw[myArrow/.list={{0.4}{left}}, draw=#4]\tempath;
}
\newcommandx{\bCap}[4][3=0.5, 4=red, usedefault]{
  \bracketCap{#1}{#2}[#3][#4]
}
\newcommandx{\bead}[3]{
  \path (#1) to node[midway](temp){} (#2);

  \path let \p1=($(#1.west)-(#2.east)$), \n1 = {veclen(\p1)}
  in node[minimum width=\n1+8, draw=black, rectangle, fill=white, font=\tiny, rounded corners=0.5mm]  at (temp){$#3$};
}
\newcommandx{\coord}[3]{
  \coordinate[shift={(#3)}] (#1) at (#2);
}
\newcommandx{\dCopy}[7][4=1, 5=1, 6=1, 7=, usedefault]{
  \coordinate[shift={(0,-#5)}] (temp) at (#1);
  \coordinate[shift={(-#4,-#6)}] (tempLeft) at (temp);
  \coordinate[shift={(#4,-#6)}] (tempRight) at (temp);

  \draw[myArrow/.list={{0.5}{}},#7] (#1) to (temp);
  \draw[myArrow/.list={{0.5}{}},#7] (temp) to[out=-170, in=90] (tempLeft);
  \draw[myArrow/.list={{0.5}{}},#7] (temp) to[out=-10, in=90] (tempRight);

  \node[minimum width=0.15cm, inner sep=0pt, draw=black, fill=black, circle](temp) at (temp){};

  \coordinate (#2) at (tempLeft);
  \coordinate (#3) at (tempRight);
}
\newcommandx{\dDel}[1]{
  \node[minimum width=0.15cm, inner sep=0pt, draw=black, fill=black, circle](temp) at (#1){};
}
\newcommandx{\dashbead}[2]{
  \path (#1) to node[midway](temp){} (#2);

  \path let \p1=($(#1.west)-(#2.east)$), \n1 = {veclen(\p1)}
  in node[minimum width=\n1+8, minimum height=((\n1+8)/2), draw=black, dashed, rectangle, fill=white, font=\tiny, rounded corners=0.5mm]  at (temp){};
}
\newcommandx{\midway}[3]{
    \path (#2) to node[midway](#1){} (#3);
}
    \tikzstyle{finish}=[to path={(\tikztostart) -- (\tikztotarget.#1)}]
    \tikzstyle{start}=[to path={(\tikztostart.#1) -- (\tikztotarget)}]
    \tikzstyle{vertical}=[
    \tikzstyle{verticalup}=[
    \newcommand{\asti}{%
      \check@mathfonts
      \leavevmode
      {\ooalign{%
        \hidewidth
        \raisebox{.8ex}{\fontsize{\ssf@size}{0}\selectfont*}%
        \hidewidth\cr
        \i\cr
      }}%
    }
\newcommand{\diagLabel}[1]{
  #1
}
\newcommand{\brCup}{\underline{{\color{red} \cup}}}
\newcommand{\brCap}{\underline{{\color{red} \cap}}}
\newcommand{\dXin}{\underrightarrow{\check{\chi}}{\color{red} \!|\,}}
\newcommand{\dXout}{\underleftarrow{\check{\chi}}{\color{red} \!|\,}}
\newcommand{\uXin}{\underrightarrow{\hat{\chi}}{\color{red} \!|\,}}
\newcommand{\uXout}{\underleftarrow{\hat{\chi}}{\color{red} \!|\,}}
\newcommand{\Xcup}{\underrightarrow{\psi}{\color{red} \!\mid\,}}
\newcommand{\rightCap}{\underrightarrow{\cap}}
\newcommand{\rightCup}{\underrightarrow{\Cup}}
\newcommand{\YdXin}{\Yflip{\dXin}}
\newcommand{\YdXout}{\Yflip{\dXout}}
\newcommand{\YuXin}{\Yflip{\uXin}}
\newcommand{\YuXout}{{\Yflip{\uXout}}}
\newcommand{\YXcup}{\Yflip{\Xcup}}
\newcommand{\YrightCap}{{\Yflip{\rightCap}}}
\newcommand{\YrightCup}{{\Yflip{\rightCup}}}
\newcommand{\flip}[1]{\raisebox{0.5\depth}{\scalebox{1}[-1]{$#1$}}}
\newcommand{\Yflip}[1]{{\scalebox{-1}[1]{$#1$}}}
\newcommand{\XYflip}[1]{\Yflip{\flip{#1}}}
\newcommand{\subfig}[2]{
  \begin{subfigure}{#1\textwidth}
    \centering
    #2
  \end{subfigure}\ignorespaces
}
\newcommand{\figLab}[1]{\\\vspace{0.2cm}\textbf{(#1)}}
\newcommand{\figSpace}{\vspace{0.75cm}}
\crefname{relation}{Equation}{Equations}
\crefname{figure}{Figure}{Figures}
\tikzstyle{none}=[inner sep=0pt]
\tikzstyle{solidString}=[{myArrow/.list}={{0.25}{},{0.75}{}}, solid]
\tikzstyle{dottedString}=[{myArrow/.list}={{0.25}{},{0.75}{}}, dotted]
\tikzstyle{lBracket}=[{myArrow/.list}={{0.25}{left},{0.75}{left}}, red]
\tikzstyle{rBracket}=[{myArrow/.list}={{0.25}{right},{0.75}{right}}, red]
 \newcommand{\normF}{\mathcal{N}}
 \newcommand{\unref}[1]{#1^{\rightarrow}}
 \newcommand{\minimal}[1]{#1^\bot}
\title{String Diagrams for Closed Symmetric Monoidal Categories}
\author{Callum Reader}
\affil{Tallinn University of Technology, Estonia}
\author{Alessandro Di Giorgio}
\affil{Tallinn University of Technology, Estonia}
\date{}
\begin{document}

\maketitle

\begin{abstract}
    We introduce a graphical language for closed symmetric monoidal categories based on an extension of string diagrams with special bracket wires representing internal homs. These bracket wires make the structure of the internal hom functor explicit, allowing standard morphism wires to interact with them through a well-defined set of graphical rules.
    We establish the soundness and completeness of the diagrammatic calculus, and illustrate its expressiveness through examples drawn from category theory, logic and programming language semantics.
\end{abstract}
\section{Introduction}\label{sec:intro}

Diagrams have long played a central role in mathematics and computer science, serving both as intuitive aids and as rigorous tools for reasoning. 
From commutative diagrams in algebraic topology~\cite{eilenberg2015foundations} to Penrose notation in phyiscs~\cite{Penrose-tensornotation}, Peirce's existential graphs in mathematical logic~\cite{existentialGraphs}, flowcharts in programming~\cite{nassi1973flowchart} and Petri nets in concurrency~\cite{peterson1977petri}, the use of pictures to represent abstract structures predates the formalization of many of the theories they illustrate. %
In particular, diagrammatic reasoning has been crucial in areas where morphisms, transformations, or processes are primary, allowing mathematicians and computer scientists to track complex compositions and dependencies visually.

String diagrams emerged from this tradition as a powerful formalism for reasoning in symmetric monoidal categories. 
Introduced in the context of category theory by Joyal and Street~\cite{joyal1991geometry}, string diagrams provide a graphical language where morphisms are depicted as boxes and wires, and categorical structures—such as tensor product, composition, and symmetry—are encoded geometrically. 
This approach makes algebraic reasoning not only easier to manage but often self-evident.
For instance, the equation $(f ; g) \otimes (h ; i) = (f \otimes h) ; (g \otimes i),$ expressing functoriality of the tensor product, is represented by a single string diagram—both sides correspond to exactly the same picture, shown below. 
\[
\ensuremath{\vcenter{\hbox{\scalebox{0.75}{\includegraphics{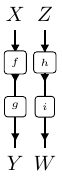}
}}}}
\]
Other equations, such as the naturality of the symmetry, are captured by geometric transformations: for example, sliding a box past a crossing of wires, as illustrated by the pair of diagrams below.
\[
    \ensuremath{\vcenter{\hbox{\scalebox{0.75}{$\vcenter{\hbox{\includegraphics{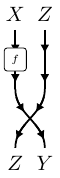}}}$}}}} = \ensuremath{\vcenter{\hbox{\scalebox{0.75}{$\vcenter{\hbox{\includegraphics{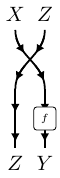}}}$
}}}}
\]  
As such, string diagrams have found applications across quantum computing~\cite{Abramsky2004,Coecke2017}, database theory~\cite{GCQ}, programming language semantics~\cite{jeffrey1997premonoidal,bonchi2025diagrammatic}, control theory~\cite{Bonchi2014b}, logic~\cite{bonchi2024diagrammatic} and more~\cite{gale2023categorical, piedeleu2025complete, DBLP:journals/corr/abs-2106-07763, Ghica2016}.

With the traction gained by the growing number of applications, string diagrams were naturally extended to accommodate richer categorical settings. These extensions reflect the need to reason diagrammatically in settings where additional layers of structure—such as duals, traces, or multiple monoidal products—are present. For example, compact closed categories~\cite{freyd1989braided} enrich string diagrams with cups and caps to make duality explict as a bending of wires. Traced monoidal categories~\cite{joyal1996traced} introduce loops to represent fixed-point behaviour or feedback, crucial in the semantics of recursion and iteration. In bicategories~\cite{BenabouBicategories}, where composition is associative only up to isomorphism, diagrams~\cite{DBLP:journals/corr/Marsden14} often include 2-cells and higher-dimensional surfaces to manage the extra level of structure. Similarly, diagrams~\cite{comfort2020sheet,bonchi2023deconstructing} for bimonoidal categories~\cite{laplaza_coherence_1972} have to deal with multiple tensor products, requiring new graphical conventions to keep the interactions between monoidal structures clear. 

In this paper, we develop a string diagrammatic language for \emph{closed} symmetric monoidal categories—a class of categories that plays a foundational role in logic and computer science. These categories are distinguished by the presence of an internal hom functor, allowing e.g. to represent morphisms from $X \otimes Y$ to $Z$ as morphisms from $X$ to $[Y, Z]$. In $\mathsf{Set}$, the category of sets and functions, this operation corresponds to currying, with $[Y,Z]$ being the set of functions from $Y$ to $Z$. In our language currying is represented as the following diagram.
\[ \ensuremath{\vcenter{\hbox{\scalebox{0.75}{\includegraphics{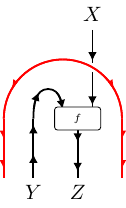}
}}}} \]
In particular we employ red ‘‘bracket'' strings to form the internal hom $[Y,Z]$; the direction on the wires indicates whether the objects are covariant or contravariant in the internal hom functor.

This structure captures the essence of higher-order computational models, underlying the categorical semantics of typed $\lambda$-calculi. In proof theory, the internal hom corresponds to implication, and its adjunction with the tensor product models the process of introducing and eliminating assumptions.%

Despite the centrality of these categories, their string diagrammatic treatment has traditionally been more limited, due to the added complexity of representing internal homs in a geometric way.
The aim of this paper is to fill this gap by introducing a formal diagrammatic language that is enough to support a wide range of applications, from logic to programming language semantics, as already suggested in this introduction.%

{\bf Synopsis.} In Section~\ref{sec:presentation}, we present our diagrammatic language via generators and equations, and we prove that it forms a closed symmetric monoidal category. Furthermore, we prove some derived equations that reinforce the visual intuition. In Section~\ref{sec:examplesApp}, we provide several examples from category theory and computer science, including a diagrammatic encoding of the simply typed $\lambda$-calculus. In Section~\ref{sec:soundness and completeness}, we prove completeness—showing that the string diagrams are the free closed symmetric monoidal category generated by a closed monoidal signature. To this end, we establish a normalization and a decomposition result for our diagrams.

\subsection{Related Work} 

There have been some attempts to develop a diagrammatic syntax for closed monoidal categories, most notably the \emph{bubble-and-clasp} notation in~\cite{BaezRosetta} and the \emph{hierarchical string diagrams} in~\cite{DBLP:conf/fscd/Alvarez-Picallo22, ghica2023string}. 
Higher-order mechanisms are represented in both of them as a bubble enclosing a diagram. For example, currying is depicted as the diagrams below.
\[
\includegraphics[scale=0.4]{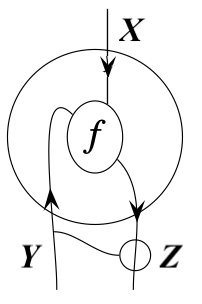} \qquad\qquad\qquad \includegraphics[scale=0.35]{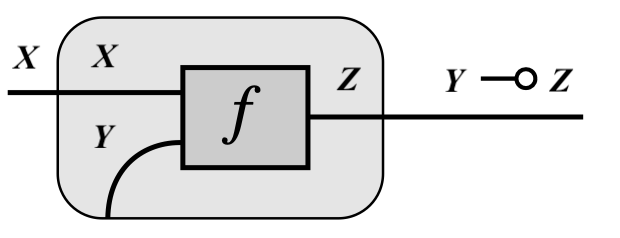}
\]

They are close in spirit to our language, however there are some notable differences. The first diagram binds together the strings for $Y$ and $Z$ via a clasp to form the object $[Y,Z]$, while in the second diagram, the $Y$ wire is curryed away and the output type is annotated with $Y \multimap Z$, denoting the object $[Y,Z]$.

For the case of~\cite{BaezRosetta}, we are not aware of any completeness result with respect to closed symmetric monoidal categories. The language proposed in~\cite{DBLP:conf/fscd/Alvarez-Picallo22, ghica2023string} has been formalised using functorial boxes~\cite{mellies_functorial_2006}. However, the approach is not fully diagrammatic, as it relies on a hybrid syntax combining graphical elements with algebraic notation, rather than a purely visual representation.

The same holds true also for the diagrams for closed monoidal theories appearing in~\cite{garner2009variable, garner2008graphical}.

In~\cite{shulman2020autonomous}, it has been shown that every closed monoidal category fully embeds into a $\ast$-autonomous category via a functor that preserves the closed structure. This result suggests that the graphical language for $\ast$-autonomous categories~\cite{Selinger2009} can, in principle, be used to reason about closed monoidal categories as well. However, this approach does not yield a complete diagrammatic language for the latter: one must take care to avoid diagrammatic manipulations that are valid in the $\ast$-autonomous setting but do not correspond to valid reasoning within the closed monoidal category.

Recent, unpublished work by Willerton~\cite{willerton} provides a more geometric account of non-symmetric, closed monoidal categories. In this language, the internal hom is represented by a surface which surrounds the strings, acting as a boundary.

Finally, although they arise in different contexts, additional instances of string diagrams equipped with graphical mechanisms for managing different kinds of bracketing can be found in~~\cite{acclavio2019proof,bonchi2023deconstructing,bonchi2024diagrammatic,comfort2020sheet,tiurin2025equivalence}.

\paragraph{Acknowledgments} We would like to acknowledge Matt Earnshaw, Alkis Ioannidis, Mario Roman, Pawel Sobocinski and Simon Willerton for several helpful discussions. We are also grateful to the anonymous reviewers of CSL '26 for their insightful comments and feedback.

\section{String Diagrams with Brackets}\label{sec:presentation}

The string diagram language that we present here relies on a kind of context-dependency: certain manipulations of strings can only take place in certain `contexts' so to speak. These contexts represent the closed structure, typically denoted $\hom{A,B}$ or $A\lift B$. Other similar languages also have such contexts, in~\cite{ghica2023string} this means using a functor box, while in~\cite{BaezRosetta} this means using a bubble. 

The key difference here is that our contexts are not machinery that we impose on string diagrams. They are built from special types of strings -- referred to as `bracketing strings' -- and these bracketing strings are subject to manipulations and equations that we might expect from any other string. In fact, the equations on bracketing strings correspond closely to symmetries as well as yankings, poppings and mergings -- the same equations imposed on string diagrams for adjoint equivalences. 

To see how we use these bracketing strings, consider our simplest example: diagram (a) in \autoref{fig:basic}, which represents the identity map on $\hom{A,B}$. Here we read our diagrams from top to bottom. The upwards string -- or `upstring' -- indicates that $A$ is contravariant in the internal hom functor, whilst the `downstring' $B$ is covariant. Note that an upstring may only appear between a bracketing pair of strings, and that there is a left bracketing string, and a right bracketing string.

The reader might also notice that our bracketing strings are directional. This is so that we can nest our brackets and represent double dualisations. So diagram (b) in \autoref{fig:basic} represents the identity on $\hom{\hom{A,B},C}$.

As in string diagrams for monoidal categories, we represent the tensor product by horizontal concatenation, so diagram (c) in \autoref{fig:basic} represents the identity on $A\tensor \hom{B\tensor C, D\tensor E}$.  In this example note that all of the upstrings are on the left and all of the downstrings are on the right. This is a sort of `normal form' for our diagrams that we discuss in \autoref{sec:normal}. Since our language is for closed \emph{symmetric} monoidal categories, it is convenient to allow our upstrings and downstrings to appear in any order, and we add in a syntax for crossing these strings, subject to the usual symmetry equations\footnote{Note that, this manipulation can be considered purely syntactically, but might also be seen as a kind of partially unbiased definition of closed symmetric monoidal in the vein of \cite{leinster2004higher}. It also corresponds to the fact that a closed symmetric monoidal category is both left- and right-closed, but we do not explore that perspective here.}. For example, the identity on $A\tensor \hom{B\tensor C, D\tensor E}$ can also be represented by diagram (d) in \autoref{fig:basic}.

Of course we include the usual boxes to represent morphisms. So, for example, if we have $g\colon A_0\to A_1$ and $f\colon \hom{A_0,B}\to C$ then the composite $
  \hom{A_1,B} \xrightarrow{\hom{g,B}} \hom{A_0,B} \xrightarrow{f} C
$ is represented by diagram (g) of \autoref{fig:basic}. 

Boxes allow us to represent certain morphisms between internal homs. In order to represent all of the \emph{canonical} morphisms found in a closed symmetric monoidal category we have a collection of manipulations that can be performed on diagrams. Locally, these manipulations look like the those of \autoref{fig:local}. However, to prevent the generation of invalid diagrams, these manipulations are -- informally -- subject to the following restrictions.\vspace{0.1cm}

\begin{enumerate}
  \item Bracketing strings must occur in balanced pairs.
  \item Upstrings can only occur between pairs of bracketing strings.
  \item Downstrings can only leave a downward bracketing pairs of strings, if that bracketing pair of strings contains no upstrings. 
  \item Upstrings can only enter a downwards bracketing pair from another, if the outer bracketing pair has no other downstrings.
  \item The corresponding restrictions on upwards bracketing pairs of strings.
\end{enumerate}\vspace{0.1cm}

\begin{figure}[p]
\subfig{1}{
\subfig{0.2}{\ensuremath{\vcenter{\hbox{\scalebox{0.75}{\input{ClosedDiags/Local/14}}}}}}
\subfig{0.2}{\ensuremath{\vcenter{\hbox{\scalebox{0.75}{\input{ClosedDiags/Local/5}}}}}}
\subfig{0.2}{\ensuremath{\vcenter{\hbox{\scalebox{0.75}{\input{ClosedDiags/Local/5a}}}}}}
\subfig{0.2}{\ensuremath{\vcenter{\hbox{\scalebox{0.75}{\input{ClosedDiags/Relations/cupNat7}}}}}}
\subfig{0.2}{\ensuremath{\vcenter{\hbox{\scalebox{0.75}{\input{ClosedDiags/Relations/cupNat8}}}}}}\\[-0.5cm]
\subfig{0.2}{\ensuremath{\vcenter{\hbox{\scalebox{0.75}{\input{ClosedDiags/Local/15}}}}}}
\subfig{0.2}{\ensuremath{\vcenter{\hbox{\scalebox{0.75}{\input{ClosedDiags/Relations/cupNat5a}}}}}}
\subfig{0.2}{\ensuremath{\vcenter{\hbox{\scalebox{0.75}{\input{ClosedDiags/Relations/cupNat6a}}}}}}
\subfig{0.2}{\ensuremath{\vcenter{\hbox{\scalebox{0.75}{\input{ClosedDiags/Relations/cupNat7a}}}}}}
\subfig{0.2}{\ensuremath{\vcenter{\hbox{\scalebox{0.75}{\input{ClosedDiags/Relations/cupNat8a}}}}}}\\[-0.5cm]
\subfig{0.2}{\ensuremath{\vcenter{\hbox{\scalebox{0.75}{\input{ClosedDiags/Local/15a}}}}}}
\subfig{0.2}{\ensuremath{\vcenter{\hbox{\scalebox{0.75}{\input{ClosedDiags/Relations/cupNat10}}}}}}
\subfig{0.2}{\ensuremath{\vcenter{\hbox{\scalebox{0.75}{\input{ClosedDiags/Local/10a}}}}}}
\subfig{0.2}{\ensuremath{\vcenter{\hbox{\scalebox{0.75}{\input{ClosedDiags/Local/12}}}}}}%
\subfig{0.2}{\ensuremath{\vcenter{\hbox{\scalebox{0.75}{\input{ClosedDiags/Local/12a}}}}}}\\[-0.5cm]
\subfig{0.2}{\ensuremath{\vcenter{\hbox{\scalebox{0.75}{\input{ClosedDiags/Local/16}}}}}}
\subfig{0.2}{\ensuremath{\vcenter{\hbox{\scalebox{0.75}{\input{ClosedDiags/Local/11}}}}}}%
\subfig{0.2}{\ensuremath{\vcenter{\hbox{\scalebox{0.75}{\input{ClosedDiags/Local/11a}}}}}}%
\subfig{0.2}{\ensuremath{\vcenter{\hbox{\scalebox{0.75}{\input{ClosedDiags/Local/13}}}}}}%
\subfig{0.2}{\ensuremath{\vcenter{\hbox{\scalebox{0.75}{\input{ClosedDiags/Local/13a}}}}}}}
\caption{Local manipulations.}
\label{fig:local}
\figSpace
\subfig{1}{
\begin{subfigure}{\textwidth}
\subfig{0.33}{\ensuremath{\vcenter{\hbox{\scalebox{0.75}{\input{ClosedDiags/basic1.tex}}}}}\figLab{a}}
\subfig{0.33}{\ensuremath{\vcenter{\hbox{\scalebox{0.75}{\input{ClosedDiags/Example/3.tex}}}}}\figLab{b}}
\subfig{0.33}{\ensuremath{\vcenter{\hbox{\scalebox{0.75}{\input{ClosedDiags/basic1a.tex}}}}}\figLab{c}}
\\
\subfig{0.33}{\ensuremath{\vcenter{\hbox{\scalebox{0.75}{\input{ClosedDiags/Example/2.tex}}}}}\figLab{d}}
\subfig{0.33}{\ensuremath{\vcenter{\hbox{\scalebox{0.75}{\input{ClosedDiags/basic5a.tex}}}}}\figLab{e}}
\subfig{0.33}{\ensuremath{\vcenter{\hbox{\scalebox{0.75}{\input{ClosedDiags/basic4.tex}}}}}\figLab{f}}
\\
\subfig{0.5}{\ensuremath{\vcenter{\hbox{\scalebox{0.75}{\input{ClosedDiags/Example/1.tex}}}}}\figLab{g}}
\subfig{0.5}{\ensuremath{\vcenter{\hbox{\scalebox{0.75}{\input{ClosedDiags/Example/4.tex}}}}}\figLab{h}}
\end{subfigure}}%
\caption{Examples of valid diagrams.}
\label{fig:basic}
\figSpace
\subfig{1}{
\begin{subfigure}{\textwidth}
\subfig{0.5}{\ensuremath{\vcenter{\hbox{\scalebox{0.75}{\input{ClosedDiags/Invalid/2}}}}}\figLab{a}}
\subfig{0.5}{\ensuremath{\vcenter{\hbox{\scalebox{0.75}{\input{ClosedDiags/Invalid/1}}}}}\figLab{b}}
\end{subfigure}}
\caption{Examples of invalid diagrams.}
\label{fig:invalid}
\end{figure}

For example, we can represent the canonical morphism $\hom{A,B}\tensor C\to \hom{A,B\tensor C}$ by diagram (f) in \autoref{fig:basic}. In, for example, $\Set$ this takes the pair $(f,c)$ to the function $f_c$ where $f_c(a)=(f(a),c)$. 

In diagram (e) of \autoref{fig:basic} we see the internal tensor-hom adjunction $\hom{A,\hom{B,C}}\to \hom{A\tensor B, C}$.  This is a use of syntactic sugar, where the crossing of the cupped bracket string represents two separate crossings, followed by a bracket cup. This syntactic sugar is unambiguous, since we later derive the equations in \autoref{fig:derCrossingRel}.

Note that in diagrams (e) and (h) we cross an upstring with a bracketing string. We cannot, however, draw diagram (a) in \autoref{fig:invalid}. This is because the presence of the downstring $C$ \emph{prevents} $A$ from crossing the bracketing string. This should certainly be the case, since there is no canonical morphism $\hom{A, \hom{I,B}\tensor C} \to \hom{I, \hom{I,A \tensor B}\tensor C}$. %

It is also worth noting that the cups in our language must always either cross a bracketing string or follow a bracketing string, and this prevents the occurence of traces. For example, diagram (b) of \autoref{fig:invalid} is not a valid diagram.

Finally, note that the language is designed in such a way that the contents of bracketing pairs of strings can really be thought of as a string in itself -- this is why we use half arrowheads to denote the left and right bracketing strings. 
So, for example, diagram (h) in \autoref{fig:basic} represents the canonical evaluation $\hom{A,B}\tensor \hom{\hom{A,B},C} \to C \xrightarrow{\sim} \hom{I,C}$.

As for traditional string diagrams, we are interested in generating the language starting from a signature, that is a set of symbols $f \colon X \to Y$,  each with their own name and type.

\begin{definition}\label{def:monoidal closed signature}
  A \emph{(unbiased) closed monoidal signature}, often simply denoted by $\sign$, is a quadruple $(\sort, \sign, \arity, \coarity)$ consisting of a set $\sort$ of \emph{sorts}, a set $\sign$ of \emph{generators}, and two functions $\arity, \coarity \colon \sign \to \sort^\sharp$ that assign to each generator its domain and codomain, respectively. Here, $\sort^\sharp$ is the set of strings generated by the first layer of the grammar below
  \[
    \begin{array}{lcl@{\qquad\mid\qquad}l@{\qquad\mid\qquad}l}
    X &\coloneqq& I & AX & [\mathbf{X}]X 
    \\
    X^\ast &\coloneqq& I & A^\ast X^\ast & [\mathbf{X}]^\ast X^\ast
    \\
    \mathbf{X} &\coloneqq& I & X^\ast \mathbf{X} & X \mathbf{X}
  \end{array}
  \]
  where $I$ is the empty string and $A$ is a symbol in $\sort$. Elements of $\sort^\sharp$ are thus bracketed strings of sorts and their formal dual. Brackets $[-]$ and duals $-^\ast$ serve the same role of the brackets and upstrings in the diagrammatic language.

As a convention, we use letters $A,B,C,\ldots$ to denote elements of $\sort$ and $X,Y,Z,\ldots,U,V,W,\ldots$ to denote elements of $\sort^\sharp$.

\end{definition}

Since our string types involve nested brackets, as well as upstrings and downstrings, in the first subsection we give an inductive definition of how we represent each of these types. In the second subsection we give an inductive definition of our terms. Rather than generating our diagrams from the manipulations in \autoref{fig:local}, and discarding the invalid manipulations, we instead introduce a syntax that can \emph{only} generate valid diagrams. In the third subsection we give a collection of equations that we impose on these diagrams such as symmetry, naturality, and yanking conditions. We then prove that additional, expected equations can be derived from these, before proving that the category they generate is, in fact, closed symmetric monoidal.

\subsection{Objects}
Since we now have ``types'' of string -- upstrings, downstrings, upwards brackets, and downwards brackets -- we must first generate all of the identity diagrams in a sensible way. We need to make sure that our bracketing strings are balanced, and that we can generate nested brackets, as well as strings in up- and downward directions. This is done via the following three layer grammar, that is a diagrammatic version of the one in Definition~\ref{def:monoidal closed signature}.
\begin{alignat*}{6}
  \ensuremath{\vcenter{\hbox{\scalebox{0.75}{\includegraphics{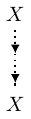}
}}}}&\coloneqq \quad
    &&\ensuremath{\vcenter{\hbox{\scalebox{0.75}{\includegraphics{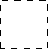}
}}}}\quad &&\mid \quad 
  &&\ensuremath{\vcenter{\hbox{\scalebox{0.75}{\includegraphics{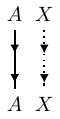}
}}}} \quad &&\mid \quad
  &&\ensuremath{\vcenter{\hbox{\scalebox{0.75}{\includegraphics{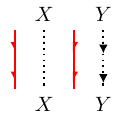}
}}}}\\
  \ensuremath{\vcenter{\hbox{\scalebox{0.75}{\includegraphics{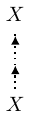}
}}}}&\coloneqq  \quad
    &&\ensuremath{\vcenter{\hbox{\scalebox{0.75}{}}}}\quad &&\mid \quad 
  &&\ensuremath{\vcenter{\hbox{\scalebox{0.75}{\includegraphics{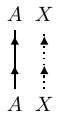}
}}}} \quad &&\mid \quad
  &&\ensuremath{\vcenter{\hbox{\scalebox{0.75}{\includegraphics{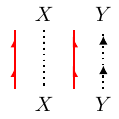}
}}}}\\
  \ensuremath{\vcenter{\hbox{\scalebox{0.75}{\includegraphics{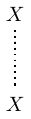}
}}}}&\coloneqq  \quad
  &&\ensuremath{\vcenter{\hbox{\scalebox{0.75}{}}}} \quad &&\mid \quad   
  &&\ensuremath{\vcenter{\hbox{\scalebox{0.75}{\includegraphics{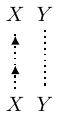}
}}}} \quad &&\mid \quad &&\ensuremath{\vcenter{\hbox{\scalebox{0.75}{\includegraphics{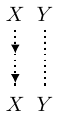}
}}}} 
\end{alignat*}
We use solid lines for the generating objects $A \in \sort$ and dotted lines to indicate any object.
Note that here undirected strings -- those strings that appear inside bracketing strings -- are really lists of up- and downstrings that can appear in any order. This does not represent the typical algebraic representation of a closed symmetric monoidal category, in which all of the contravariant terms appear on the left and all of the covariant terms appear on the right: for example, $\hom{A\tensor B, C\tensor D \tensor E}$. In Definition \ref{def:normalised} we define normalised terms and in \autoref{thm:normalisation equivalence} we prove that the existence of unordered terms is purely syntactic.

\subsection{Morphisms}

\begin{figure}[h]
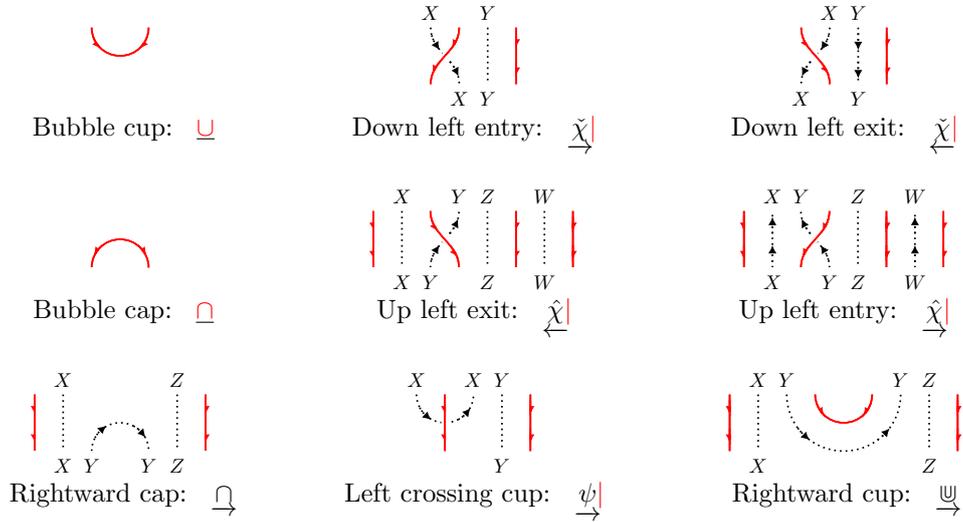

  \centering
  \[
  \begin{tabular}{c@{\qquad\qquad}c@{\qquad\qquad}c}
    \ensuremath{\vcenter{\hbox{\scalebox{0.75}{\input{ClosedDiags/Generators/11a.tex}}}}}
    &
    \ensuremath{\vcenter{\hbox{\scalebox{0.75}{\input{ClosedDiags/Generators/1.tex}}}}}
    &
    \ensuremath{\vcenter{\hbox{\scalebox{0.75}{\input{ClosedDiags/Generators/1a.tex}}}}}
    \\
    \diagLabel{Bubble cup: }$\brCup$
    &
    \diagLabel{Down left entry:} $\dXin$
    &
    \diagLabel{Down left exit:} $\dXout$
    \\[0.5cm]
    \ensuremath{\vcenter{\hbox{\scalebox{0.75}{\input{ClosedDiags/Generators/12a.tex}}}}}
    &
    \ensuremath{\vcenter{\hbox{\scalebox{0.75}{\input{ClosedDiags/Generators/5.tex}}}}}
    &
    \ensuremath{\vcenter{\hbox{\scalebox{0.75}{\input{ClosedDiags/Generators/7.tex}}}}}
    \\
    \diagLabel{Bubble cap: }$\brCap$
    &
    \diagLabel{Up left exit: }$\uXout$
    &
    \diagLabel{Up left entry: }$\uXin$
    \\[0.5cm]
    \ensuremath{\vcenter{\hbox{\scalebox{0.75}{\input{ClosedDiags/Generators/9.tex}}}}}
    &
    \ensuremath{\vcenter{\hbox{\scalebox{0.75}{\input{ClosedDiags/Generators/3.tex}}}}}
    &
    \ensuremath{\vcenter{\hbox{\scalebox{0.75}{\input{ClosedDiags/Generators/13.tex}}}}}
    \\
    \diagLabel{Rightward cap:} $\rightCap$
    &
    \diagLabel{Left crossing cup: }$\Xcup$
    &
    \diagLabel{Rightward cup: }$\rightCup$
  \end{tabular}
  \]
\caption{Generators for string diagrams with brackets.}
\label{fig:term generators}
 \end{figure}
At the beginning of the section we gave examples of our string diagrams, and the manipulations and restrictions that allow us to reason about closed symmetric monoidal categories. Rather than defining these string diagrams in terms of local manipulations, and then discarding invalid diagrams, we choose to restrict to generators that can \emph{only} generate valid diagrams.

These are precisely those given in \autoref{fig:term generators}, as well as their reflections in the vertical axis and the respective contravariant generators.

In \autoref{lemma:decomposition} we give an account of how -- using the equations we impose in \autoref{fig:equations} -- these generators can be decomposed into the usual evaluation and coevaluation maps of a closed symmetric monoidal category. For the sake of intuition, here we give an account of what each of these diagrams represents in $\Set$.
\begin{longtable}{m{0.25\textwidth} p{0.7\textwidth}}
    \midrule
    \\
    \ensuremath{\vcenter{\hbox{\scalebox{0.75}{\includegraphics{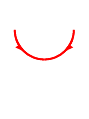}
}}}}\quad\ensuremath{\vcenter{\hbox{\scalebox{0.75}{\includegraphics{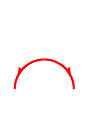}
}}}}
    &
    The bubble cup and cap are the isomorphisms $\hom{1,1}\xrightarrow{\sim} 1$ and $1\xrightarrow{\sim} \hom{1,1}$. They simply encode the unique isomorphism
    \[\{*\}\cong\{f\colon *\to *\}.\]
    \\[-6pt] \midrule \\
    \ensuremath{\vcenter{\hbox{\scalebox{0.75}{\includegraphics{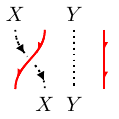}
}}}}
    &
     The down left entry represents a map $A\times \hom{B,C}\rightarrow \hom{B,A\times C}$. This takes a pair $(a,f\colon B\to C)$ and maps it to a function \[f_a(b)=(a,f(b)).\]
    \\[-6pt] \midrule \\
    \ensuremath{\vcenter{\hbox{\scalebox{0.75}{\includegraphics{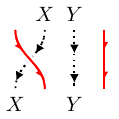}
}}}}
    &
    The down left exit represents a map $\hom{1,A\times B}\to A\times \hom{1,B}$, where each $f$ with $f(*)=(x,y)$ gets mapped to $(x,g)$ with \[g(*)=y.\]
    \\[-6pt] \midrule \\
    \ensuremath{\vcenter{\hbox{\scalebox{0.75}{\includegraphics{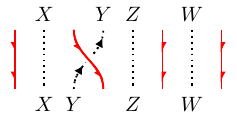}
}}}}
    &
    The up left entry allows us to represent \emph{internal} currying. This diagram represents a map
    \[
        \hom{A\times B, \hom{C,D}} \to \hom{A,\hom{B\times C,D}}.
    \]
    This takes a function $f(a,b)=g$ and maps it to the function \[h(a)(b,c)=f(a,b)(c).\]
    \\[-6pt] \midrule \\
    \ensuremath{\vcenter{\hbox{\scalebox{0.75}{\includegraphics{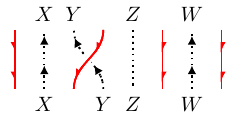}
}}}}
    &
    The up left exit allows us to represent a generalised form of internal uncurrying. This diagram represents a map
    \[\hom{A, \hom{B,C}\times D}\to \hom{A\tensor B, \hom{1,C}\times D}.\]
    This takes a function $f(a)=(f_1(a),d)$ and maps it to the function 
    \[h(a,b)=(*\mapsto f_1(a)(b), d).\]
    \\[-6pt] \midrule \\
    \ensuremath{\vcenter{\hbox{\scalebox{0.75}{\includegraphics{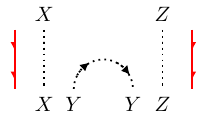}
}}}}
    &
    The rightward cap allows the introduction of internal identities. This diagram represents a map
    \[
        \hom{A,C}\to \hom{A\times B, B\times C}.
    \]
    This takes a function $f(a)$ and maps it to the function \[f(a,b)=(b,f(a)).\]
    \\[-6pt] \midrule \\
    \ensuremath{\vcenter{\hbox{\scalebox{0.75}{\includegraphics{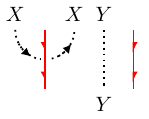}
}}}}
    &
    The left crossing cup allows us to partially evaluate functions internally. This diagram represents a map
    \[
        A\times \hom{A\times B, C}\to \hom{B,C}.
    \]
    This takes a pair $(a,f\colon A\times B\to C)$ and maps it to the function \[f(a,-)\colon B\to C.\]
    \\[-6pt] \midrule \\
    \ensuremath{\vcenter{\hbox{\scalebox{0.75}{\includegraphics{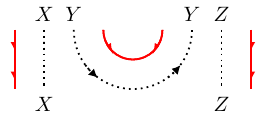}
}}}}
    &
    The rightward cup allows us to partially compose functions. This diagram represents a map
    \[
    \hom{A,B\times C}\times \hom{C\times D, E}\to \hom{A\times D, B\times E}
    \]
    This map takes a pair of functions $((f\colon A\to B\times C),(g\colon C\times D,E))$ and maps them to the function
    \[
    h(a,d)=(f_1(a), g(f_2(a), d)).
    \]
    \\[-6pt]
    \midrule
\end{longtable}

The full diagrammatic syntax is given by the three-layer grammar in Figure~\ref{fig:full diagrammatic syntax}. Note that, as for strings, the contravariant layer is a reflection of the covariant one. 
Moreover, note that the mixed terms in Figure \ref{fig:full diagrammatic syntax} also include crossings between wires of different directions. These crossings are particularly useful for defining a normalisation procedure that sorts the components of undirected strings.

\begin{figure}[H]
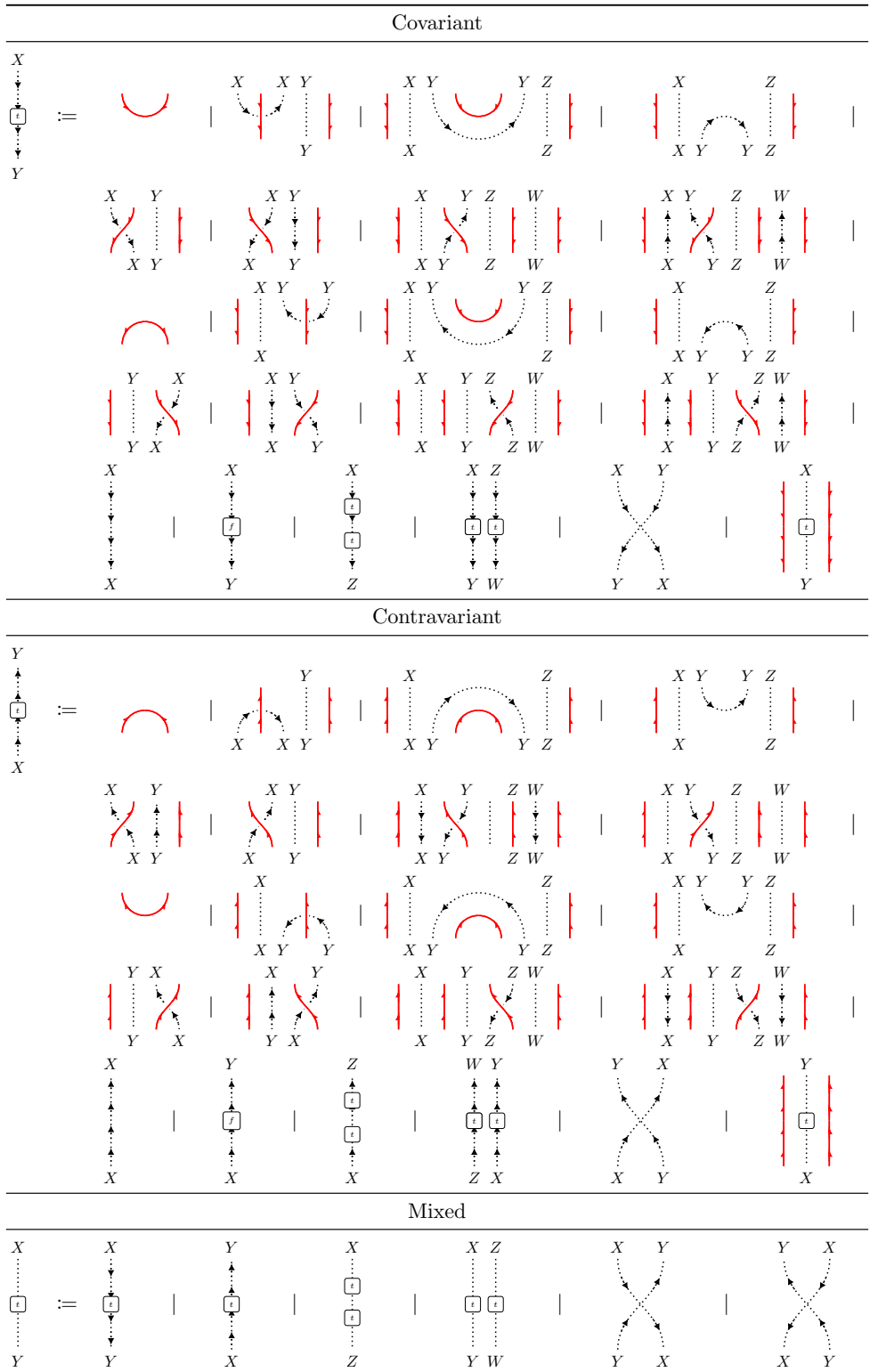

  \begin{center}
    \scalebox{0.9}{
    $\begin{array}{@{}ccc@{\;\;\mid\;\;}c@{\;\;\mid\;\;}c@{\;\;\mid\;\;}c@{\;\;\mid\;\;}c}
        \toprule
        \multicolumn{6}{c}{\text{Covariant}}
        \\
        \midrule
        \ensuremath{\vcenter{\hbox{\scalebox{0.75}{\input{ClosedDiags/CellGen/t.tex}}}}} 
        &
        \coloneqq
        &
        \ensuremath{\vcenter{\hbox{\scalebox{0.75}{\input{ClosedDiags/Generators/11a.tex}}}}}
        &
        \ensuremath{\vcenter{\hbox{\scalebox{0.75}{\input{ClosedDiags/Generators/3.tex}}}}}
        &
        \ensuremath{\vcenter{\hbox{\scalebox{0.75}{\input{ClosedDiags/Generators/13.tex}}}}}
        &
        \ensuremath{\vcenter{\hbox{\scalebox{0.75}{\input{ClosedDiags/Generators/9.tex}}}}}
        \\
        & & 
        \ensuremath{\vcenter{\hbox{\scalebox{0.75}{\input{ClosedDiags/Generators/1.tex}}}}}
        &
        \ensuremath{\vcenter{\hbox{\scalebox{0.75}{\input{ClosedDiags/Generators/1a.tex}}}}}
        &
        \ensuremath{\vcenter{\hbox{\scalebox{0.75}{\input{ClosedDiags/Generators/5.tex}}}}}
        &
        \ensuremath{\vcenter{\hbox{\scalebox{0.75}{\input{ClosedDiags/Generators/7.tex}}}}}
        \\
        & &
        \ensuremath{\vcenter{\hbox{\scalebox{0.75}{\input{ClosedDiags/Generators/12a.tex}}}}}
        &
        \ensuremath{\vcenter{\hbox{\scalebox{0.75}{\input{ClosedDiags/YReflectedGen/3.tex}}}}}
        &
        \ensuremath{\vcenter{\hbox{\scalebox{0.75}{\input{ClosedDiags/YReflectedGen/13.tex}}}}}
        &
        \ensuremath{\vcenter{\hbox{\scalebox{0.75}{\input{ClosedDiags/YReflectedGen/9.tex}}}}}
        \\
        & & 
        \ensuremath{\vcenter{\hbox{\scalebox{0.75}{\input{ClosedDiags/YReflectedGen/1.tex}}}}}
        &
        \ensuremath{\vcenter{\hbox{\scalebox{0.75}{\input{ClosedDiags/YReflectedGen/1a.tex}}}}}
        &
        \ensuremath{\vcenter{\hbox{\scalebox{0.75}{\input{ClosedDiags/YReflectedGen/5.tex}}}}}
        &
        \ensuremath{\vcenter{\hbox{\scalebox{0.75}{\input{ClosedDiags/YReflectedGen/7.tex}}}}}
        \\
        & &
        \multicolumn{4}{@{}l}{
            \begin{array}{c@{\qquad\mid\qquad}c@{\qquad\mid\qquad}c@{\qquad\mid\qquad}c@{\qquad\mid\qquad}c@{\qquad\mid\qquad}c}
                    \ensuremath{\vcenter{\hbox{\scalebox{0.75}{\input{ClosedDiags/CellGen/arrow.tex}}}}}
                    &
                    \ensuremath{\vcenter{\hbox{\scalebox{0.75}{\input{ClosedDiags/CellGen/f.tex}}}}}
                    &
                    \ensuremath{\vcenter{\hbox{\scalebox{0.75}{\input{ClosedDiags/CellGen/sequence.tex}}}}}
                    &
                    \ensuremath{\vcenter{\hbox{\scalebox{0.75}{\input{ClosedDiags/CellGen/parallel.tex}}}}}
                    &
                    \ensuremath{\vcenter{\hbox{\scalebox{0.75}{\input{ClosedDiags/CellGen/cross.tex}}}}}
                    &
                    \ensuremath{\vcenter{\hbox{\scalebox{0.75}{\input{ClosedDiags/CellGen/tBracket.tex}}}}}       
            \end{array}
        }
        \\
        \midrule
        \multicolumn{6}{c}{\text{Contravariant}}
        \\
        \midrule
        \ensuremath{\vcenter{\hbox{\scalebox{0.75}{\input{ClosedDiags/CellGen/tUp.tex}}}}} 
        &
        \coloneqq
        &
        \ensuremath{\vcenter{\hbox{\scalebox{0.75}{\input{ClosedDiags/XReflectedGen/12a.tex}}}}}
        &
        \ensuremath{\vcenter{\hbox{\scalebox{0.75}{\input{ClosedDiags/XReflectedGen/3.tex}}}}}
        &
        \ensuremath{\vcenter{\hbox{\scalebox{0.75}{\input{ClosedDiags/XReflectedGen/13.tex}}}}}
        &
        \ensuremath{\vcenter{\hbox{\scalebox{0.75}{\input{ClosedDiags/XReflectedGen/9.tex}}}}}
        \\
        & & 
        \ensuremath{\vcenter{\hbox{\scalebox{0.75}{\input{ClosedDiags/XReflectedGen/1.tex}}}}}
        &
        \ensuremath{\vcenter{\hbox{\scalebox{0.75}{\input{ClosedDiags/XReflectedGen/1a.tex}}}}}
        &
        \ensuremath{\vcenter{\hbox{\scalebox{0.75}{\input{ClosedDiags/XReflectedGen/5.tex}}}}}
        &
        \ensuremath{\vcenter{\hbox{\scalebox{0.75}{\input{ClosedDiags/XReflectedGen/7.tex}}}}}
        \\
        & &
        \ensuremath{\vcenter{\hbox{\scalebox{0.75}{\input{ClosedDiags/XYReflectedGen/11a.tex}}}}}
        &
        \ensuremath{\vcenter{\hbox{\scalebox{0.75}{\input{ClosedDiags/XYReflectedGen/3.tex}}}}}
        &
        \ensuremath{\vcenter{\hbox{\scalebox{0.75}{\input{ClosedDiags/XYReflectedGen/13.tex}}}}}
        &
        \ensuremath{\vcenter{\hbox{\scalebox{0.75}{\input{ClosedDiags/XYReflectedGen/9.tex}}}}}
        \\
        & & 
        \ensuremath{\vcenter{\hbox{\scalebox{0.75}{\input{ClosedDiags/XYReflectedGen/1.tex}}}}}
        &
        \ensuremath{\vcenter{\hbox{\scalebox{0.75}{\input{ClosedDiags/XYReflectedGen/1a.tex}}}}}
        &
        \ensuremath{\vcenter{\hbox{\scalebox{0.75}{\input{ClosedDiags/XYReflectedGen/5.tex}}}}}
        &
        \ensuremath{\vcenter{\hbox{\scalebox{0.75}{\input{ClosedDiags/XYReflectedGen/7.tex}}}}}
        \\
        & &
        \multicolumn{4}{@{}l}{
            \begin{array}{c@{\qquad\mid\qquad}c@{\qquad\mid\qquad}c@{\qquad\mid\qquad}c@{\qquad\mid\qquad}c@{\qquad\mid\qquad}c}
                    \ensuremath{\vcenter{\hbox{\scalebox{0.75}{\input{ClosedDiags/CellGen/arrowUp.tex}}}}}
                    &
                    \ensuremath{\vcenter{\hbox{\scalebox{0.75}{\input{ClosedDiags/CellGen/fUp.tex}}}}}
                    &
                    \ensuremath{\vcenter{\hbox{\scalebox{0.75}{\input{ClosedDiags/CellGen/sequenceUp.tex}}}}}
                    &
                    \ensuremath{\vcenter{\hbox{\scalebox{0.75}{\input{ClosedDiags/CellGen/parallelUp.tex}}}}}
                    &
                    \ensuremath{\vcenter{\hbox{\scalebox{0.75}{\input{ClosedDiags/CellGen/crossUp.tex}}}}}
                    &
                    \ensuremath{\vcenter{\hbox{\scalebox{0.75}{\input{ClosedDiags/CellGen/tBracketUp.tex}}}}}       
            \end{array}
        }
        \\
        \midrule
        \multicolumn{6}{c}{\text{Mixed}}
        \\
        \midrule
        \ensuremath{\vcenter{\hbox{\scalebox{0.75}{\input{ClosedDiags/CellGen/tUndirected.tex}}}}}
        &
        \coloneqq
        &
        \multicolumn{4}{@{}l}{
            \begin{array}{c@{\qquad\mid\qquad}c@{\qquad\mid\qquad}c@{\qquad\mid\qquad}c@{\qquad\mid\qquad}c@{\qquad\mid\qquad}c}
                \ensuremath{\vcenter{\hbox{\scalebox{0.75}{\input{ClosedDiags/CellGen/tUndirected1.tex}}}}}
                &
                \ensuremath{\vcenter{\hbox{\scalebox{0.75}{\input{ClosedDiags/CellGen/tUndirected2.tex}}}}}
                &
                \ensuremath{\vcenter{\hbox{\scalebox{0.75}{\input{ClosedDiags/CellGen/tUndirectedSequence.tex}}}}}
                &
                \ensuremath{\vcenter{\hbox{\scalebox{0.75}{\input{ClosedDiags/CellGen/tUndirectedParallel.tex}}}}}
                &
                \ensuremath{\vcenter{\hbox{\scalebox{0.75}{\input{ClosedDiags/CellGen/tUndirected3.tex}}}}}
                &
                \ensuremath{\vcenter{\hbox{\scalebox{0.75}{\input{ClosedDiags/CellGen/tUndirected4.tex}}}}}
            \end{array}
        }
        \\
        \bottomrule
    \end{array}$}
  \end{center}
    \vspace{-10pt}\caption{Full syntax of string diagrams with brackets.}
    \label{fig:full diagrammatic syntax}
\end{figure}

For the sake of readability we include syntactic sugar for the bracket cup and cap diagrams. In order to do this we first need a notion of normal form.

\begin{definition}
  \label{def:normalised}
  An undirected string $\ensuremath{\vcenter{\hbox{\scalebox{0.75}{}}}}$ is normalised if it is generated as $\ensuremath{\vcenter{\hbox{\scalebox{0.75}{\includegraphics{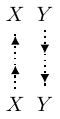}
}}}}$.
\end{definition}

\begin{definition}
\label{def:normaliser}
 The \emph{normaliser} term $\nu$, depicted below, sorts all strings within brackets so that the upstrings are on the left and the downstrings are on the right. 
 \[ \ensuremath{\vcenter{\hbox{\includegraphics{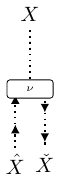}
}}} \]
 Formally, we define the normaliser inductively as follows:
\[\begin{array}{rcl@{\qquad\quad}rcl@{\qquad\quad}rcl}
    \ensuremath{\vcenter{\hbox{\includegraphics{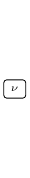}
}}} &\coloneqq& \ensuremath{\vcenter{\hbox{}}}
    &
    \ensuremath{\vcenter{\hbox{\includegraphics{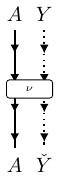}
}}} &\coloneqq& \ensuremath{\vcenter{\hbox{\includegraphics{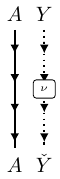}
}}}
    &
    \ensuremath{\vcenter{\hbox{\includegraphics{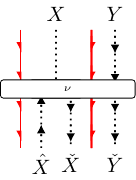}
}}} &\coloneqq& \ensuremath{\vcenter{\hbox{\includegraphics{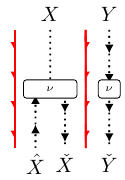}
}}}
    \\
    \\
    & & &
    \ensuremath{\vcenter{\hbox{\includegraphics{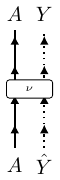}
}}} &\coloneqq& \ensuremath{\vcenter{\hbox{\includegraphics{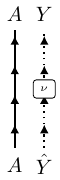}
}}}
    &
    \ensuremath{\vcenter{\hbox{\includegraphics{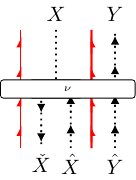}
}}} &\coloneqq& \ensuremath{\vcenter{\hbox{\includegraphics{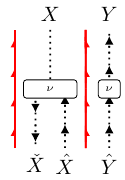}
}}}
    \\
    \\
    & & &
    \ensuremath{\vcenter{\hbox{\includegraphics{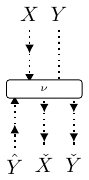}
}}} &\coloneqq& \ensuremath{\vcenter{\hbox{\includegraphics{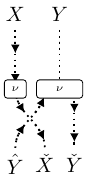}
}}}
    &
    \ensuremath{\vcenter{\hbox{\includegraphics{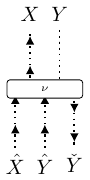}
}}} &\coloneqq& \ensuremath{\vcenter{\hbox{\includegraphics{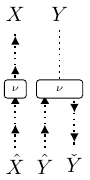}
}}}
  \end{array}\]
\end{definition}
\newpage
\begin{definition}\label{fig:syntSugar}
  For the sake of keeping our language geometrically intuitive, we introduce some syntactic sugar for strings that leave and enter cups and caps, respectively. 
  \[
  \begin{array}{rcl@{\qquad\qquad}rcl}
    \ensuremath{\vcenter{\hbox{\scalebox{0.75}{\includegraphics{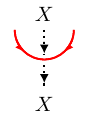}
}}}} &\coloneqq&  \ensuremath{\vcenter{\hbox{\scalebox{0.75}{\includegraphics{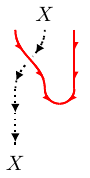}
}}}}
    &
    \ensuremath{\vcenter{\hbox{\scalebox{0.75}{\includegraphics{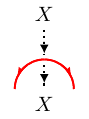}
}}}} &\coloneqq&  \ensuremath{\vcenter{\hbox{\scalebox{0.75}{\includegraphics{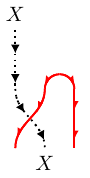}
}}}}
    \\
    \ensuremath{\vcenter{\hbox{\scalebox{0.75}{\includegraphics{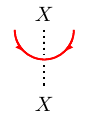}
}}}} &\coloneqq&  \ensuremath{\vcenter{\hbox{\scalebox{0.75}{\includegraphics{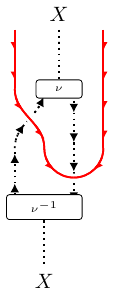}
}}}}
    &
    \ensuremath{\vcenter{\hbox{\scalebox{0.75}{\includegraphics{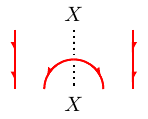}
}}}} &\coloneqq&  \ensuremath{\vcenter{\hbox{\scalebox{0.75}{\includegraphics{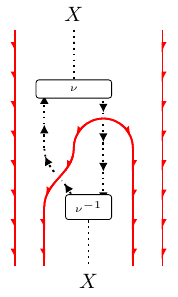}
}}}}
  \end{array}
  \]
  Where here $\nu^{-1}$ is the reflection of $\nu$ in the horizontal axis. By the equations that we present in Section~\ref{ssec:equations}, we can show that $\nu^{-1}$ is the inverse of $\nu$.
\end{definition}

Note that by the additional naturality and symmetry equations derived in \cref{fig:derNaturalities,fig:derCrossingRel} we can show that leaving the bracketing cup on the left is equal to leaving the bracketing cup on the right, and similarly for the bracketing cap. In other words, the syntactic sugar given above is unambiguous. 

\subsection{Equations}\label{ssec:equations}
As highlighted above, reasoning in this diagrammatic language is context dependent: the manipulations that can be performed on a particular string may depend on the contents certain pairs of bracketing strings. However, for the sake of readability here we present the equations `locally'. By this we mean that, if a certain local sequence \emph{can} occur validly then we can replace it locally. Let us consider the following composite, built from the internal tensor-hom adjunction and written in the usual algebraic language.
\[
  \color{lightgray}\hom{A \color{black}\tensor B, [I \color{lightgray},C]}\color{black} \xrightarrow{\sim} \color{lightgray} \hom{A\color{black},\hom{B\color{lightgray},C}\color{lightgray}}\color{black}\xrightarrow{\sim}\color{lightgray} \hom{A\color{black}\tensor B, \hom{I\color{gray},C}\color{lightgray}}
\]
This composite is equal to the identity, and this holds regardless of what $A$, $B$ and $C$ are. We can really just see this as local rewrite of the substrings highlighted in black.

We understand that this composite of rewrites should be the identity, even though the substrings themselves aren't valid terms. The same is true of the manipulations in our string diagrams, which we treat as rewrites via a congruence. For the sake of brevity it may be assumed that for each given equation, the reflections in the horizontal and vertical axes also hold.

\begin{definition}\label{def:C Sigma}
  Given a (unbiased) closed monoidal signature $(\sort,\sign)$, we denote as $\freeCat$ the category whose objects are elements of $\sort^\sharp$ and morphisms are \emph{covariant terms} quotiented by the minimal congruence $\cong$ generated by the equations of SMCs and those in \cref{fig:equations}.
\end{definition}

\begin{remark}
    For a detailed account of how $\freeCat$ and, in particular, $\cong$ are constructed, see Appendix~\ref{app:formal c sigma}.
\end{remark}

\begin{figure}
  \subfig{1}{
  \subfig{0.5}{\ensuremath{\vcenter{\hbox{\scalebox{0.75}{\input{ClosedDiags/Relations/SymCupEntry1.tex}}}}} = \ensuremath{\vcenter{\hbox{\scalebox{0.75}{\input{ClosedDiags/Relations/SymCupEntry2.tex}}}}}}%
        \subfig{0.5}{
        \ensuremath{\vcenter{\hbox{\scalebox{0.75}{\input{ClosedDiags/Relations/SymCap1.tex}}}}}  = \ensuremath{\vcenter{\hbox{\scalebox{0.75}{\input{ClosedDiags/Relations/SymCap2.tex}}}}}}\\[10pt]
        \subfig{0.25}{
            \ensuremath{\vcenter{\hbox{\scalebox{0.75}{\input{ClosedDiags/Relations/EntryExit1.tex}}}}} =  \ensuremath{\vcenter{\hbox{\scalebox{0.75}{\input{ClosedDiags/Relations/EntryExit2.tex}}}}}}
        \subfig{0.25}{
            \ensuremath{\vcenter{\hbox{\scalebox{0.75}{\input{ClosedDiags/Relations/EntryExit1A.tex}}}}} =  \ensuremath{\vcenter{\hbox{\scalebox{0.75}{\input{ClosedDiags/Relations/EntryExit2A.tex}}}}}}
        \subfig{0.25}{
        \ensuremath{\vcenter{\hbox{\scalebox{0.75}{\input{ClosedDiags/Relations/EntryExit3.tex}}}}} =  \ensuremath{\vcenter{\hbox{\scalebox{0.75}{\input{ClosedDiags/Relations/EntryExit4.tex}}}}}}
        \subfig{0.25}{
        \ensuremath{\vcenter{\hbox{\scalebox{0.75}{\input{ClosedDiags/Relations/EntryExit3A.tex}}}}} =  \ensuremath{\vcenter{\hbox{\scalebox{0.75}{\input{ClosedDiags/Relations/EntryExit4A.tex}}}}}}
        \caption{Crossing equations.}
        \label{fig:crossings}
  }\figSpace
  \subfig{1}{
        \subfig{0.5}{
            \ensuremath{\vcenter{\hbox{\scalebox{0.75}{\input{ClosedDiags/Relations/Natural/3b}}}}} =  \ensuremath{\vcenter{\hbox{\scalebox{0.75}{\input{ClosedDiags/Relations/Natural/3c}}}}}
        }%
        \subfig{0.5}{
                \centering
            \ensuremath{\vcenter{\hbox{\scalebox{0.75}{\input{ClosedDiags/Relations/Natural/6b}}}}} =  \ensuremath{\vcenter{\hbox{\scalebox{0.75}{\input{ClosedDiags/Relations/Natural/6c}}}}}
        }\\[10pt]
        \subfig{1}{
            \ensuremath{\vcenter{\hbox{\scalebox{0.75}{\input{ClosedDiags/Relations/cupNat1a}}}}} =   \ensuremath{\vcenter{\hbox{\scalebox{0.75}{\input{ClosedDiags/Relations/cupNat2a}}}}}}
            \caption{Naturality equations.}
        \label{fig:naturality}
  }\figSpace
  \subfig{1}{
        \begin{subfigure}{0.3\textwidth}
            \centering
        \ensuremath{\vcenter{\hbox{\scalebox{0.75}{\input{ClosedDiags/Relations/yank1.tex}}}}} \!\!=\!\!
        \ensuremath{\vcenter{\hbox{\scalebox{0.75}{\input{ClosedDiags/Relations/yank2.tex}}}}} 
        \end{subfigure}%
            \begin{subfigure}{0.3\textwidth}
                \centering
        \ensuremath{\vcenter{\hbox{\scalebox{0.75}{\input{ClosedDiags/Relations/yank3.tex}}}}} \!\!=\!\!
        \ensuremath{\vcenter{\hbox{\scalebox{0.75}{\input{ClosedDiags/Relations/yank4.tex}}}}} 
        \end{subfigure}%
        \begin{subfigure}{0.4\textwidth}
                \centering
        \ensuremath{\vcenter{\hbox{\scalebox{0.75}{\input{ClosedDiags/Relations/yank5.tex}}}}} \!\!=\!\!  \ensuremath{\vcenter{\hbox{\scalebox{0.75}{\input{ClosedDiags/Relations/yank6.tex}}}}}
        \end{subfigure}
            \caption{Yanking equations.}
        \label{fig:yanking}
  }\figSpace
  \subfig{1}{
        \begin{subfigure}{0.3\textwidth}
            \centering
        \ensuremath{\vcenter{\hbox{\scalebox{0.75}{\input{ClosedDiags/Relations/pop1a.tex}}}}} \!\!= \ensuremath{\vcenter{\hbox{\scalebox{0.75}{\input{ClosedDiags/StringGen/blank.tex}}}}}%
        \end{subfigure}%
            \begin{subfigure}{0.3\textwidth}
                \centering
        \ensuremath{\vcenter{\hbox{\scalebox{0.75}{\input{ClosedDiags/Rel/merge1C.tex}}}}}  \!\!=\!    \ensuremath{\vcenter{\hbox{\scalebox{0.75}{\input{ClosedDiags/Rel/merge2C.tex}}}}}
        \end{subfigure} %
        \begin{subfigure}{0.3\textwidth}
                \centering
        \ensuremath{\vcenter{\hbox{\scalebox{0.75}{\input{ClosedDiags/Rel/bubbleYank1.tex}}}}} \!\!=\!\!   \ensuremath{\vcenter{\hbox{\scalebox{0.75}{\input{ClosedDiags/Rel/bubbleYank2.tex}}}}}
        \end{subfigure}%
        \caption{Bracket pop, merge and yanking equations.}
        \label{fig:bubble}
  }
    \caption{Equations for string diagrams with brackets.}
    \label{fig:equations}
\end{figure}

From the congruence $\cong$ we can derive a number of other equations. We do so for two reasons: firstly, to show that our geometric intuition is correct, and that almost all manipulations that we expect to be possible, are in fact possible; secondly, to provide the basis for the proofs in \autoref{sec:soundness and completeness} where we show that our language is, in fact, the free closed symmetric monoidal category on a given closed monoidal signature.

\begin{figure}[H]
  \centering
  \subfig{1}{
  \subfig{0.5}{ \ensuremath{\vcenter{\hbox{\scalebox{0.75}{\input{ClosedDiags/Relations/Natural/1}}}}} $\cong$ \ensuremath{\vcenter{\hbox{\scalebox{0.75}{\input{ClosedDiags/Relations/Natural/1a}}}}}}
  \subfig{0.5}{ \ensuremath{\vcenter{\hbox{\scalebox{0.75}{\input{ClosedDiags/Relations/Natural/2}}}}} $\cong$ \ensuremath{\vcenter{\hbox{\scalebox{0.75}{\input{ClosedDiags/Relations/Natural/2a}}}}}
  }\figSpace
  \subfig{0.5}{ \ensuremath{\vcenter{\hbox{\scalebox{0.75}{\input{ClosedDiags/Relations/Natural/4}}}}} $\cong$ \ensuremath{\vcenter{\hbox{\scalebox{0.75}{\input{ClosedDiags/Relations/Natural/4a}}}}}}
  \subfig{0.5}{ \ensuremath{\vcenter{\hbox{\scalebox{0.75}{\input{ClosedDiags/Relations/Natural/5}}}}} $\cong$ \ensuremath{\vcenter{\hbox{\scalebox{0.75}{\input{ClosedDiags/Relations/Natural/5a}}}}}}
\caption{Derived naturality equations.}
\label{fig:derNaturalities}}
\figSpace
\subfig{1}{
  \subfig{0.5}{\ensuremath{\vcenter{\hbox{\scalebox{0.75}{\input{ClosedDiags/CapBubbleCross/1.tex}}}}} $\cong$     \ensuremath{\vcenter{\hbox{\scalebox{0.75}{\input{ClosedDiags/CapBubbleCross/6.tex}}}}}}
  \subfig{0.5}{\ensuremath{\vcenter{\hbox{\scalebox{0.75}{\input{ClosedDiags/CapSlide/5.tex}}}}}=\ensuremath{\vcenter{\hbox{\scalebox{0.75}{\input{ClosedDiags/CapSlide/1.tex}}}}}}
\caption{Derived cap equations.}
\label{fig:derCapRel}}
\figSpace
\subfig{1}{
      \subfig{0.5}{\ensuremath{\vcenter{\hbox{\scalebox{0.75}{\input{ClosedDiags/DerivedRel/SymDownEntry1.tex}}}}}  $\cong$  \ensuremath{\vcenter{\hbox{\scalebox{0.75}{\input{ClosedDiags/DerivedRel/SymDownEntry6.tex}}}}}}
      \subfig{0.5}{\ensuremath{\vcenter{\hbox{\scalebox{0.75}{\input{ClosedDiags/DerivedRel/SymDownExit1.tex}}}}} $\cong$ \ensuremath{\vcenter{\hbox{\scalebox{0.75}{\input{ClosedDiags/DerivedRel/SymDownExit4.tex}}}}}}\figSpace
\caption{Derived crossing equations.}
\label{fig:derCrossingRel}}
\caption{Derived equations for string diagrams with brackets.}
\label{fig:derEquations}
\end{figure}

\begin{figure}[p]
  \centering
  \subfig{1}{
  \subfig{1}{ 
  \ensuremath{\vcenter{\hbox{\scalebox{0.75}{\input{ClosedDiags/Relations/cupNat1}}}}} $\cong$   \ensuremath{\vcenter{\hbox{\scalebox{0.75}{\input{ClosedDiags/Relations/cupNat2}}}}}}\\
  for all
\[\ensuremath{\vcenter{\hbox{\scalebox{0.75}{\input{ClosedDiags/Relations/cupNat3}}}}} \in \left\{ \!\!\ensuremath{\vcenter{\hbox{\scalebox{0.75}{\input{ClosedDiags/Relations/cupNat9}}}}},\!\! \ensuremath{\vcenter{\hbox{\scalebox{0.75}{\input{ClosedDiags/Relations/cupNat5}}}}},\!\! \ensuremath{\vcenter{\hbox{\scalebox{0.75}{\input{ClosedDiags/Relations/cupNat6}}}}},\!\! \ensuremath{\vcenter{\hbox{\scalebox{0.75}{\input{ClosedDiags/Relations/cupNat7}}}}},\!\! \ensuremath{\vcenter{\hbox{\scalebox{0.75}{\input{ClosedDiags/Relations/cupNat8}}}}} \!\!\right\},
\ensuremath{\vcenter{\hbox{\scalebox{0.75}{\input{ClosedDiags/Relations/cupNat3a}}}}} \in \left\{ \!\!\ensuremath{\vcenter{\hbox{\scalebox{0.75}{\input{ClosedDiags/Relations/cupNat9a}}}}},\!\! \ensuremath{\vcenter{\hbox{\scalebox{0.75}{\input{ClosedDiags/Relations/cupNat5a}}}}},\!\! \ensuremath{\vcenter{\hbox{\scalebox{0.75}{\input{ClosedDiags/Relations/cupNat6a}}}}},\!\! \ensuremath{\vcenter{\hbox{\scalebox{0.75}{\input{ClosedDiags/Relations/cupNat7a}}}}},\!\! \ensuremath{\vcenter{\hbox{\scalebox{0.75}{\input{ClosedDiags/Relations/cupNat8a}}}}} \!\!\right\}\]
\caption{An interchange law for cups and crossings}
\label{fig:derCupInterchange}}\figSpace
\subfig{1}{
\subfig{1}{{\ensuremath{\vcenter{\hbox{\scalebox{0.75}{\input{ClosedDiags/Smooth/1.tex}}}}} $\cong$ \ensuremath{\vcenter{\hbox{\scalebox{0.75}{\input{ClosedDiags/Smooth/6.tex}}}}}}}
\caption{Derived smoothing equation.}
\label{fig:derSmoothing}}\figSpace
\subfig{1}{\ensuremath{\vcenter{\hbox{\scalebox{0.75}{\input{ClosedDiags/Relations/Natural/7}}}}}  $\cong$ \ensuremath{\vcenter{\hbox{\scalebox{0.75}{\input{ClosedDiags/Relations/Natural/7a.tex}}}}}
\caption{Naturality laws for the rightward cup}
\label{fig:derCupNatural}}\figSpace
\subfig{1}{\ensuremath{\vcenter{\hbox{\scalebox{0.75}{\input{ClosedDiags/Relations/SymCup1.tex}}}}} $\cong$   \ensuremath{\vcenter{\hbox{\scalebox{0.75}{\input{ClosedDiags/Relations/SymCup2.tex}}}}}
\caption{A symmetry law for the rightward cup}
\label{fig:derSymCup}}
\caption{Derived equations for the rightward cup.}
\label{fig:derRightwardCup}
\end{figure}

\newpage
\begin{theorem}
  \label{thm:derivedRelations}
  The equations in \cref{fig:derEquations} hold.
\end{theorem}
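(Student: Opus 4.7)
The plan is to derive each equation in Figure~\ref{fig:derEquations} by chaining together primitives from Figure~\ref{fig:equations}, namely the crossing, naturality, yanking, and bracket pop/merge/yank equations, together with the ambient SMC axioms implicit in the congruence $\cong$ of Definition~\ref{def:C Sigma}. The common strategy throughout is that every derived equation is a reflected or rotated variant of a primitive one: one bends, slides, or transports the relevant wires until a primitive equation becomes applicable, invokes that primitive, then undoes the transport.

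For the derived naturality equations of Figure~\ref{fig:derNaturalities}, I would take the primitive naturalities of Figure~\ref{fig:naturality} and transport them across bracketing strings to the ``opposite'' side of a generator by applying the entry/exit crossing equations of Figure~\ref{fig:crossings}; concretely, a box is crossed over the relevant string, the primitive naturality is applied on the other side, and then the crossing is undone. For the derived cap equations of Figure~\ref{fig:derCapRel}, the first identity should follow by combining the bubble-yanking equation (bottom-right of Figure~\ref{fig:bubble}) with the cap symmetry of Figure~\ref{fig:crossings}, while the cap-slide should follow from the bracket-pop equation together with yanking, effectively sliding a cap past a trivial bracketing pair. For the derived crossing equations of Figure~\ref{fig:derCrossingRel}, which extend the primitive cup-entry and cap symmetries to down entries and down exits, the plan is to bend a down entry into a cup entry via the yanking equations of Figure~\ref{fig:yanking}, apply the primitive symmetry, and bend back.

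The main obstacle I anticipate is not mathematical depth but bookkeeping: the context-dependency described by the five restrictions at the start of Section~\ref{sec:presentation} forces every intermediate diagram in a rewrite sequence to be well-formed, so I must verify that upstrings remain enclosed in balanced bracketing pairs and that entries and exits remain admissible after each local move. In practice, I expect to present each derivation as a short labelled sequence of pictures, each justified by exactly one primitive equation, with most of the effort spent checking that the intermediate terms conform to the grammar of Figure~\ref{fig:full diagrammatic syntax}.
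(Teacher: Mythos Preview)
Your overall ``transport, apply a primitive, undo the transport'' intuition is right, but the specific mechanisms you propose do not quite line up with what is actually needed, and you miss an important dependency chain among the three families of derived equations.

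For the derived naturalities in Figure~\ref{fig:derNaturalities}, the entry/exit crossing equations of Figure~\ref{fig:crossings} are not the right tool: those only say that entry and exit are mutually inverse, which by itself cannot produce a naturality statement. The point is rather that the \emph{yanking} equations of Figure~\ref{fig:yanking} express the down entry and up exit as composites of a rightward cap followed by a crossing cup. Once an entry is rewritten in that form, the \emph{primitive} naturalities of Figure~\ref{fig:naturality} (which are stated for the cap and the crossing cup) let the box slide past each piece, and then one yanks back. So the transport here is by yanking, not by crossing.

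More seriously, you treat the three families as independent, whereas the paper's argument is layered. The first equation of Figure~\ref{fig:derCapRel} is proved from bubble pop, crossing, and yanking; but the second (cap-slide) equation uses the first cap equation together with the already-derived naturalities of Figure~\ref{fig:derNaturalities}. Then the derived crossing equations of Figure~\ref{fig:derCrossingRel} are obtained by yanking, applying the just-derived cap equation of Figure~\ref{fig:derCapRel}, then the primitive symmetry, and yanking back---so Figure~\ref{fig:derCapRel} is an essential intermediate lemma for Figure~\ref{fig:derCrossingRel}, not an independent result. Your plan for \ref{fig:derCrossingRel} (``bend into a cup entry, apply primitive symmetry, bend back'') omits exactly this step and will not close without it. Reordering your argument to respect the chain \ref{fig:derNaturalities} $\to$ \ref{fig:derCapRel} $\to$ \ref{fig:derCrossingRel}, and replacing the crossing-based transport by yanking-based transport, should make the proof go through.
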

\begin{proof}
    The first equation in \cref{fig:derNaturalities} follows from the sequence of equal diagrams below. The lefthand side equals diagram $(a)$ by yanking, $(a)\cong(b)\cong(c)$ by naturality equations, and $(c)$ equals the right hand side by yanking again. 
  \begin{figure}[H]
    \begin{subfigure}{0.33\textwidth}
      \centering
      \ensuremath{\vcenter{\hbox{\scalebox{0.75}{\input{ClosedDiags/DerivedRel/Natural/2.tex}}}}}\\
      \textbf{(a)}
    \end{subfigure}%
    \begin{subfigure}{0.33\textwidth}
      \centering
      \ensuremath{\vcenter{\hbox{\scalebox{0.75}{\input{ClosedDiags/DerivedRel/Natural/3.tex}}}}}\\
      \textbf{(b)}
    \end{subfigure}%
        \begin{subfigure}{0.33\textwidth}
      \centering
      \ensuremath{\vcenter{\hbox{\scalebox{0.75}{\input{ClosedDiags/DerivedRel/Natural/4.tex}}}}}\\
      \textbf{(c)}
    \end{subfigure}%
    \addtocounter{figure}{-1}

  \end{figure}

The second claim follows by the equal diagrams below.
  \begin{figure}[H]
    \begin{subfigure}{0.5\textwidth}
      \centering
      \ensuremath{\vcenter{\hbox{\scalebox{0.75}{\input{ClosedDiags/DerivedRel/Natural/5.tex}}}}}\\
      \textbf{(a)}
    \end{subfigure}%
    \begin{subfigure}{0.5\textwidth}
      \centering
      \ensuremath{\vcenter{\hbox{\scalebox{0.75}{\input{ClosedDiags/DerivedRel/Natural/6.tex}}}}}\\
      \textbf{(b)}
    \end{subfigure}%
    \addtocounter{figure}{-1}

  \end{figure}

    The first equation in \autoref{fig:derCapRel} holds by the sequence of equalities below. We have that $(a)\cong(b)$ and $(b)\cong(c)$ by the bubble pop and crossing equations, respectively. Then $(c)\cong(d)$ and $(d)\cong(e)$ by the yanking equations.
     \begin{figure}[H]
    \begin{subfigure}{.15\textwidth}
  \centering
    \ensuremath{\vcenter{\hbox{\scalebox{0.75}{\input{ClosedDiags/CapBubbleCross/1.tex}}}}}\\
    \textbf{(a)}
\end{subfigure}%
    \begin{subfigure}{.2\textwidth}
  \centering
    \ensuremath{\vcenter{\hbox{\scalebox{0.75}{\input{ClosedDiags/CapBubbleCross/2.tex}}}}}\\
    \textbf{(b)}
\end{subfigure}%
    \begin{subfigure}{.2\textwidth}
        \centering
    \ensuremath{\vcenter{\hbox{\scalebox{0.75}{\input{ClosedDiags/CapBubbleCross/3.tex}}}}}\\
    \textbf{(c)}
\end{subfigure}
    \begin{subfigure}{.25\textwidth}
        \centering
    \ensuremath{\vcenter{\hbox{\scalebox{0.75}{\input{ClosedDiags/CapBubbleCross/4.tex}}}}}\\
    \textbf{(d)}
\end{subfigure}%
    \begin{subfigure}{.2\textwidth}
        \centering
    \ensuremath{\vcenter{\hbox{\scalebox{0.75}{\input{ClosedDiags/CapBubbleCross/6.tex}}}}}\\
    \textbf{(e)}
\end{subfigure}%
\addtocounter{figure}{-1}

\end{figure}
    The second equation in \autoref{fig:derCapRel} follows from the diagrams below. We know that $(b)\cong(c)$ and $(d)\cong(e)$ from the previous equation. All other equalities are crossing equations or naturalities from \autoref{fig:derNaturalities}.
      \begin{figure}[H]
    \begin{subfigure}{.2\textwidth}
  \centering
    \ensuremath{\vcenter{\hbox{\scalebox{0.75}{\input{ClosedDiags/CapSlide/1.tex}}}}}\\
    \textbf{(a)}
\end{subfigure}%
\begin{subfigure}{.2\textwidth}
  \centering
\ensuremath{\vcenter{\hbox{\scalebox{0.75}{\input{ClosedDiags/CapSlide/2.tex}}}}}\\
\textbf{(b)}
\end{subfigure}%
\begin{subfigure}{.2\textwidth}
  \centering
\ensuremath{\vcenter{\hbox{\scalebox{0.75}{\input{ClosedDiags/CapSlide/3.tex}}}}}\\
\textbf{(c)}
\end{subfigure}%
\begin{subfigure}{.2\textwidth}
  \centering
\ensuremath{\vcenter{\hbox{\scalebox{0.75}{\input{ClosedDiags/CapSlide/4.tex}}}}}\\
\textbf{(d)}
\end{subfigure}%
\begin{subfigure}{.2\textwidth}
  \centering
\ \ensuremath{\vcenter{\hbox{\scalebox{0.75}{\input{ClosedDiags/CapSlide/5.tex}}}}}\\
\textbf{(e)}
\end{subfigure}%
\addtocounter{figure}{-1}

  \end{figure}

    To see that the first claim of \autoref{fig:derCrossingRel} holds, consider the first sequence of equal diagrams below. We have that $(a)\cong(b)$ by the yanking equations, $(b)\cong(c)$ by \autoref{fig:derCapRel}, $(c)\cong(d)$ by the symmetry equations, $(d)\cong(e)$ by the symmetry equations, and $(e)\cong(f)$ by the yanking equations.
    
    \begin{figure}[H]
  \subfig{1}{
    \begin{subfigure}{.33\textwidth}
  \centering
    \ensuremath{\vcenter{\hbox{\scalebox{0.75}{\input{ClosedDiags/DerivedRel/SymDownEntry1.tex}}}}}\\
    \textbf{(a)}
\end{subfigure}%
\begin{subfigure}{.33\textwidth}
  \centering
\ensuremath{\vcenter{\hbox{\scalebox{0.75}{\input{ClosedDiags/DerivedRel/SymDownEntry2.tex}}}}}\\
\textbf{(b)}
\end{subfigure}%
\begin{subfigure}{.33\textwidth}
  \centering
\ensuremath{\vcenter{\hbox{\scalebox{0.75}{\input{ClosedDiags/DerivedRel/SymDownEntry3.tex}}}}}\\
\textbf{(c)}
\end{subfigure}\vspace{0.3cm}
\begin{subfigure}{.33\textwidth}
  \centering
\ensuremath{\vcenter{\hbox{\scalebox{0.75}{\input{ClosedDiags/DerivedRel/SymDownEntry4.tex}}}}}\\
\textbf{(d)}
\end{subfigure}%
\begin{subfigure}{.33\textwidth}
  \centering
\ensuremath{\vcenter{\hbox{\scalebox{0.75}{\input{ClosedDiags/DerivedRel/SymDownEntry5.tex}}}}}\\
\textbf{(e)}
\end{subfigure}%
\begin{subfigure}{.33\textwidth}
  \centering
\ensuremath{\vcenter{\hbox{\scalebox{0.75}{\input{ClosedDiags/DerivedRel/SymDownEntry6.tex}}}}}\figLab{f}

\addtocounter{figure}{-1}
\end{subfigure}}
\end{figure}

    To see that the second claim holds consider the second diagram below. We have that $(a)\cong(b)$ by the crossing equations, $(b)\cong(c)$ by the first claim, and $(c)\cong(d)$ by the crossing equations.
    \begin{figure}[H]
\subfig{1}{
\subfig{0.25}{
    \ensuremath{\vcenter{\hbox{\scalebox{0.75}{\input{ClosedDiags/DerivedRel/SymDownExit1.tex}}}}}\figLab{a}}
\subfig{0.25}{\ensuremath{\vcenter{\hbox{\scalebox{0.75}{\input{ClosedDiags/DerivedRel/SymDownExit2.tex}}}}}\figLab{b}}
\subfig{0.25}{
\ensuremath{\vcenter{\hbox{\scalebox{0.75}{\input{ClosedDiags/DerivedRel/SymDownExit3.tex}}}}}\figLab{c}}
\subfig{0.25}{
\ensuremath{\vcenter{\hbox{\scalebox{0.75}{\input{ClosedDiags/DerivedRel/SymDownExit4.tex}}}}}\figLab{d}}}
\addtocounter{figure}{-1}

\end{figure}

\end{proof}

\begin{lemma}
\label{lemma:bigCupSplit}
The following derived equation holds.
\begin{figure}[H]
\centering\ensuremath{\vcenter{\hbox{\scalebox{0.75}{\input{ClosedDiags/Generators/13}}}}} $\cong$ \ensuremath{\vcenter{\hbox{\scalebox{0.75}{\input{ClosedDiags/BigCupSplit/1}}}}}
\end{figure}
\end{lemma}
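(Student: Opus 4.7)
The plan is to prove the decomposition by chaining together the primitive equations of \cref{fig:equations} and the derived equations of \cref{fig:derEquations}, rewriting one side of the claimed equality into the other through a sequence of local manipulations. Since the rightward cup is a single generator of the syntax and the right-hand side is evidently a composite built from more elementary generators (presumably including a left crossing cup $\Xcup$, a rightward cap $\rightCap$, and some bracket/crossing gadgetry), I would work from the composite side backwards: my target is to collapse everything into the single rightward-cup generator.

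First, I would identify the role played by each piece of the composite diagram. The rightward cap $\rightCap$ introduces an internal identity wire that threads through a bracketing context, while the left crossing cup $\Xcup$ partially evaluates along that wire; together these two generators morally implement an internal composition, and this is exactly the structure that the rightward cup $\rightCup$ is designed to package atomically. So the decomposition should be read as ``introduce an internal identity, then contract it against the other function''.

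The actual sequence of rewrites I would perform is: (i) apply a yanking equation from \cref{fig:yanking} to bend the internal identity strand into a cap-cup pair suitably placed; (ii) use the naturality equations of \cref{fig:naturality} together with the derived naturalities of \cref{fig:derNaturalities} to slide the crossing cup past the bracketing strings until it is adjacent to the cap; (iii) invoke the crossing equations of \cref{fig:crossings} and their derived forms in \cref{fig:derCrossingRel} to align the up- and downstrings correctly inside the bracket; (iv) finally, use the bracket pop/merge/yank equations of \cref{fig:bubble} to eliminate any spurious bracket pairs left over, producing exactly the rightward-cup generator.

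The main obstacle will be bookkeeping the bracket nesting: because every manipulation is context-dependent, I need to check at each step that the configuration I am rewriting is in a legal context (i.e.\ all the balance, containment, and exit conditions from the restrictions listed on page 4 are satisfied), so that each local equation is genuinely applicable. In particular, the derived cap equations of \cref{fig:derCapRel} are likely to be indispensable for legally sliding the cap across the bracketing strings, and I would expect the step invoking \cref{fig:derSmoothing} (smoothing) to be the trickiest, as it is what ultimately lets the two halves of the decomposition fuse into a single generator rather than remain a genuine composite.
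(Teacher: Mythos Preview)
Your proposal is a strategy sketch rather than a proof, and it has two concrete problems.

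First, the step you flag as ``trickiest'' --- invoking the smoothing equation of \cref{fig:derSmoothing} --- is not available to you here. The smoothing equation belongs to \cref{fig:derRightwardCup}, and all of those equations are established \emph{after} \cref{lemma:bigCupSplit} (indeed, \cref{fig:derCupNatural} and \cref{fig:derSymCup} are proved \emph{using} \cref{lemma:bigCupSplit}). So at this point in the development you may freely use the primitive equations of \cref{fig:equations} and the derived equations of \cref{fig:derEquations}, but nothing from \cref{fig:derRightwardCup}. Your plan, as stated, is circular.

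Second, and more substantively, the paper's argument does not proceed by ``introduce a cap--cup pair and then slide and pop''. The decisive ingredient you are missing is the \emph{third} yanking equation of \cref{fig:yanking}, which already relates a rightward cap composed with a rightward cup to a simpler configuration. The paper starts from the composite side, unfolds the syntactic sugar (Definition~\ref{fig:syntSugar}), then applies that third yanking equation once, followed by a naturality and yanking step, and finishes with a single application of the bubble merge equation from \cref{fig:bubble}. This is a four-step chain, not a general slide-and-collapse procedure, and the third yanking axiom is doing the real work of fusing the cap--cup interaction into the generator. Your outline never singles out this axiom, and the generic ``slide via naturality, align via crossings, eliminate via pop/merge'' recipe does not by itself reach the rightward-cup generator without it.
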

\begin{proof}
  This follows from the diagrams in \autoref{fig:bigCupSplit}. The equality $(a)\cong(b)$ holds by the syntactic sugar in \autoref{fig:syntSugar} and \autoref{fig:derCupNatural}. The equality $(b)\cong(c)$ holds by the third yanking equation. The equality $(c)\cong(d)$ holds by the naturality equation and the yanking equation. Then (d) is equal to the right hand side of the equation above by the bubble merge equation.
  \end{proof}

  \begin{theorem}
  The equations in \cref{fig:derRightwardCup} hold.
\end{theorem}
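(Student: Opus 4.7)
The plan is to leverage \autoref{lemma:bigCupSplit} as the principal reduction tool: since the rightward cup can be rewritten as a composite of bubble cups/caps, bracket crossings, and an instance of bracket merge/pop, each of the four equations in \cref{fig:derRightwardCup} can be reduced to a manipulation of these simpler components, where the base equations of \cref{fig:equations} and the previously derived equations of \cref{fig:derEquations} already apply.

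First I would dispatch the naturality law \cref{fig:derCupNatural}. After expanding both sides via \autoref{lemma:bigCupSplit}, the morphism box in the naturality square can be slid past the constituent bubble caps and the intervening crossings using the naturality equations of \cref{fig:naturality} together with the derived naturality equations of \cref{fig:derNaturalities}; reassembling via \autoref{lemma:bigCupSplit} in reverse yields the claim. The symmetry law \cref{fig:derSymCup} then follows in the same style: decompose both sides, commute the crossing through the bubble cap using \cref{fig:crossings} and the derived crossing equations of \cref{fig:derCrossingRel}, and reassemble. The smoothing equation \cref{fig:derSmoothing} is also in this spirit: after decomposition, the bracket pop and merge equations of \cref{fig:bubble} collapse the middle of the diagram, and a final yanking application from \cref{fig:yanking} delivers the result.

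The main obstacle is the interchange law \cref{fig:derCupInterchange}, which is stated uniformly over five possible shapes of the crossing wire (upstring, downstring, upward or downward bracketing pair, or a bracketed block). For each case, after decomposing both sides via \autoref{lemma:bigCupSplit} one must argue that the crossing can be transported through each of the three stages of the decomposition — over the bracket cap, past the crossing component, and through the bracket merge — which in turn requires citing different subsets of the crossing equations \cref{fig:crossings}, the derived crossings \cref{fig:derCrossingRel}, and the bracket-yanking equation of \cref{fig:bubble}. I would handle the bracketed-block case by induction on the structure of the block, using the layered grammar of \autoref{fig:full diagrammatic syntax} so that it reduces to the four atomic cases.

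In each case, the proof is routine once the decomposition in \autoref{lemma:bigCupSplit} is applied; the bookkeeping for the interchange law is the only genuinely nontrivial part, because one must verify that the restriction conditions on bracketing remain satisfied at every intermediate stage, so that every rewrite step is in fact the restriction of an equation in $\cong$ to a valid subterm.
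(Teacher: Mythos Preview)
Your approach and the paper's diverge most sharply on \cref{fig:derCupInterchange}. You treat this as the hard case and attack it last by decomposing the rightward cup via \autoref{lemma:bigCupSplit}, then pushing the crossing through each stage of the decomposition with a five-way case analysis (plus an induction for the bracketed-block case). The paper instead proves \cref{fig:derCupInterchange} \emph{first} and \emph{directly}, without invoking \autoref{lemma:bigCupSplit} at all: the key observation is the first equation of \cref{fig:derCapRel}, which lets a cap be treated as an ordinary term, so that the naturality axiom for the cup in \cref{fig:naturality} already passes it through; the remaining cases then follow by exactly the yanking-and-crossing pattern used in the proof of \cref{fig:derNaturalities}. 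This buys the paper a much shorter argument with no case split and no induction on bracketed blocks.

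This difference in order also propagates to \cref{fig:derSmoothing}: the paper derives it from the just-established \cref{fig:derCupInterchange} together with naturality and yanking, rather than from a fresh decomposition via \autoref{lemma:bigCupSplit} and the pop/merge equations as you propose. Your plan for \cref{fig:derCupNatural} and \cref{fig:derSymCup}, on the other hand, matches the paper's: both do go through \autoref{lemma:bigCupSplit}. Your overall strategy is plausible, but the paper's route for the interchange law is the sharper one and worth knowing; the cap-as-term trick from \cref{fig:derCapRel} is the idea you are missing.
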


\begin{proof}
    To prove the claim in \autoref{fig:derCupInterchange} note that the first equation in \autoref{fig:derCapRel} allows us to treat caps like terms. So by the naturality axioms we can pass a cap past the cup. Then the equations hold analogously to the proof of \autoref{fig:derNaturalities}: firstly by yanking to prove the case for down entries and up exits, and then the case for down exits and up entries follow from the crossing equations. 
    
    To prove the claim in \autoref{fig:derSmoothing} consider the sequence of diagrams in \autoref{fig:derSmoothingProof}. We have that $(a)$ is syntactic sugar for $(b)$. Then $(b)\cong(c)$ and $(c)\cong(d)$ by the interchange law in \autoref{fig:derCupInterchange} and naturality respectively. Finally, $(d)\cong(e)$ by yanking and $(f)$ is syntactic sugar for $(e)$.

    We prove the claim in \autoref{fig:derCupNatural} by \cref{lemma:bigCupSplit}, the naturality of the crossing cup and the naturalities in \autoref{fig:derNaturalities}. 
    
    Finally, we prove the claim in \autoref{fig:derSymCup} by \autoref{lemma:bigCupSplit}, and the sequence of equalities in \autoref{fig:bigCupSym}.
\end{proof}
      \begin{figure}[H]
  \subfig{1}{
    \begin{subfigure}{.5\textwidth}
  \centering
    \ensuremath{\vcenter{\hbox{\scalebox{0.75}{\input{ClosedDiags/BigCupSplit/1}}}}}\\
    \textbf{(a)}
\end{subfigure}%
\begin{subfigure}{.5\textwidth}
  \centering
 \ensuremath{\vcenter{\hbox{\scalebox{0.75}{\input{ClosedDiags/BigCupSplit/2}}}}}\\
\textbf{(b)}
\end{subfigure}\\
\begin{subfigure}{.5\textwidth}
  \centering
\ \ensuremath{\vcenter{\hbox{\scalebox{0.75}{\input{ClosedDiags/BigCupSplit/3}}}}}\\
\textbf{(c)}
\end{subfigure}%
\begin{subfigure}{.5\textwidth}
  \centering
\ensuremath{\vcenter{\hbox{\scalebox{0.75}{\input{ClosedDiags/BigCupSplit/4.tex}}}}}\\
\textbf{(d)}
\end{subfigure}}%
 \caption{The proof of \autoref{lemma:bigCupSplit}}
 \label{fig:bigCupSplit}
\figSpace
\subfig{1}{
    \begin{subfigure}{0.33\textwidth}
    \centering
    \ensuremath{\vcenter{\hbox{\scalebox{0.75}{\input{ClosedDiags/Smooth/1.tex}}}}}\\
    \textbf{(a)}
    \end{subfigure}%
      \begin{subfigure}{0.33\textwidth}
    \centering
    \ensuremath{\vcenter{\hbox{\scalebox{0.75}{\input{ClosedDiags/Smooth/2.tex}}}}}\\
    \textbf{(b)}
    \end{subfigure}%
      \begin{subfigure}{0.33\textwidth}
    \centering
    \ensuremath{\vcenter{\hbox{\scalebox{0.75}{\input{ClosedDiags/Smooth/3.tex}}}}}\\
    \textbf{(c)}
    \end{subfigure}\\
    \begin{subfigure}{0.33\textwidth}
    \centering
    \ensuremath{\vcenter{\hbox{\scalebox{0.75}{\input{ClosedDiags/Smooth/4.tex}}}}}\\
    \textbf{(d)}
    \end{subfigure}%
    \begin{subfigure}{0.33\textwidth}
    \centering
    \ensuremath{\vcenter{\hbox{\scalebox{0.75}{\input{ClosedDiags/Smooth/5.tex}}}}}\\
    \textbf{(e)}
    \end{subfigure}%
    \begin{subfigure}{0.33\textwidth}
    \centering
    \ensuremath{\vcenter{\hbox{\scalebox{0.75}{\input{ClosedDiags/Smooth/6.tex}}}}}\\
    \textbf{(f)}
    \end{subfigure}
}
\caption{The proof of the smoothing equation.}
 \label{fig:derSmoothingProof}
\figSpace
\end{figure}

\begin{figure}[p]
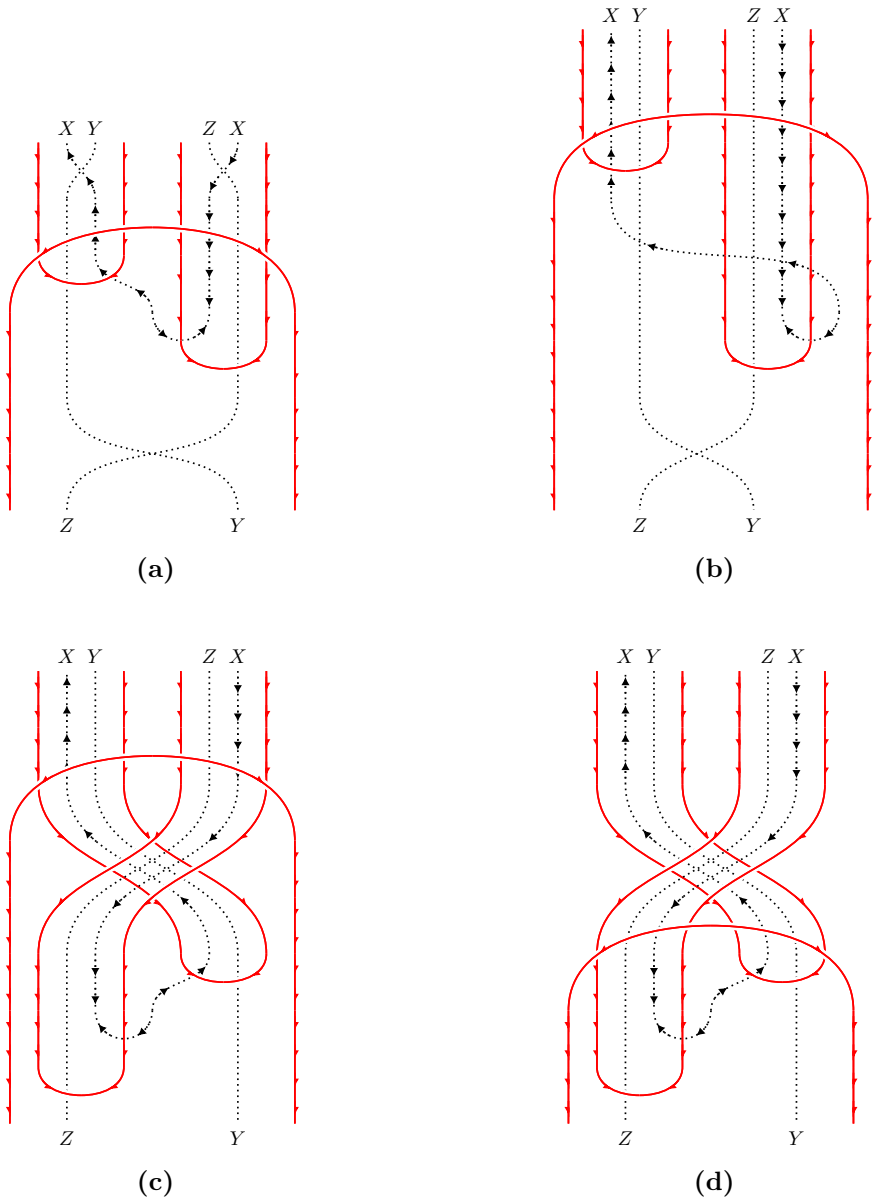

    \subfig{0.5}{\ensuremath{\vcenter{\hbox{\scalebox{0.75}{\input{ClosedDiags/BigCupSplit/6.tex}}}}}\figLab{a}}  
    \subfig{0.5}{\ensuremath{\vcenter{\hbox{\scalebox{0.75}{\input{ClosedDiags/BigCupSplit/7.tex}}}}}\figLab{b}}\figSpace
    \subfig{0.5}{\ensuremath{\vcenter{\hbox{\scalebox{0.75}{\input{ClosedDiags/BigCupSplit/8a.tex}}}}}\figLab{c}}
    \subfig{0.5}{\ensuremath{\vcenter{\hbox{\scalebox{0.75}{\input{ClosedDiags/BigCupSplit/9.tex}}}}}\figLab{d}}  
    \caption{The proof of symmetry for the rightward cup.}
    \label{fig:bigCupSym}
  \end{figure}

\newpage
After presenting the syntax and equations of our diagrammatic calculus, we are now ready to prove that $\freeCat$ is, in fact, a closed symmetric monoidal category.
\begin{theorem}\label{thm:string diagrams are closed monoidal}
  $\freeCat$ is a closed symmetric monoidal category.
\end{theorem}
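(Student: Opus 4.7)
The plan is to break the verification into two layers: first the symmetric monoidal structure on $\freeCat$, and then the closed structure. For the monoidal structure, tensor is given by concatenation of bracketed strings of $\sort^\sharp$ with unit $I$, the identities, associators and unitors are realised by the evident identity diagrams, and the symmetric braiding is built out of the crossing generators of \cref{fig:full diagrammatic syntax}. Because the defining congruence $\cong$ already quotients by \emph{all} equations of symmetric monoidal categories, bifunctoriality of $\otimes$, naturality of the braiding and the hexagon/pentagon coherence conditions hold by fiat; the only content is that the crossings remain well-formed in the presence of bracketing strings, which is ensured by construction of the grammar.

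For the closed structure I will exhibit the internal hom as an actual functor. On objects set $[X,Y] \coloneqq [X^\ast Y]$, using the bracket constructor of \autoref{def:monoidal closed signature}. On morphisms $f\colon X\to X'$ and $g\colon Y\to Y'$, the diagram $[f,g]$ is constructed by placing the contravariant reflection of $f$ on the upstring side and $g$ on the downstring side of a bracket pair; functoriality in each argument is exactly the naturality equations of \cref{fig:naturality} together with the derived naturalities of \cref{fig:derNaturalities}. Next I will construct the currying isomorphism
\[
  \freeCat(X\tensor Y, Z) \;\cong\; \freeCat(X, [Y,Z])
\]
by defining $\mathrm{curry}(h)$ to be $h$ precomposed with the rightward cap generator $\rightCap$ on the $Y$-wire, and $\mathrm{uncurry}(k)$ to be $k$ postcomposed with the left crossing cup $\Xcup$ that partially evaluates the bracket against the $Y$-wire. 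The two composites $\mathrm{uncurry}\circ\mathrm{curry}$ and $\mathrm{curry}\circ\mathrm{uncurry}$ then reduce to identities by two applications of the yanking equations of \cref{fig:yanking}, one on each side.

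It remains to verify that the bijection $\mathrm{curry}$ is natural in $X$, $Y$, and $Z$. Naturality in $X$ is immediate since precomposition with a morphism on $X$ occurs entirely outside the bracket. Naturality in $Z$ reduces, via \cref{lemma:bigCupSplit} and the naturality laws for the rightward cup in \cref{fig:derCupNatural}, to sliding a box along a downstring into a bracketing pair, which is the first equation of \cref{fig:naturality}. Naturality in $Y$ is the delicate case: one must slide a morphism acting on $Y$ past the rightward cap into the bracket, which requires the derived crossing equations of \cref{fig:derCrossingRel} together with the derived cap equations of \cref{fig:derCapRel} to move the box through the bracketing string in a well-formed way.

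The main obstacle will be this last point, \emph{naturality in the middle argument}, because it is where the restriction (3)–(5) on valid diagrams bites: a naive slide would temporarily create an invalid configuration. The strategy is to first normalise both diagrams using the normaliser $\nu$ of \autoref{def:normaliser} so that all upstrings inside a bracket precede all downstrings, at which point the cap and the morphism sit in a purely covariant region where \cref{fig:derCapRel,fig:derCrossingRel} apply directly. Combining these verifications gives the tensor–hom adjunction and hence closedness of $\freeCat$.
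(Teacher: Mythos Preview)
Your approach is correct and close in spirit to the paper's, but the framing differs and you overcomplicate one step.

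The paper establishes the adjunction $Y\tensor - \dashv [Y,-]$ via the \emph{unit--counit} presentation: it writes down explicit coevaluation and evaluation diagrams $\eta_X$ and $\epsilon_X$, checks their naturality in $X$ using the derived sliding equations of \cref{fig:derNaturalities}, and obtains the triangle identities as the special case $f=g=\id$ of the currying bijection worked out in \cref{example:currying}. You instead go directly for the hom-set bijection, defining $\mathrm{curry}$ and $\mathrm{uncurry}$ and checking they are mutually inverse via yanking. These two presentations are of course equivalent, and your yanking argument is essentially the content of \cref{example:currying}.

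Where you diverge unnecessarily is in treating naturality in $Y$ as ``the main obstacle'' and invoking the normaliser $\nu$ to resolve it. For the closed structure you only need, for each fixed $Y$, a right adjoint to $Y\tensor -$; this requires naturality of the bijection in $X$ and $Z$ only. Naturality in $Y$ (equivalently, the bifunctoriality of $[-,-]$) then follows formally from the parametrised-adjoint lemma once the individual adjunctions are in place, so there is no need to manipulate diagrams through potentially invalid intermediate configurations or to appeal to normalisation at this stage. The paper avoids this entirely by working with fixed $Y$ throughout.
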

\begin{proof}
  First, observe that $\freeCat$ is a symmetric monoidal category. The tensor product $\otimes$ is given by parallel composition of diagrams and symmetries by crossing of wires. Moreover, by definition of $\freeCat$, diagrams are subject to the laws of SMCs.

  To show that it is also closed, we define the right adjoint functor $[Y, -] \colon \freeCat \to \freeCat$ to the tensor product functor $Y \otimes - \colon \freeCat \to \freeCat$, as follows. 
  
  On objects it maps $X$ to $[Y^\ast X]$ and on morphisms it maps $\ensuremath{\vcenter{\hbox{\scalebox{0.75}{\begin{tikzpicture}[baseline={(current bounding box.center)}, xscale=1]
\drawStartNodes{
    B/$X$/0
}

\ddShift{B}{0,-1}
\coord{t}{B}{0,0}

\ddShift{B}{0,-1}

\bead{t}{t}{t}

\drawEndNodes{
    B/$Z$
}
\end{tikzpicture}
}}}}$ to $\ensuremath{\vcenter{\hbox{\scalebox{0.75}{\begin{tikzpicture}[baseline={(current bounding box.center)}, xscale=1]
\drawStartNodes{
    b1//0.5,
    A/$Y$/0.5,
    B/$X$/0.5,
    b2//0
}

\bShift{b1}{0,-1}[right]
\bShift{b1}{0,-1}[right]
    
\udShift{A}{0,-1}
\ddShift{B}{0,-1}

\udShift{A}{0,-1}
\ddShift{B}{0,-1}

\bShift{b2}{0,-1}
\bShift{b2}{0,-1}

\coord{t}{B}{0,1}
\bead{t}{t}{t}

\drawEndNodes{
    A/$Y$/0.5,
    B/$Z$/0
}
\end{tikzpicture}}}}}$.

  To show that $Y \otimes - \dashv  [Y, -]$, we define the unit $\eta_X$ of the adjunction as the \emph{coevaluation} diagram below on the left and the counit $\epsilon_X$ as the \emph{evaluation} diagram below on the right.
  \begin{equation}%
      \eta_X \coloneqq \ensuremath{\vcenter{\hbox{\scalebox{0.75}{\begin{tikzpicture}[baseline={(current bounding box.center)}, xscale=1
]
\drawStartNodes{
    A/$X$/0
}
\ddShift{A}{0,-1}
\coordinate[shift={(-0.5,-1)}] (B1) at (A);
\coordinate[shift={(-1,-1)}] (B2) at (A);
\coordinate[shift={(-1.5,-1)}] (b1) at (A);
\coordinate[shift={(0.5,-1)}] (b2) at (A);

\ddCap{B2}{B1}[0.3]

\ddShift{A}{0,-1}
\bracketCap{b1}{b2}[1]

\drawEndNodes{
    A/{$X$},
    B1/{$Y$},
    B2/{$Y$}
}

\end{tikzpicture}
}}}}
      \qquad \qquad
      \epsilon_X \coloneqq \ensuremath{\vcenter{\hbox{\scalebox{0.75}{\begin{tikzpicture}[baseline={(current bounding box.center)}, xscale=1
]
\drawStartNodes{
    B1/{$Y$}/0.5,
    b1//0.5,
    B2/{$Y$}/0.5,
    A/{$X$}/0.5,
    b2//0
}
\ddCup{B1}{B2}
\drawBracketShift{b1}{0,-1}[right]
\drawBracketShift{b2}{0,-1}

\ddShift{A}{0,-1}
\ddShift{A}{0,-1}

\bracketCup{b1}{b2}

\drawEndNodes{
    A/{$X$}
}

\end{tikzpicture}
}}}}
  \end{equation}
  By the equations in~\autoref{fig:derNaturalities} we can slide diagrams across brackets, and thus naturalities of $\eta_X$ and $\epsilon_X$, displayed below, are easily verified.
  \[
  \ensuremath{\vcenter{\hbox{\scalebox{0.75}{\includegraphics{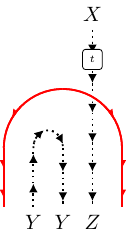}
}}}} \cong \ensuremath{\vcenter{\hbox{\scalebox{0.75}{\includegraphics{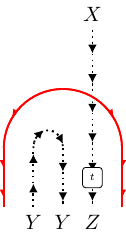}
}}}}
  \qquad\qquad    
  \ensuremath{\vcenter{\hbox{\scalebox{0.75}{\includegraphics{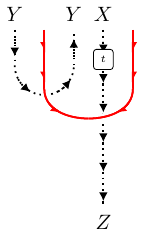}
}}}} \cong \ensuremath{\vcenter{\hbox{\scalebox{0.75}{\includegraphics{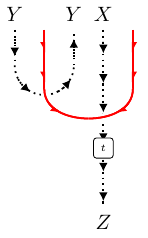}
}}}}
  \]

  We defer the proof of the triangular identities to the next section, since they arise as special cases of the currying-uncurrying bijection in Example~\ref{example:currying}, instantiated with $f,g = id_{X\otimes Y}$.
\end{proof}%

\section{Examples}\label{sec:examplesApp}
Before turning to the soundness and completeness of our string diagrams with brackets, we begin with several motivating examples.

We start with the archetypal case of currying and uncurrying in a closed symmetric monoidal category. This highlights the advantages of our diagrammatic language over other approaches in the literature.

Next, we extend the calculus with copy and delete morphisms to obtain a \emph{cartesian} closed structure, and show how this allows us to reason diagrammatically about the $\lambda$-calculus.

Our third example demonstrates how the self-enrichment of a closed symmetric monoidal category can be expressed graphically, and how the associativity and unitality of the enrichment follow naturally from the diagrams.

Finally, we use our calculus to illustrate a well-known coherence issue involving units in closed monoidal categories.
\begin{example}[Currying]\label{example:currying}
  The archetypal example is that of \emph{currying}. Given two morphisms $f \colon Y \otimes X \to Z$ and $g \colon X \to [Y,Z]$, we define the currying of $f$ and the uncurrying of $g$ as the diagrams on the left and on the right, respectively, below.
\begin{figure}[H]
\begin{subfigure}{0.5\textwidth}
  \centering
  \ensuremath{\vcenter{\hbox{\scalebox{0.75}{\input{ClosedDiags/Currying/1a.tex}}}}} $\mapsto$ \ensuremath{\vcenter{\hbox{\scalebox{0.75}{\input{ClosedDiags/Currying/2a.tex}}}}}%
\end{subfigure}%
\begin{subfigure}{0.5\textwidth}
  \centering
  \ensuremath{\vcenter{\hbox{\scalebox{0.75}{\input{ClosedDiags/Currying/3.tex}}}}} $\mapsto$ \ensuremath{\vcenter{\hbox{\scalebox{0.75}{\input{ClosedDiags/Currying/4.tex}}}}}%
\end{subfigure}
\end{figure}
Note that in both cases, the operations correspond diagrammatically to bending the (un)curried wire. This visual intuition hints at an underlying bijection between the two operations, which we establish formally below.
\[
\ensuremath{\vcenter{\hbox{\scalebox{0.75}{\includegraphics{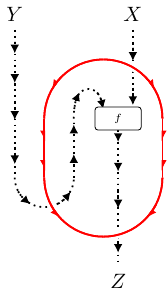}
}}}}\qquad \cong \qquad \ensuremath{\vcenter{\hbox{\scalebox{0.75}{\includegraphics{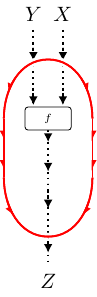}
}}}}\qquad \cong\qquad  \ensuremath{\vcenter{\hbox{\scalebox{0.75}{\includegraphics{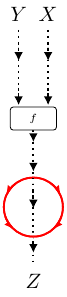}
}}}} \qquad \cong\qquad \ensuremath{\vcenter{\hbox{\scalebox{0.75}{\includegraphics{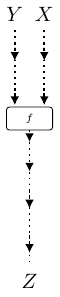}
}}}}
\]
\[
\ensuremath{\vcenter{\hbox{\scalebox{0.75}{\includegraphics{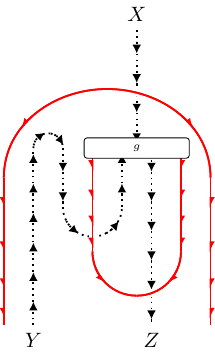}
}}}} \quad\cong\quad \ensuremath{\vcenter{\hbox{\scalebox{0.75}{\includegraphics{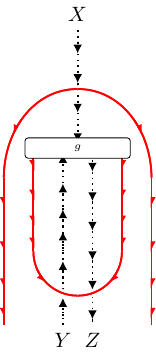}
}}}} \quad\cong\quad \ensuremath{\vcenter{\hbox{\scalebox{0.75}{\includegraphics{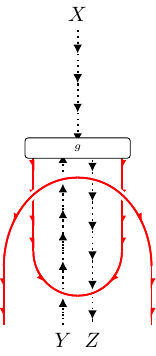}
}}}}\quad\cong\quad  \ensuremath{\vcenter{\hbox{\scalebox{0.75}{\includegraphics{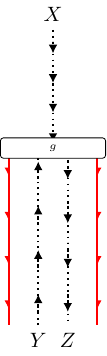}
}}}}\]
 Hence we have a bijection.
\end{example}
\begin{example}[$\lambda$-calculus]%
  Expanding on the previous example, we give a diagrammatic encoding of the simply typed $\lambda$-calculus, that we recall now.
  Fixed a set of variables $\mathcal X$ and a set of basic types $\sort$, we build types as $X \coloneqq A \mid X \to X$, where $A \in \sort$. Terms are generated according to the following grammar, where $x \in \mathcal X$:
\[    u \coloneqq x \mid \lambda x . u \mid u \; u
\]Amongst terms, we consider only those that can be typed according to type judgements of the form $\lTerm{\Gamma}{u}{X}$, where $\Gamma = x_1 \colon X_1, \ldots, x_n \colon X_n$ is a variable type assignment. Type judgements are defined according to the following rules:
\begin{center}
  \[
    \begin{array}{c@{\qquad\qquad}c@{\qquad\qquad}c}
        \lTerm{\Gamma, x \colon X}{x}{X} 
        & 
        \inferrule{
            \lTerm{\Gamma, x \colon X}{u}{Y}
        }{
            \lTerm{\Gamma}{\lambda x. u}{X \to Y}
        }
        &
        \inferrule{
            \lTerm{\Gamma}{u}{X}
            \qquad
            \lTerm{\Gamma}{v}{X \to Y}
        }{
            \lTerm{\Gamma}{v \; u}{Y}
        }
    \end{array}
    \]
\end{center}
We interpret typed terms in our category of string diagrams, additionally supplied with natural copy and discard operations that satisfy the axioms of cocommutative comonoids (see e.g. Table 8 in~\cite{Selinger2009}). By Fox's Theorem~\cite{fox1976coalgebras}, the additional structure makes the category cartesian closed.
The diagrammatic encoding is then given as follows:
\begin{figure}[H]
  \subfig{0.33}{
    \ensuremath{\vcenter{\hbox{\scalebox{0.75}{\input{ClosedDiags/Lambda/absEnc.tex}}}}}\\\vspace{0.3cm}
      $\lTerm{\Gamma}{\lambda x. u}{X \to Y}$
  }%
  \subfig{0.33}{
    \ensuremath{\vcenter{\hbox{\scalebox{0.75}{\input{ClosedDiags/Lambda/app.tex}}}}}\vspace{0.3cm}\\
    $\lTerm{\Gamma}{v \; u}{Y}$
  }
  \subfig{0.33}{
    \ensuremath{\vcenter{\hbox{\scalebox{0.75}{\input{ClosedDiags/Lambda/varEnc.tex}}}}}\vspace{0.3cm}\\
    $\lTerm{\Gamma, x \colon X}{x}{X}$
  }
\end{figure}
We conclude by showing how $\beta$-reduction is performed in our language.
\[
\ensuremath{\vcenter{\hbox{\scalebox{0.75}{\includegraphics{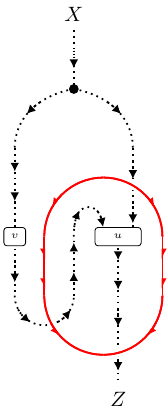}
}}}} \qquad \cong \qquad \ensuremath{\vcenter{\hbox{\scalebox{0.75}{\includegraphics{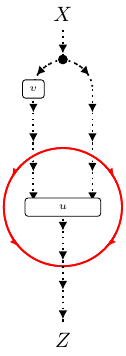}
}}}} \qquad \cong \qquad \ensuremath{\vcenter{\hbox{\scalebox{0.75}{\includegraphics{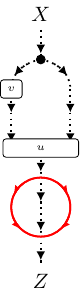}
}}}} \qquad \cong \qquad \ensuremath{\vcenter{\hbox{\scalebox{0.75}{\includegraphics{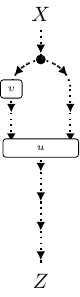}
}}}}
\]

The proof above consists of yanking strings and sliding or popping bubbles, emphasising the geometric nature of string diagrams with brackets. We invite the reader to compare this with the proof of $\beta$-reduction in~\cite{ghica2023string}, which, instead, relies more heavily on algebraic reasoning.

\end{example}

\begin{example}[Self enrichment]
One well-known and useful feature of closed symmetric monoidal categories is that they can be enriched over themselves using the internal hom. In other words, for any objects $X$, $Y$, and $Z$ in the category there are canonical composition and identity morphisms:
\[
\overline{\circ}_{X,Y,Z} \colon \hom{X,Y}\tensor \hom{Y,Z}\to \hom{X,Z} \text{ and } \overline{\id}_{X} \colon I \to \hom{X,X}
\]

In our language these are represented by diagrams the left- and right-hand diagrams below. These, of course, adhere to associativity and unitality axioms. 

\begin{figure}[H]
  \subfig{0.5}{\ensuremath{\vcenter{\hbox{\scalebox{0.75}{\tikzsetnextfilename{ClosedDiags/Example/intComp}
\begin{tikzpicture}[baseline={(current bounding box.center)}, xscale=1]
\drawStartNodes{
    b1//0.5,
    A/$X$/0.5,
    B/$Y$/0.5,
    b2//1,
    b3//0.5,
    C/$Y$/0.5,
    D/$Z$/0.5,
    b4//0
}

\bShift{b1}{0,-1}[right]
\bShift{b4}{0,-1}

\bbCup{b2}{b3}
\ddCup{B}{C}[1]

\udShift{A}{0,-1}
\ddShift{D}{0,-1}
\drawEndNodes{
    A/$X$,
    D/$Z$
}
\end{tikzpicture}}}}}} 
  \subfig{0.5}{\ensuremath{\vcenter{\hbox{\scalebox{0.75}{\tikzsetnextfilename{ClosedDiags/Example/intUnit}
\begin{tikzpicture}[baseline={(current bounding box.center)}, xscale=1]
\drawStartNodes{
    b1//0.5,
    A//1,
    A1//0.5,
    b4//0
}

\coord{A}{A}{0,-1}
\coord{A1}{A1}{0,-1}
\coord{b1}{b1}{0,-1}
\coord{b4}{b4}{0,-1}

\bCap{b1}{b4}[1]
\ddCap{A}{A1}

\drawEndNodes{
    A/$X$,
    A1/$X$
}
\end{tikzpicture}}}}}}
\end{figure}

The unitality axiom is shown below, using \autoref{fig:derCupInterchange} as well as the yanking equations for strings and bracketing strings. Associativity also follows from \autoref{fig:derCupInterchange}.

\[
    {\ensuremath{\vcenter{\hbox{\scalebox{0.75}{\includegraphics{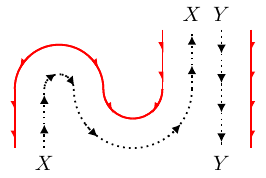}
}}}}} 
    \cong{\ensuremath{\vcenter{\hbox{\scalebox{0.75}{\includegraphics{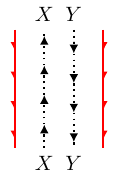}
}}}}} 
\]
\end{example}

\begin{example}[Triple unit diagram]
  In their paper on coherence in closed symmetric monoidal categories, Kelly and Mac Lane~\cite{KELLY197197} show that there is no full coherence in closed symmetric monoidal categories, in the sense that we can generate non-commuting diagrams from the structural morphisms. 
  
  The typical example is the diagram below, that does not commute, for instance, in the closed monoidal category of pointed sets~\cite{MEHATS2007127, hughes2012simple}: %
  \tikzexternaldisable
  \[\begin{tikzcd}
    \hom{\hom{\hom{X,I},I},I}
    \arrow[rr, equal]
    \arrow[dr, "\hom{v_X,I}"{swap}]
        & &\hom{\hom{\hom{X,I},I},I}\\
    & \hom{X,I}
    \arrow[ru, "v_{\hom{X,I}}"{swap}]
  \end{tikzcd}
  \]
  \tikzexternalenable
  where $v_X\colon X\to \hom{\hom{X,I},I}$ is given by the composite
  \[
    X\xrightarrow{\eta^{X,I}} \hom{\hom{X,I},X\tensor \hom{X,I}} \xrightarrow{\hom{\hom{X,I},\epsilon^X_I}} \hom{\hom{X,I},I}.
  \]
  To see how this phenomenon appears in our diagrammatic language, note that $v_X$ is depicted by the string diagram below on the left. Meanwhile, the composite in the pasting diagram is represented by the string diagram on the right:
  \[
  \ensuremath{\vcenter{\hbox{\scalebox{0.75}{\includegraphics{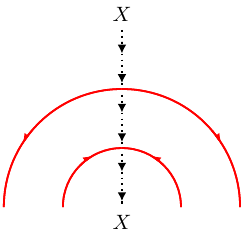}
}}}} \qquad\qquad\qquad \ensuremath{\vcenter{\hbox{\scalebox{0.75}{\includegraphics{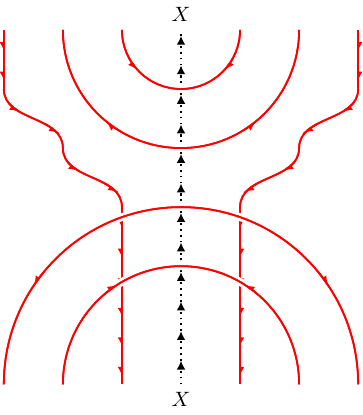}
}}}}
  \]
  By \autoref{thm:freeness}, proven in the next section, we know that this diagram does not collapse to the identity diagram, but intuitively this follows from the fact that the cups and caps in the string diagram above cannot be merged.
\end{example}

\section{Soundness and Completeness}\label{sec:soundness and completeness}

In this section we prove the completeness result, namely that the category of string diagrams $\freeCat$ is \emph{the} free closed symmetric monoidal category generated by a closed monoidal signature. We begin by proving that the free closed symmetric monoidal category is equivalent to one where diagrams are normalised in a suitable sense. We then prove that each of our generators can be decomposed into evaluation and coevaluation maps, before using this result to inductively define an interpretation functor.

\subsection{Normalisation}

As previously mentioned, the closed symmetric monoidal category generated by a signature, defined above, is in some sense, unbiased. This means that we will show it to be free in a 2-categorical sense -- in other words we will show that it is free up to equivalence. In this subsection we define the \emph{normalised} subcategory of the  closed symmetric monoidal category on a signature -- consisting only of normalised objects and terms. We prove that it is closed symmetric monoidal equivalent to the closed symmetric monoidal category on a signature. This ultimately means that, in the proof of completeness and soundness, we can ignore a lot of the purely syntactic generators, such as the normaliser. We begin by proving that, although we include the reflections of the generating diagrams in \autoref{fig:term generators}, this is purely for syntactical convenience.
\begin{definition}
        The \emph{unreflected} closed symmetric monoidal category $\unref{\freeCat}$ is defined as the subcategory of $\freeCat$ with the same objects but whose morphisms are generated:
        \begin{itemize}
        \item on downstrings by \autoref{fig:term generators}, without reflection in the vertical axis;
        \item on upstrings by the generators of \autoref{fig:term generators} after a reflection in the horizontal axis.
        \end{itemize}
\end{definition}
To see that this is closed symmetric monoidal note that it is closed under tensors, internal hom, and contains the evaluation and coevaluation diagrams.
\begin{lemma}\label{lemma:unrefEq}
        $\freeCat$ is equal to $\unref{\freeCat}$.
\end{lemma}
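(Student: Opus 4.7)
The inclusion $\unref{\freeCat} \subseteq \freeCat$ is immediate, since by definition $\unref{\freeCat}$ is obtained from $\freeCat$ by restricting the set of admissible generators while keeping the same objects and the same equations. The work is therefore entirely in the reverse inclusion: I would show that every vertically-reflected generator (i.e.\ every ``right-handed'' version of a generator from \autoref{fig:term generators}, and likewise in the contravariant layer) is already expressible as a term built from the unreflected generators together with crossings and the SMC machinery (tensor, composition, and symmetries), all of which live in $\unref{\freeCat}$.

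The plan is a case analysis over the generators in \autoref{fig:term generators}. For each generator $G$ that sits on the left-hand side of a bracket pair, its vertical mirror $G'$ sits on the right. I would reconstruct $G'$ by (i) pre-composing with a crossing that moves the relevant outside wire from the right to the left across the bracket (using the ``downstring crossing a bracketing string'' mixed term from \autoref{fig:full diagrammatic syntax}), (ii) applying $G$, and then (iii) post-composing with the inverse crossing that returns any internal wire to the right-hand side of the bracket. The bubble cup and cap are symmetric under vertical reflection and need no argument. The rightward cap and rightward cup are similarly obtained by conjugating their left-handed versions with crossings; for the rightward cup this is nothing but the symmetry law of \autoref{fig:derSymCup}, which in fact shows the two orientations are equal modulo $\cong$.

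The contravariant half of the generator set is handled by exactly the same recipe, since $\unref{\freeCat}$ by definition contains the horizontal reflections of the chosen downstring generators, and the vertical reflections of these are derived by repeating the argument of the previous paragraph in the upstring layer.

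The only obstacle is to check that each conjugation really produces the reflected generator rather than something else. This reduces to the crossing equations in \autoref{fig:crossings}, the derived symmetry laws in \autoref{fig:derCrossingRel}, and the yanking equations in \autoref{fig:yanking}: together they ensure that the pre- and post-composed crossings cancel with themselves and slide past the unreflected generator in the expected way. Once this is verified generator-by-generator, every morphism of $\freeCat$ admits a representative using only unreflected generators, so the two categories coincide.
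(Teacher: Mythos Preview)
Your proposal is correct and follows essentially the same approach as the paper: both arguments reduce the claim to showing that every vertically-reflected generator is expressible in terms of its unreflected counterpart plus symmetries, invoking the crossing equations of \autoref{fig:crossings}, the derived crossing equations of \autoref{fig:derCrossingRel}, and the symmetry law for the rightward cup in \autoref{fig:derSymCup}. The paper's proof is a one-sentence pointer to these three ingredients, whereas you spell out the conjugation-by-crossings recipe explicitly, but there is no substantive difference in strategy.
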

\begin{proof}
        By definition the two categories have the same objects, composition and closed symmetric monoidal structures. To see that they have the same morphisms, note that by the crossing equations, \autoref{fig:derCrossingRel} and \autoref{fig:derSymCup} we know that every generator that is a reflection of those in \autoref{fig:term generators} can be written as a composition of symmetries and its corresponding generator in \autoref{fig:term generators}. 
\end{proof}
\begin{definition}
        The \emph{normalised} closed symmetric monoidal category, $\normalised{\freeCat}$, on a signature $\Sigma$, is defined as the subcategory of $\unref{\freeCat}$ whose objects are exactly the normalised ones, and whose morphisms are those generated by the diagrams of \autoref{fig:normalised term generators}, and, on upstrings, their reflections in the horizontal axis.
\end{definition}
To see that this is closed symmetric monoidal note that it is closed under tensors, internal hom, and contains the evaluation and coevaluation diagrams.
\setcounter{figure}{11}
\begin{figure}[H]
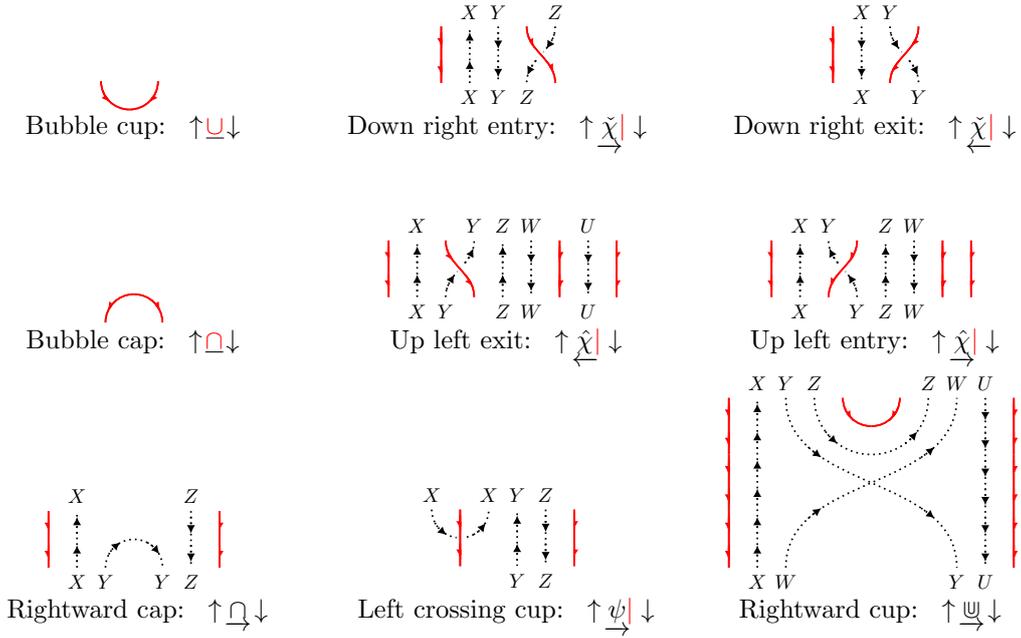

  \centering
\subfig{0.33}{\ensuremath{\vcenter{\hbox{\scalebox{0.75}{\input{ClosedDiags/NormalGen/brackCup.tex}}}}}\\
\diagLabel{Bubble cup: }$\normalised\brCup$}
\subfig{0.33}{\ensuremath{\vcenter{\hbox{\scalebox{0.75}{\input{ClosedDiags/NormalGen/downIn.tex}}}}}\\
\diagLabel{Down right entry:} $\normalised\dXin$
}\figSpace
\subfig{0.33}{
        \ensuremath{\vcenter{\hbox{\scalebox{0.75}{\input{ClosedDiags/NormalGen/downOut.tex}}}}}\\
\diagLabel{Down right exit:} $\normalised\dXout$
}\\
\subfig{0.33}{
      \ensuremath{\vcenter{\hbox{\scalebox{0.75}{\input{ClosedDiags/NormalGen/brackCap.tex}}}}}\\
\diagLabel{Bubble cap: }$\normalised\brCap$
}
\subfig{0.33}{
        \ensuremath{\vcenter{\hbox{\scalebox{0.75}{\input{ClosedDiags/NormalGen/upOut.tex}}}}}\\
\diagLabel{Up left exit: }$\normalised\uXout$}
\subfig{0.33}{
        \ensuremath{\vcenter{\hbox{\scalebox{0.75}{\input{ClosedDiags/NormalGen/upIn.tex}}}}}\\
\diagLabel{Up left entry: }$\normalised\uXin$}
\\
\subfig{0.33}{
        \ensuremath{\vcenter{\hbox{\scalebox{0.75}{\input{ClosedDiags/NormalGen/cap.tex}}}}}\\
\diagLabel{Rightward cap:} $\normalised\rightCap$}
\subfig{0.33}{
        \ensuremath{\vcenter{\hbox{\scalebox{0.75}{\input{ClosedDiags/NormalGen/crossCup.tex}}}}}\\
\diagLabel{Left crossing cup: }$\normalised\Xcup$}
\subfig{0.33}{
        \ensuremath{\vcenter{\hbox{\scalebox{0.75}{\input{ClosedDiags/NormalGen/cup}}}}}\\
\diagLabel{Rightward cup: }$\normalised\rightCup$
}
\caption{Normalised term generators}
\label{fig:normalised term generators}
 \end{figure}
\label{sec:normal}
Besides the obvious inclusion $\normalised{\freeCat}\hookrightarrow \unref{\freeCat}$, we have a functor in the opposite direction. In order to define this, we first need a definition of the normaliser terms.

\begin{definition}
The normalisation functor is a strong closed, strong symmetric monoidal functor $\normF \colon \unref{\freeCat} \to \normalised{\freeCat}$ defined as follows
\begin{itemize}
        \item On objects it sends every object to its normal form.
        \item On morphisms it is defined inductively as follows: 
        \begin{itemize}
                \item $\alpha\mapsto (\normalised{\alpha})$ for all $\alpha$ in \autoref{fig:term generators};
                \item $f\mapsto \nu \circ f \circ \nu^{-1}$ for all $f\in \Sigma$;
                \item all other generators are mapped in the only possible way.
        \end{itemize}
\end{itemize}
\end{definition}
The fact that this is indeed a functor, as well as being strong closed and strong symmetric monoidal, immediately follows from the inductive definition.

In essence, we will give a theorem which proves that the functor above gives a kind of equivalence between the ``unbiased'' presentation of closed symmetric monoidal categories, and the ``biased' presentation. First we need the following elementary result.
\begin{lemma}
\label{lemma:normaliserFlip}
The normaliser adheres to the following two equalities.
\[
\ensuremath{\vcenter{\hbox{\scalebox{0.75}{\includegraphics{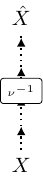}
}}}} \cong
\ensuremath{\vcenter{\hbox{\scalebox{0.75}{\includegraphics{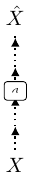}
}}}} 
\qquad 
\ensuremath{\vcenter{\hbox{\scalebox{0.75}{\includegraphics{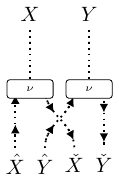}
}}}} \cong
\ensuremath{\vcenter{\hbox{\scalebox{0.75}{\includegraphics{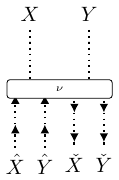}
}}}}
\]
\end{lemma}
\begin{proof}
        Straightforward induction using the symmetry equations for the crossing of upstrings with downstrings.
\end{proof}
\begin{theorem}\label{thm:normalisation equivalence}
There is a closed symmetric monoidal equivalence $\freeCat \simeq \normalised{\freeCat}$. %
\end{theorem}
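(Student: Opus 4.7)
The plan is to exhibit the claimed equivalence as an adjoint equivalence between the normalisation functor $\normF \colon \freeCat \to \normalised{\freeCat}$ and the evident inclusion functor $\iota \colon \normalised{\freeCat} \hookrightarrow \freeCat$. Since $\normF$ is already stated to be strong closed and strong symmetric monoidal, and $\iota$ is strict closed symmetric monoidal by construction, once the pair is shown to form an equivalence of underlying categories the closed symmetric monoidal nature comes for free.

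First I would check that $\normF \circ \iota$ is (strictly) the identity on $\normalised{\freeCat}$. On objects this is immediate: a normalised string is its own normal form. On morphisms it reduces to the observation, obtained directly from the inductive definition of $\normF$ and the fact that $\nu$ acts as the identity on already-normalised strings, that each normalised generator $\normalised{\alpha}$ is sent to itself. Next I would define a natural transformation $\eta \colon \Id_{\freeCat} \Rightarrow \iota\circ\normF$ whose component $\eta_X$ at each object $X$ is the normaliser $\nu_X$. Using Lemma~\ref{lemma:normaliserFlip} together with the crossing, yanking and bubble equations of \autoref{fig:equations}, an easy induction on the structure of $X$ shows that the horizontal reflection of $\nu_X$ is a two-sided inverse, so every $\eta_X$ is an isomorphism.

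Naturality of $\eta$ amounts to verifying, for each generator $g \colon X \to Y$ of $\freeCat$, the equation $\iota(\normF(g)) \circ \nu_X = \nu_Y \circ g$. For signature morphisms $f \in \Sigma$ this is immediate from the definition $\normF(f) = \nu \circ f \circ \nu^{-1}$ combined with invertibility of $\nu$. For the graphical generators in \autoref{fig:term generators} one proceeds by a case analysis: each case consists of pushing the normaliser past the generator by repeated application of the naturality and crossing equations of \cref{fig:naturality,fig:crossings} together with their derived forms in \cref{fig:derNaturalities,fig:derCrossingRel}. Naturality then propagates to arbitrary morphisms by an induction on sequential and parallel composition, using that $\normF$ preserves both.

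The main obstacle will be this case-by-case naturality check, whose subtlety lies less in the individual graphical arguments than in the combinatorial bookkeeping: one has to track which wires are up- versus downstrings, on which side of which bracketing pair they lie, and where the normaliser must re-sort them. The cases involving a bracket opening or closing are the trickiest, since there one must push $\nu$ across a cup or cap and invoke the bubble-pop and bubble-merge equations of \autoref{fig:bubble} together with the derived cap laws of \autoref{fig:derCapRel}; the purely ``interior'' cases reduce cleanly to a naturality or symmetry equation. Once naturality is established, combining $\normF\circ\iota = \Id$ with the natural isomorphism $\eta \colon \Id \cong \iota \circ \normF$ yields the desired equivalence, and the preservation of closed symmetric monoidal structure by both functors upgrades it to an equivalence of closed symmetric monoidal categories.
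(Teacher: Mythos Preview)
Your proposal is correct and follows essentially the same route as the paper: both show that $\normF\circ\iota$ is the identity and that the normaliser $\nu$ furnishes a natural isomorphism $\Id\cong\iota\circ\normF$, with naturality checked generator-by-generator using the derived naturality and crossing equations together with \autoref{lemma:normaliserFlip}. The only point you leave implicit is that $\normF$ is defined with domain $\unref{\freeCat}$ rather than $\freeCat$, so the paper first invokes \autoref{lemma:unrefEq} to identify the two; since that lemma gives an equality (not merely an equivalence), this is a harmless omission.
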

\begin{proof}
By \autoref{lemma:unrefEq}, it suffices to show that the normalisation functor forms an equivalence with the inclusion. Note that the composite
\tikzexternaldisable
\[\begin{tikzcd}
(\normalised{\freeCat})
\arrow[r, hookrightarrow]
         &  (\unref{\freeCat})
         \arrow[r,"\normF"]
                 & (\normalised{\freeCat})
\end{tikzcd}
\]
\tikzexternalenable
is the identity. Thus it suffcies to show that there is a closed monoidal natural isomorphism in the following diagram.
\tikzexternaldisable
\[\begin{tikzcd}
(\unref{\freeCat})
\arrow[r,"\normF"]
\arrow[rr, bend right, "{\id}"{swap}, ""{name=bot}]
        &  (\normalised{\freeCat})
        \arrow[r, hookrightarrow]
                 & (\unref{\freeCat})
\arrow[from=bot, to=1-2, phantom, "\cong"]
\end{tikzcd}
\]
\tikzexternalenable
We claim that such an isomorphism exists and is the normaliser. Since the normaliser is invertible all that remains to be shown is that it is natural, and to prove this we proceed inductively on generators. Clearly we have, for any $f\in \Sigma$
\[
\normF(f) \circ \nu = \nu \circ f\circ  \nu^{-1}\circ \nu = \nu \circ f.
\]
Now consider the generators in \autoref{fig:term generators}. The naturality with respect to the rightward cup follows from %
the diagrams below. We have that $(a)\cong(b)$ by the inductive definition of the normaliser, $(b)\cong(c)$ by the naturality proven in \autoref{fig:derCupNatural} and the equalities in \autoref{lemma:normaliserFlip}, and $(c)\cong(d)$ by the inductive definition of the normaliser. The naturality of the normaliser for the other generators in \autoref{fig:term generators} follow similarly, using the naturality axioms and the results of \autoref{fig:derNaturalities}. The normaliser is natural with respect to the other term generators either trivially or by inductive hypothesis, and the fact that this natural transformation is closed monoidal follows from the inductive definition, completing the proof.
\begin{figure}[H]
\subfig{0.5}{\ensuremath{\vcenter{\hbox{\scalebox{0.75}{\input{ClosedDiags/NormalProof/1}}}}}\figLab{a}}
\subfig{0.5}{\ensuremath{\vcenter{\hbox{\scalebox{0.75}{\input{ClosedDiags/NormalProof/2}}}}}\figLab{b}}\\
\subfig{0.5}{\ensuremath{\vcenter{\hbox{\scalebox{0.75}{\input{ClosedDiags/NormalProof/3}}}}}\figLab{c}}
\subfig{0.5}{\ensuremath{\vcenter{\hbox{\scalebox{0.75}{\input{ClosedDiags/NormalProof/4}}}}}\figLab{d}}
\end{figure}
\end{proof}

\subsection{Decomposition}
In this section we give yet another equivalent characterisation of the category of string diagrams with brackets. Specifically, we show that normalised diagrams can be expressed as composition of the co/evaluation diagrams shown in the proof of \autoref{thm:string diagrams are closed monoidal}.

\begin{definition}
    The closed symmetric monoidal category $\minimal{\freeCat}$ is defined as the subcategory of $\freeNormalCat$ with the same objects but whose morphisms are generated by compositions of the co/evaluation diagrams.
\end{definition}
    \begin{lemma}
        \label{lemma:capInsidetoOutside}
The following is an equality of diagrams.  
\[
    \ensuremath{\vcenter{\hbox{\scalebox{0.75}{\begin{tikzpicture}[baseline={(current bounding box.center)}, xscale=1
]
\drawStartNodes{
    b1//0.5,
    A/{$A$}/0.5,
    B/{$B$}/0.5,
    b2//0
}
\udShift{A}{0,-1}
\ddShift{B}{0,-1}
\bShift{b2}{0,-1}
\bShift{b1}{0,-1}[right]

\coordinate[shift={(-0.5,0)}] (b0) at (b1);
\coordinate[shift={(0.5,0)}] (b3) at (b2);
\bCap{b0}{b3}

\udShift{A}{-0.5,-1}
\bShift{b1}{0.5,-1}[right]

\ddShift{B}{0,-1}
\bShift{b2}{0,-1}
\bShift{b0}{0,-1}[right]
\bShift{b3}{0,-1}

\drawEndNodes{
    A/{$A$},
    B/{$B$}
}
\end{tikzpicture}
}}}} \cong  \ensuremath{\vcenter{\hbox{\scalebox{0.75}{\begin{tikzpicture}[baseline={(current bounding box.center)}, xscale=1
]
\drawStartNodes{
    b1//0.5,
    A/{$A$}/1,
    B/{$B$}/1,
    b2//0
}
\udShift{A}{0,-1}
\ddShift{B}{0,-1}
\bShift{b2}{0,-1}
\bShift{b1}{0,-1}[right]

\coordinate[shift={(-0.5,0)}] (b0) at (B);
\coordinate[shift={(0.5,0)}] (b3) at (B);
\bCap{b0}{b3}

\udShift{A}{0,-1}
\bShift{b1}{0,-1}[right]

\ddShift{B}{0,-1}
\bShift{b2}{0,-1}
\bShift{b0}{0,-1}[right]
\bShift{b3}{0,-1}

\drawEndNodes{
    A/{$A$},
    B/{$B$}
}
\end{tikzpicture}
}}}}
  \]
\end{lemma}
\begin{proof}
  The left hand side is equal the diagram below by merge and naturality, the right hand side is equal to the diagram below by merge.
  \[
   \ensuremath{\vcenter{\hbox{\scalebox{0.75}{\begin{tikzpicture}[baseline={(current bounding box.center)}, xscale=1
]
\drawStartNodes{
    b1//0.5,
    A/{$A$}/0.5,
    B/{$B$}/0.5,
    b2//0
}
\udShift{A}{0,-1}
\ddShift{B}{0,-1}
\bShift{b2}{0,-1}
\bShift{b1}{0,-1}[right]

\coordinate[shift={(-0.5,0)}] (b0) at (b1);
\coordinate[shift={(0.5,0)}] (b3) at (b2);
\bCap{b0}{b3}

\udShift{A}{-0.5,-1}
\bShift{b1}{0.5,-1}[right]

\ddShift{B}{0,-1}
\bShift{b2}{0,-1}
\bShift{b0}{0,-1}[right]
\bShift{b3}{0,-1}

\ddShift{B}{0,-1.5}
\udShift{A}{0,-1.5}
\bShift{b0}{0,-1.5}[right]
\bShift{b3}{0,-1.5}

\bCup{b1}{b2}

\coordinate[shift={(0,-1.5)}] (b1) at (b1);
\coordinate[shift={(0,-1.5)}] (b2) at (b2);

\bCap{b1}{b2}

\drawEndNodes{
    A/{$A$},
    B/{$B$}
}
\end{tikzpicture}
}}}}
  \]
\end{proof}
\begin{lemma}\label{lemma:decomposition}
    There is an equality $\freeNormalCat = \minimal{\freeCat}$.
\end{lemma}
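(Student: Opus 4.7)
The plan is to prove the two inclusions separately. The inclusion $\minimal{\freeCat} \subseteq \freeNormalCat$ is immediate by definition: the $\eta_X$ and $\epsilon_X$ diagrams written down in the proof of \autoref{thm:string diagrams are closed monoidal} are themselves normalised, so any composite of them (together with identities, symmetries, tensors, and signature morphisms) already lives in $\freeNormalCat$. The substantive direction is to show that each generator appearing in \autoref{fig:normalised term generators} lies in $\minimal{\freeCat}$; closure under composition, tensor, and symmetry then propagates this to all of $\freeNormalCat$.

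I would proceed generator by generator, exhibiting for each one an explicit decomposition into a composite of evaluations and coevaluations. The bubble cup and bubble cap are, up to unitors, $\epsilon_I$ and $\eta_I$. The down right entry $X \otimes \hom{Y,Z} \to \hom{Y, X \otimes Z}$ is the curry of $X \otimes \hom{Y,Z} \otimes Y \xrightarrow{\id_X \otimes \epsilon_Z} X \otimes Z$. The down right exit $\hom{I, X \otimes Y} \to X \otimes \hom{I, Y}$ is the composite $\hom{I, X \otimes Y} \xrightarrow{\eta} \hom{I, I \otimes (X \otimes Y)}$ followed by evaluation and regrouping. The up left entry and up left exit are obtained by symmetric curry/uncurry constructions, using both layers of adjunction. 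The rightward cap $\hom{X,Z} \to \hom{X \otimes Y, Y \otimes Z}$ is the curry of $\hom{X,Z} \otimes X \otimes Y \to Z \otimes Y$ followed by the symmetry. The left crossing cup $X \otimes \hom{X \otimes Y, Z} \to \hom{Y,Z}$ is the curry of the evident evaluation $X \otimes \hom{X \otimes Y, Z} \otimes Y \to Z$ obtained by regrouping and applying $\epsilon_Z$.

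The main obstacle will be the rightward cup, whose signature involves two nested brackets on each side and whose decomposition is not a single curry. Here I would invoke \autoref{lemma:bigCupSplit}, which already breaks the rightward cup inside $\freeNormalCat$ into a rightward cap, a left crossing cup, and a bubble-merge. The difficulty is that this decomposition contains caps appearing \emph{inside} brackets; before I can realise each piece as a curry/uncurry of evaluations, I must push those caps outside their enclosing brackets. This is exactly what \autoref{lemma:capInsidetoOutside} supplies, and once all interior caps have been exteriorised each remaining sub-diagram is of a shape I have already handled above. The delicate bookkeeping is ensuring that every intermediate rewrite stays within $\minimal{\freeCat}$ rather than silently reintroducing one of the generators I am trying to eliminate; I would control this by induction on the total bracketing depth of the diagram, so that the induction hypothesis supplies decompositions for all strictly shallower sub-diagrams.

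Assembling these decompositions gives, for every generator of $\freeNormalCat$, a witness in $\minimal{\freeCat}$. Since the objects of the two subcategories coincide and both are closed under composition, tensor, and symmetry, this yields the desired equality $\freeNormalCat = \minimal{\freeCat}$.
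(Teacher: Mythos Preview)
Your strategy is the same as the paper's: show each normalised generator is equal, in $\freeNormalCat$, to an expression built from $\eta$ and $\epsilon$. The paper, however, does not treat the generators independently as you do; it chains them in the order down-exit $\to$ down-entry $\to$ crossing cup $\to$ up-left-exit $\to$ (rightward cap, up-left-entry) $\to$ rightward cup, each later decomposition explicitly invoking an earlier one inside a bracket. Your ``induction on bracketing depth'' is a proxy for this dependency chain, but the chain is really among generators, not depths. Two small misattributions: \autoref{lemma:capInsidetoOutside} is used for the \emph{up left entry}, not the rightward cup; and the decomposition obtained from \autoref{lemma:bigCupSplit} for the rightward cup involves up-left-exits and crossing cups, not a rightward cap.
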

\begin{proof} 
    We prove the equality by showing that each generator of $\freeNormalCat$ decomposes into composites, tensors and bracketings of evaluation and coevaluation terms. Firstly, note that the bracket cups and caps correspond to $\eta$ and $\epsilon$ when $X,Y = I$.
    
    The decomposition of down right exit is given by the sequence of diagrams below where $(a)\cong(b)$ by the merge equation, $(b)\cong(c)$ by naturality and $(c)\cong(d)$ by syntactic sugar. Hence down right exit decomposes as the following:
    \[
        \hom{I,X\tensor Y} \xrightarrow{\epsilon^I_{X\tensor Y}} X\tensor Y \xrightarrow{\eta^I_{X} \tensor Y} \hom{I,X}\tensor Y.
    \]
        \begin{figure}[H]
  \subfig{0.25}{
    \ensuremath{\vcenter{\hbox{\scalebox{0.75}{\input{ClosedDiags/NormalGen/downOut.tex}}}}} 
 \figLab{a}
  }      
  \subfig{0.25}{
    \ensuremath{\vcenter{\hbox{\scalebox{0.75}{\input{ClosedDiags/WellDef2/downLeave/2a}}}}}  \figLab{b}
  }
  \subfig{0.25}{
    \ensuremath{\vcenter{\hbox{\scalebox{0.75}{\input{ClosedDiags/WellDef2/downLeave/2}}}}}  \figLab{c}
  }
\subfig{0.25}{
    \ensuremath{\vcenter{\hbox{\scalebox{0.75}{\input{ClosedDiags/WellDef2/downLeave/1}}}}} \figLab{d}
  }
  \end{figure}

    The decomposition of down right entry is given by the sequence of diagrams below where $(a)\cong(b)$ by the merge equation, $(b)\cong(c)$ by the yanking equation, and $(c)\cong(d)$ by syntactic sugar. Hence down right entry decomposes as the following
    \[
        \hom{X,Y}\tensor Z \xrightarrow{\eta^X} \hom{X,X\tensor \hom{X,Y}\tensor Z} \xrightarrow{\hom{X, \epsilon^X\tensor Z}} \hom{X,Y\tensor Z}.
    \]
      \begin{figure}[H]
  \subfig{0.5}{
    \ensuremath{\vcenter{\hbox{\scalebox{0.75}{\input{ClosedDiags/NormalGen/downIn.tex}}}}} \figLab{a}
  }      
  \subfig{0.5}{
    \ensuremath{\vcenter{\hbox{\scalebox{0.75}{\input{ClosedDiags/WellDef2/downEnter/1}}}}} \figLab{b}
  }\figSpace
  \subfig{0.5}{
    \ensuremath{\vcenter{\hbox{\scalebox{0.75}{\input{ClosedDiags/WellDef2/downEnter/2}}}}} \figLab{c}
  }
\subfig{0.5}{
    \ensuremath{\vcenter{\hbox{\scalebox{0.75}{\input{ClosedDiags/WellDef2/downEnter/3}}}}} \figLab{d}
  }
  \end{figure}

The decomposition of the left crossing cup is given by the sequence of diagrams below, where $(a)\cong(b)$ by the merge equation, $(b)\cong(c)$ by the naturality in \autoref{fig:derNaturalities}, and $(c)\cong(d)$ by the yanking equations. Hence the left cup crossing decomposes as the following
  \[
  X\tensor \hom{X\tensor Y, Z} \xrightarrow{\eta^Y} \hom{Y,Y\tensor X\tensor \hom{X\tensor Y, Z}} \xrightarrow{\hom{Y,\epsilon^{X\tensor Y}}} \hom{Y,Z}.
  \]
    \begin{figure}[H]
  \subfig{0.25}{
    \ensuremath{\vcenter{\hbox{\scalebox{0.75}{\input{ClosedDiags/NormalGen/crossCup.tex}}}}}
 \figLab{a}
  }
    \subfig{0.25}{
      \ensuremath{\vcenter{\hbox{\scalebox{0.75}{\input{ClosedDiags/WellDef2/cup/3.tex}}}}} \figLab{b}
    }%
  \subfig{0.25}{
      \ensuremath{\vcenter{\hbox{\scalebox{0.75}{\input{ClosedDiags/WellDef2/cup/2.tex}}}}} \figLab{c}
  }     
  \subfig{0.25}{
      \ensuremath{\vcenter{\hbox{\scalebox{0.75}{\input{ClosedDiags/WellDef/cup/1.tex}}}}} \figLab{d}
  }
  \end{figure}

  The decomposition of the up left exit is given by the sequence of diagrams below, where $(a)\cong(b)$ be the merge equation, and $(b)\cong(c)$ by the yanking equation. Note that this is not a full decomposition: the second left crossing cup is not part of an evaluation map. However, since we have a decomposition for the left crossing cup above, diagram $(c)$ can be further decomposed using that result. Hence the up left exit decomposes as the following:
  \begin{align*}
      \hom{X, \hom{Y\tensor Z,W}\tensor U} &\xrightarrow{\eta^{Y\tensor X}} \hom{X\tensor Y, Y\tensor X \tensor \hom{X, \hom{Y\tensor Z, W}\tensor U}}\\
        &\xrightarrow{\hom{X\tensor Y, Y\tensor\epsilon^X}} \hom{X\tensor Y, Y\tensor \hom{Y\tensor Z, W}\tensor U}\\
        &\xrightarrow{\hom{Y\tensor X,\left(\normalised\Xcup\right)\tensor U}} \hom{X\tensor Y, \hom{Z,W}\tensor U}.
  \end{align*}
    \begin{figure}[H]
    \subfig{0.33}{
        \ensuremath{\vcenter{\hbox{\scalebox{0.75}{\input{ClosedDiags/NormalGen/upOut.tex}}}}} \figLab{a}
    }
  \subfig{0.33}{
      \ensuremath{\vcenter{\hbox{\scalebox{0.75}{\input{ClosedDiags/WellDef2/upLeaveLeft/2.tex}}}}} \figLab{b}
  }      
  \subfig{0.33}{
      \ensuremath{\vcenter{\hbox{\scalebox{0.75}{\input{ClosedDiags/WellDef2/upLeaveLeft/1.tex}}}}} \figLab{c}
  }
  \end{figure}

    The decomposition of the rightward cap is given by the two diagrams below, where $(a)\cong(b)$ by the equality in \autoref{fig:derCapRel}. Note again that this is not a full decomposition, here we use the decomposition of up left exit. Hence the rightward cap decomposes as the following: 
    \begin{align*}
      \hom{X,Z} \xrightarrow{\hom{X, \eta_I^Y}\tensor Z} \hom{X,\hom{Y,Y}\tensor Z} &\xrightarrow{(\normalised\uXout)} \hom{X\tensor Y, \hom{I,Y}\tensor Z}\\
        &\xrightarrow{\hom{X\tensor Y, \epsilon^I \tensor Z}} \hom{X\tensor Y, Y\tensor Z}.
    \end{align*}
      \begin{figure}[H]
  \subfig{0.5}{
    \ensuremath{\vcenter{\hbox{\scalebox{0.75}{\input{ClosedDiags/NormalGen/cap.tex}}}}} \figLab{a}
  }      
  \subfig{0.5}{
    \ensuremath{\vcenter{\hbox{\scalebox{0.75}{\input{ClosedDiags/DecompProofs/cap/1.tex}}}}}\figLab{b}
  }
  \end{figure}

  The decomposition of the up left entry is given by the sequence of diagrams below, where $(a)\cong(b)$ by the merge equation, $(b)\cong(c)$ by \autoref{lemma:capInsidetoOutside} and $(c)\cong(d)$ by the crossing equations. Once more, this is not a full decomposition, here we use the decomposition of up left exit. Hence the up left entry decomposes as the following:  
  \begin{align*}
    \hom{X\tensor Y, \hom{Z,W}} &\xrightarrow{\eta^X} \hom{X, X\tensor \hom{X\tensor Y, \hom{Z,W}}}\\
    & \xrightarrow{\hom{X,\normalised\Xcup}} \hom{X,\hom{Y,\hom{Z,W}}}\\
    &\xrightarrow{\hom{X, \normalised\uXout }} \hom{X,\hom{Y\tensor Z, \hom{I,W}}}\\
    &\xrightarrow{\hom{X,\hom{Y\tensor Z, \epsilon^I_W}}}  \hom{X,\hom{Y\tensor Z, W}}.
  \end{align*}
    \begin{figure}[H]
    \subfig{0.23}{
        \ensuremath{\vcenter{\hbox{\scalebox{0.7}{\input{ClosedDiags/NormalGen/upIn.tex}}}}}  \figLab{a}
    }
  \subfig{0.24}{
      \ensuremath{\vcenter{\hbox{\scalebox{0.7}{\input{ClosedDiags/WellDef2/upEnter/3.tex}}}}} \figLab{b}
  }
  \subfig{0.28}{
      \ensuremath{\vcenter{\hbox{\scalebox{0.7}{\input{ClosedDiags/WellDef2/upEnter/2a.tex}}}}} \figLab{c}
  }
  \subfig{0.25}{
      \ensuremath{\vcenter{\hbox{\scalebox{0.7}{\input{ClosedDiags/WellDef/upEnter/1.tex}}}}} \figLab{d}
  }
  \end{figure}

  The decomposition of the rightward cup is given by the equality of the diagrams below, where $(a)\cong(b)$ by \autoref{lemma:bigCupSplit}. Hence the rightward cup decomposes as the following:
  \begin{align*}
    \hom{X,Y\tensor Z} \tensor \hom{Z\tensor W,U} &\xrightarrow{\eta^I}
        \hom{I,\hom{X, Y\tensor Z}\tensor \hom{Z\tensor W, U}}\\
        &\xrightarrow{\normalised\uXout }\hom{X, \hom{I,Y\tensor Z}\tensor \hom{Z\tensor W,U}}\\
        &\xrightarrow{\hom{X, \epsilon^I\tensor \hom{Z\tensor W,U}}}\hom{X, Y\tensor Z\tensor \hom{Z\tensor W, U}}\\
        &\xrightarrow{\hom{X,Y\tensor \normalised\Xcup}}\hom{X,Y\tensor \hom{W,U}}\\
        &\xrightarrow{\normalised\uXout}\hom{X\tensor W,Y\tensor \hom{I,U}}\\
        &\xrightarrow{\hom{X\tensor W, \epsilon^I\tensor Y}} \hom{X\tensor W, U\tensor Y}.
  \end{align*}
     \begin{figure}[H]
    \subfig{0.5}{
  \ensuremath{\vcenter{\hbox{\scalebox{0.75}{\input{ClosedDiags/NormalGen/cup}}}}} \figLab{a}
    }
  \subfig{0.5}{
       \ensuremath{\vcenter{\hbox{\scalebox{0.75}{\input{ClosedDiags/BigCupSplit/1N}}}}} \figLab{b}
  }
  \end{figure}

\end{proof}

In order to state the completeness theorem it is necessary to introduce the notion of \emph{interpretation} of a closed monoidal signature into a closed symmetric monoidal category.

\begin{definition}\label{def:interpretation}
    An interpretation $\interpretation{I} = (\alpha_\sort, \alpha_\sign)$ of a closed monoidal signature $(\sort, \sign, \arity, \coarity)$ in a closed symmetric monoidal category $\cate{C}$, consists of a pair of functions $\alpha_\sort \colon \sort \to \objects{\C}$ and $\alpha_\sign \colon \sign \to \arrows{\C}$ such that, for all $f \in \sign$, $\alpha_\sign(f) \colon \alpha^\sharp_\sort(\arity(f)) \to \alpha^\sharp_\sort(\coarity(f))$ is a morphism of $\C$, where $\alpha^\sharp_\sort \colon \sort^\sharp \to \objects{\C}$ is the inductive extension of $\alpha_\sort$.

    A closed symmetric monoidal category $\C$ is \emph{freely generated} by $\sign$ if, for all closed symmetric monoidal categories $\cate{D}$ and interpretations $\interpretation{I}$ of $\sign$ in $\cate{D}$, there is a \emph{unique} closed symmetric monoidal functor $\semFunct{-}_\interpretation{I} \colon \C \to \cate{D}$ extending $\interpretation{I}$, i.e. $\semFunct{f}_\interpretation{I} = \alpha_\sign(f)$ for every $f \in \sign$.
\end{definition}

\begin{theorem}\label{thm:freeness}
    $\C_\sign$ is -- up to equivalence -- the free closed symmetric monoidal category generated by the closed monoidal signature $(\sort, \sign)$.
\end{theorem}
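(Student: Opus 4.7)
The plan is to reduce the problem in two steps before carrying out the construction. First, by \autoref{thm:normalisation equivalence} we have $\freeCat\simeq \freeNormalCat$ as closed symmetric monoidal categories, and since freeness is a property preserved by closed symmetric monoidal equivalence, it suffices to prove that $\freeNormalCat$ is the free closed symmetric monoidal category on $\sign$. Second, by \autoref{lemma:decomposition} we have $\freeNormalCat = \minimal{\freeCat}$, so every morphism of $\freeNormalCat$ can be written as a composition of tensors, internal homs, symmetries, generators $f\in\sign$, and the evaluation/coevaluation diagrams $\eta$ and $\epsilon$ introduced in the proof of \autoref{thm:string diagrams are closed monoidal}.

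Given an interpretation $\interpretation{I}=(\alpha_\sort,\alpha_\sign)$ of $\sign$ in a closed symmetric monoidal category $\cate{D}$, I would define the functor $\semFunct{-}_\interpretation{I}\colon \freeCat\to\cate{D}$ inductively. On objects, set $\semFunct{A}=\alpha_\sort(A)$ for $A\in\sort$, $\semFunct{X\tensor Y}=\semFunct{X}\tensor\semFunct{Y}$, and $\semFunct{[\mathbf{X}]Y}$ as the corresponding iterated internal hom in $\cate{D}$, using the contravariant action of $[-,-]$ on the dualised entries of $\mathbf X$. On morphisms, send each $f\in\sign$ to $\alpha_\sign(f)$, send each co/evaluation diagram to the corresponding co/evaluation of $\cate{D}$, send symmetries to symmetries, and send tensors, composites and bracketings of diagrams to the corresponding operations in $\cate{D}$. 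The functoriality, monoidality, and preservation of the closed structure are then immediate by construction.

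The main obstacle is well-definedness, i.e. showing that the equations listed in \autoref{fig:equations} (crossings, naturality, yanking, bubble pop/merge/yank) are sent to actual equalities in $\cate{D}$. The idea is that the decomposition in \autoref{lemma:decomposition} translates each equation about our generators into an equation about composites of $\eta$, $\epsilon$, symmetries, and tensors/brackets. These all hold in any closed symmetric monoidal category: the yanking equations are precisely the triangle identities for the tensor-hom adjunction, the naturality equations are naturality of $\eta$ and $\epsilon$, the crossing equations follow from naturality of the symmetry, and the bubble pop and merge equations follow from the isomorphism $I\cong [I,I]$ and the compatibility of $\eta,\epsilon$ with unitors. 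Once this is checked on the generators, it extends to the congruence $\cong$ by induction on the term structure.

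Finally, uniqueness follows from the same decomposition argument: any closed symmetric monoidal functor $F\colon \freeCat\to\cate{D}$ extending $\interpretation{I}$ must agree with $\semFunct{-}_\interpretation{I}$ on objects (by strict preservation of $\tensor$ and $[-,-]$ up to the coherent isomorphisms), on the generators of $\sign$ (by the extension condition $F(f)=\alpha_\sign(f)$), and on $\eta_X$ and $\epsilon_X$ (since any closed monoidal functor preserves the unit and counit of the tensor-hom adjunction up to canonical iso). By \autoref{lemma:decomposition}, these data determine $F$ on every morphism of $\freeNormalCat$, and by \autoref{thm:normalisation equivalence} they determine $F$ on all of $\freeCat$ up to the normalising natural isomorphism, giving uniqueness in the appropriate up-to-equivalence sense.
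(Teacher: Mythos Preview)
Your proposal is correct and follows essentially the same approach as the paper: both reduce via \autoref{thm:normalisation equivalence} and \autoref{lemma:decomposition} to $\minimal{\freeCat}$, define $\semFunct{-}_\interpretation{I}$ by sending generators to $\alpha_\sign(f)$ and the co/evaluation diagrams to the unit/counit of $\cate{D}$, and argue well-definedness and uniqueness from the closed symmetric monoidal structure. Your breakdown of which axioms in $\cate{D}$ validate which equations (triangle identities for yanking, naturality of $\eta,\epsilon$ for the naturality equations, etc.) is slightly more explicit than the paper's, which only spells out the bubble pop and one crossing case before declaring the rest similar.
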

\begin{proof}
    By normalisation (\autoref{thm:normalisation equivalence}) and decomposition (\autoref{lemma:decomposition}), we can safely work with the category $\minimal{\freeCat}$ and define a closed symmetric monoidal functor $\semFunct{-}_\interpretation{I} \colon \minimal{\freeCat} \to \cate{D}$, for any interpretation $\interpretation{I} = (\alpha_\sort, \alpha_\sign)$ and any closed symmetric monoidal category $\cate{D}$.

    Recall that since objects of $\freeNormalCat$, and thus of $\minimal{\freeCat}$, have a certain shape. In particular, now the formal dual of an object may only appear on the left part of a bracketed string. Thus, we define the interpretation functor on objects inductively as follows:
    \begin{center}
    \begin{tabular}{r c l r c l}
        $\semFunct{I}_\interpretation{I}$ & $\coloneqq$  & $1$ \qquad&\qquad $\semFunct{A}_\interpretation{I}$ & $\coloneqq$ & $\alpha_\sort(A)$\\
        $\semFunct{XY}_\interpretation{I}$ & $\coloneqq$ & $\semFunct{X}_\interpretation{I} \otimes \semFunct{Y}_\interpretation{I}$
    &
    $\semFunct{[X^\ast Y]}_\interpretation{I}$ & $\coloneqq$ & $[\semFunct{X}_\interpretation{I}, \semFunct{Y}_\interpretation{I}]$ 
    \end{tabular}
\end{center}

    To define $\semFunct{-}_\interpretation{I}$ on morphisms, it suffices to say where to map the generators in $\sign$ and the co/evaluation diagrams. As expected, we map the generators according to the interpretation $\interpretation{I}$ and the co/evaluation diagrams to the co/unit pair of $\cate{D}$.
    \begin{center}
        \scalebox{0.8}{$
        \begin{array}{c@{\qquad\qquad}c@{\qquad\qquad}c}
            \semFunct{\ensuremath{\vcenter{\hbox{\scalebox{0.75}{\includegraphics{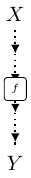}
}}}}}_\interpretation{I} \coloneqq \alpha_\sign(f)
            &
            \semFunct{\ensuremath{\vcenter{\hbox{\scalebox{0.75}{}}}}}_\interpretation{I} \coloneqq \eta_X
            &
            \semFunct{\ensuremath{\vcenter{\hbox{\scalebox{0.75}{}}}}}_\interpretation{I} \coloneqq \epsilon_X
        \end{array}
    $}
    \end{center}
    To establish soundness of the interpretation we need to ensure that $\semFunct{-}_\interpretation{I}$ is well-defined. Specifically we need to check that, the (normalised version of the) equations in \autoref{fig:crossings,fig:naturality,fig:yanking,fig:bubble} hold after applying the interpretation assignment.

    Take, for example, the bubble pop equation we have that 
    \[
    \semFunct{\ensuremath{\vcenter{\hbox{\scalebox{0.75}{\includegraphics{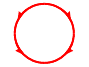}
}}}}}_\interpretation{I} = \epsilon_1 \circ \eta_1 = \id_1 =
    \semFunct{\ensuremath{\vcenter{\hbox{\scalebox{0.75}{}}}}}_\interpretation{I}
    \]
    by the triangular identities for $\eta$ and $\epsilon$ in $\cate{D}$. Similarly, for the crossing identity we have that, by definition, the interpretation on the left is equal to the interpretation on the right below.
    \[
        \semFunct{\ensuremath{\vcenter{\hbox{\scalebox{0.75}{\includegraphics{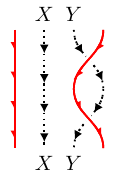}
}}}}}_\interpretation{I} 
        =
        \semFunct{\ensuremath{\vcenter{\hbox{\scalebox{0.75}{\includegraphics{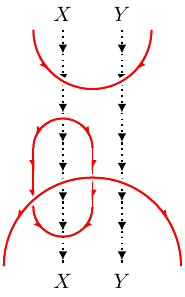}
}}}}}_\interpretation{I} 
    \]
    Note that this interpretation is given by the top right path in the diagram below. 
    \tikzexternaldisable
    \[\begin{tikzcd}[column sep=40pt]
        \hom{1,X\tensor Y}
        \arrow[r, "\epsilon^1_{X\tensor Y}"]
        \arrow[rd, equal]
            & X\tensor Y
            \arrow[r, "\eta^1_X \tensor Y"]
            \arrow[d, "\eta^1_{X\tensor Y}"]
                &\hom{1,X}\tensor Y
                \arrow[d, "\eta^1_{\hom{1,X}\tensor Y}"]\\
        \blank 
            & \hom{1,X\tensor Y}
            \arrow[r, "\hom{1, \eta^1_{X}\tensor Y}"]
            \arrow[rd, equal]
                & \hom{1,\hom{1,X}\tensor Y}
                \arrow[d, "\hom{1,\epsilon^1_X\tensor Y}"]\\
        \blank
            &\blank
                &\hom{1,X\tensor Y}
    \end{tikzcd}
    \]
    \tikzexternalenable    
    This diagram commutes by naturality and the triangle identities. Thus we have that the interpretation above is, in fact, equal to the interpretation below, as expected.
    \[ \semFunct{\ensuremath{\vcenter{\hbox{\scalebox{0.75}{\includegraphics{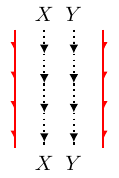}
}}}}}_\interpretation{I}
    \]
    The other equalities follow similarly. To conclude, observe that $\semFunct{-}_\interpretation{I}$ is a functor of closed symmetric monoidal categories since, by definition, it preserves the tensor, closed structure, symmetries and the co/evaluation morphisms. Moreover, observe that any other closed symmetric monoidal functor $F \colon \minimal{\freeCat} \to \cate{D}$ extending $\interpretation{I}$ must also preserve the co/evaluation morphisms and hence $\semFunct{-}_\interpretation{I}$ is the only such functor.
\end{proof}

\section{Conclusions and Future Work}
We introduced a string diagram formalism which, by virtue of Theorem~\ref{thm:freeness}, serves as a universal language for closed symmetric monoidal categories.

Compared to other approaches in the literature, our contribution offers (1) a completeness result and (2) a fully diagrammatic language that avoids mixed algebraic notation and requires no additional machinery beyond strings.

Given the ubiquity of closed symmetric monoidal categories and their cartesian variants, we believe our formalism constitutes a valuable tool for reasoning across a range of domains, from pure mathematics to logic and theoretical computer science.

As future work, we plan to investigate the hypergraph interpretation of our string diagrams, following the approach of~\cite{DBLP:journals/jacm/BonchiGKSZ22}. This characterization, beyond offering an alternative definition, also has computational significance: hypergraphs can serve as data structures for potential implementations in, for example, proof assistants or compiler optimizations.

Some inspiration may also be drawn from the bigraphical approach taken in~\cite{garner2008graphical,garner2009variable}.

For the non-symmetric setting, variants of these diagrams can be used in the study of biclosed bicategories, such as bicategories of relations, or profunctors. In particular, the recent trend of adopting bicategorical models as proof-relevant denotational semantics for programming languages~\cite{clairambault2023cartesian,fiore2008cartesian} and linear logic~\cite{fiore2024stabilized, mellies2019template} opens up new possibilities for applying our diagrams in this context.

In topology there are studies of fixed point theorems using traces in bicategories~\cite{ponto2008fixedpointtheorytrace,PONTO2014248}. If a compact closed bicategory is also biclosed compositionally, then the biclosed structure can be used to define a \emph{co}trace~\cite{barhite2024bicategoricaltracescotraces,reader2024scalarenrichmentcotracesbicategories}. A wide variety of mathematical phenomena can be expressed as a cotrace -- including ends and Hochschild cohomology -- and we believe that results about these phenomena can now be expressed and derived purely graphically.

\bibliography{main}%

\begin{thebibliography}{10}

\bibitem{Abramsky2004}
Samson Abramsky and Bob Coecke.
\newblock A categorical semantics of quantum protocols.
\newblock In {\em 19th {IEEE} Symposium on Logic in Computer Science {(LICS}
  2004), 14-17 July 2004, Turku, Finland, Proceedings}, pages 415--425, 2004.

\bibitem{acclavio2019proof}
Matteo Acclavio.
\newblock Proof diagrams for multiplicative linear logic: Syntax and semantics.
\newblock {\em Journal of Automated Reasoning}, 63(4):911--939, 2019.

\bibitem{DBLP:conf/fscd/Alvarez-Picallo22}
Mario Alvarez{-}Picallo, Dan~R. Ghica, David Sprunger, and Fabio Zanasi.
\newblock Rewriting for monoidal closed categories.
\newblock In Amy~P. Felty, editor, {\em 7th International Conference on Formal
  Structures for Computation and Deduction, {FSCD} 2022, August 2-5, 2022,
  Haifa, Israel}, volume 228 of {\em LIPIcs}, pages 29:1--29:20. Schloss
  Dagstuhl - Leibniz-Zentrum f{\"{u}}r Informatik, 2022.

\bibitem{BaezRosetta}
John~C. Baez and Mike Stay.
\newblock Physics, topology, logic and computation: A rosetta stone.
\newblock {\em Lecture Notes in Physics}, 813:95--174, 2011.

\bibitem{barhite2024bicategoricaltracescotraces}
Justin Barhite.
\newblock Bicategorical traces and cotraces, 2024.

\bibitem{BenabouBicategories}
Jean B\'{e}nabou.
\newblock Introduction to bicategories.
\newblock In {\em Reports of the Midwest Category Seminar}, volume~47 of {\em
  Lecture Notes in Mathematics}, pages 1--77. Springer Berlin Heidelberg, 1967.

\bibitem{DBLP:journals/corr/abs-2106-07763}
Guillaume Boisseau and Pawel Sobocinski.
\newblock String diagrammatic electrical circuit theory.
\newblock In Kohei Kishida, editor, {\em Proceedings of the Fourth
  International Conference on Applied Category Theory, {ACT} 2021, Cambridge,
  United Kingdom, 12-16th July 2021}, volume 372 of {\em {EPTCS}}, pages
  178--191, 2021.

\bibitem{bonchi2025diagrammatic}
Filippo Bonchi, Alessandro Di~Giorgio, and Elena Di~Lavore.
\newblock A diagrammatic algebra for program logics.
\newblock In {\em International Conference on Foundations of Software Science
  and Computation Structures}, pages 308--330. Springer Nature Switzerland
  Cham, 2025.

\bibitem{bonchi2024diagrammatic}
Filippo Bonchi, Alessandro Di~Giorgio, Nathan Haydon, and Pawel Sobocinski.
\newblock Diagrammatic algebra of first order logic.
\newblock In {\em Proceedings of the 39th Annual ACM/IEEE Symposium on Logic in
  Computer Science}, pages 1--15, 2024.

\bibitem{bonchi2023deconstructing}
Filippo Bonchi, Alessandro Di~Giorgio, and Alessio Santamaria.
\newblock Deconstructing the calculus of relations with tape diagrams.
\newblock {\em Proceedings of the ACM on Programming Languages},
  7(POPL):1864--1894, 2023.

\bibitem{DBLP:journals/jacm/BonchiGKSZ22}
Filippo Bonchi, Fabio Gadducci, Aleks Kissinger, Pawel Sobocinski, and Fabio
  Zanasi.
\newblock String diagram rewrite theory {I:} rewriting with frobenius
  structure.
\newblock {\em J. {ACM}}, 69(2):14:1--14:58, 2022.

\bibitem{GCQ}
Filippo Bonchi, Jens Seeber, and Pawel Sobocinski.
\newblock {Graphical Conjunctive Queries}.
\newblock In Dan Ghica and Achim Jung, editors, {\em 27th EACSL Annual
  Conference on Computer Science Logic (CSL 2018)}, volume 119 of {\em Leibniz
  International Proceedings in Informatics (LIPIcs)}, pages 13:1--13:23,
  Dagstuhl, Germany, 2018. Schloss Dagstuhl--Leibniz-Zentrum fuer Informatik.

\bibitem{Bonchi2014b}
Filippo Bonchi, Pawe{\l} Soboci{\'n}ski, and Fabio Zanasi.
\newblock A categorical semantics of signal flow graphs.
\newblock In {\em Proceedings of the 25th International Conference on
  Concurrency Theory (CONCUR)}, pages 435--450. Springer, 2014.

\bibitem{clairambault2023cartesian}
Pierre Clairambault and Simon Forest.
\newblock The cartesian closed bicategory of thin spans of groupoids.
\newblock In {\em 2023 38th Annual ACM/IEEE Symposium on Logic in Computer
  Science (LICS)}, pages 1--13. IEEE, 2023.

\bibitem{Coecke2017}
Bob Coecke and Aleks Kissinger.
\newblock {\em Picturing Quantum Processes: A First Course in Quantum Theory
  and Diagrammatic Reasoning}.
\newblock Cambridge University Press, 2017.

\bibitem{comfort2020sheet}
Cole Comfort, Antonin Delpeuch, and Jules Hedges.
\newblock Sheet diagrams for bimonoidal categories.
\newblock {\em arXiv preprint arXiv:2010.13361}, 2020.

\bibitem{eilenberg2015foundations}
Samuel Eilenberg and Norman Steenrod.
\newblock {\em Foundations of algebraic topology}.
\newblock Princeton University Press, 2015.

\bibitem{fiore2024stabilized}
Marcelo Fiore, Zeinab Galal, and Hugo Paquet.
\newblock Stabilized profunctors and stable species of structures.
\newblock {\em Logical Methods in Computer Science}, 20, 2024.

\bibitem{fiore2008cartesian}
Marcelo Fiore, Nicola Gambino, Martin Hyland, and Glynn Winskel.
\newblock The cartesian closed bicategory of generalised species of structures.
\newblock {\em Journal of the London Mathematical Society}, 77(1):203--220,
  2008.

\bibitem{fox1976coalgebras}
Thomas Fox.
\newblock Coalgebras and cartesian categories.
\newblock {\em Communications in Algebra}, 4(7):665--667, 1976.

\bibitem{freyd1989braided}
Peter~J Freyd and David~N Yetter.
\newblock Braided compact closed categories with applications to low
  dimensional topology.
\newblock {\em Advances in mathematics}, 77(2):156--182, 1989.

\bibitem{gale2023categorical}
Ella Gale, Leo Lobski, and Fabio Zanasi.
\newblock A categorical approach to synthetic chemistry.
\newblock In {\em International Colloquium on Theoretical Aspects of
  Computing}, pages 276--294. Springer, 2023.

\bibitem{garner2008graphical}
Richard Garner, Tom Hirschowitz, and Aur{\'e}lien Pardon.
\newblock Graphical presentations of symmetric monoidal closed theories.
\newblock {\em arXiv preprint arXiv:0810.4420}, 2008.

\bibitem{garner2009variable}
Richard Garner, Tom Hirschowitz, and Aur{\'e}lien Pardon.
\newblock Variable binding, symmetric monoidal closed theories, and bigraphs.
\newblock In {\em International Conference on Concurrency Theory}, pages
  321--337. Springer, 2009.

\bibitem{ghica2023string}
Dan Ghica and Fabio Zanasi.
\newblock String diagrams for $\lambda$-calculi and functional computation.
\newblock {\em arXiv preprint arXiv:2305.18945}, 2023.

\bibitem{Ghica2016}
Dan~R. Ghica and Achim Jung.
\newblock Categorical semantics of digital circuits.
\newblock In {\em 2016 Formal Methods in Computer-Aided Design (FMCAD)}, pages
  41--48, 2016.

\bibitem{hughes2012simple}
Dominic~JD Hughes.
\newblock Simple free star-autonomous categories and full coherence.
\newblock {\em Journal of Pure and Applied Algebra}, 216(11):2386--2410, 2012.

\bibitem{jeffrey1997premonoidal}
Alan Jeffrey.
\newblock Premonoidal categories and a graphical view of programs.
\newblock {\em Preprint, Dec}, pages 80688--7, 1997.

\bibitem{joyal1991geometry}
Andr{\'e} Joyal and Ross Street.
\newblock The geometry of tensor calculus, {{I}}.
\newblock {\em Advances in Mathematics}, 88(1):55--112, July 1991.

\bibitem{joyal1996traced}
Andr{\'e} Joyal, Ross Street, and Dominic Verity.
\newblock Traced monoidal categories.
\newblock In {\em Mathematical proceedings of the cambridge philosophical
  society}, volume 119, pages 447--468. Cambridge University Press, 1996.

\bibitem{KELLY197197}
G.M. Kelly and S.~Maclane.
\newblock Coherence in closed categories.
\newblock {\em Journal of Pure and Applied Algebra}, 1(1):97--140, 1971.

\bibitem{laplaza_coherence_1972}
Miguel~L. Laplaza.
\newblock Coherence for distributivity.
\newblock In G.~M. Kelly, M.~Laplaza, G.~Lewis, and Saunders Mac~Lane, editors,
  {\em Coherence in {{Categories}}}, Lecture {{Notes}} in {{Mathematics}},
  pages 29--65, {Berlin, Heidelberg}, 1972. {Springer}.

\bibitem{leinster2004higher}
Tom Leinster.
\newblock {\em Higher operads, higher categories}.
\newblock Number 298. Cambridge University Press, 2004.

\bibitem{DBLP:journals/corr/Marsden14}
Dan Marsden.
\newblock Category theory using string diagrams.
\newblock {\em CoRR}, abs/1401.7220, 2014.

\bibitem{MEHATS2007127}
L.~M{\'e}hats and S.~Soloviev.
\newblock Coherence in smccs and equivalences on derivations in imll with unit.
\newblock {\em Annals of Pure and Applied Logic}, 147(3):127--179, 2007.

\bibitem{mellies_functorial_2006}
Paul-Andr{\'e} Melli{\`e}s.
\newblock Functorial {{Boxes}} in {{String Diagrams}}.
\newblock In Zolt{\'a}n {\'E}sik, editor, {\em Computer {{Science Logic}}},
  Lecture {{Notes}} in {{Computer Science}}, pages 1--30, {Berlin, Heidelberg},
  2006. {Springer}.

\bibitem{mellies2019template}
Paul-Andr{\'e} Melli{\`e}s.
\newblock Template games and differential linear logic.
\newblock In {\em 2019 34th Annual ACM/IEEE Symposium on Logic in Computer
  Science (LICS)}, pages 1--13. IEEE, 2019.

\bibitem{nassi1973flowchart}
Ike Nassi and Ben Shneiderman.
\newblock Flowchart techniques for structured programming.
\newblock {\em ACM Sigplan Notices}, 8(8):12--26, 1973.

\bibitem{Penrose-tensornotation}
R.~Penrose.
\newblock {Applications of negative dimension tensors}.
\newblock In D.~J.~A. Welsh, editor, {\em Combinatorial Mathematics and its
  Applications}, pages 221--244. Academic Press, 1971.

\bibitem{peterson1977petri}
James~L Peterson.
\newblock Petri nets.
\newblock {\em ACM Computing Surveys (CSUR)}, 9(3):223--252, 1977.

\bibitem{piedeleu2025complete}
Robin Piedeleu, Mateo Torres-Ruiz, Alexandra Silva, and Fabio Zanasi.
\newblock A complete axiomatisation of equivalence for discrete probabilistic
  programming.
\newblock In {\em European Symposium on Programming}, pages 202--229. Springer,
  2025.

\bibitem{ponto2008fixedpointtheorytrace}
Kate Ponto.
\newblock Fixed point theory and trace for bicategories, 2008.

\bibitem{PONTO2014248}
Kate Ponto and Michael Shulman.
\newblock Traces in symmetric monoidal categories.
\newblock {\em Expositiones Mathematicae}, 32(3):248--273, 2014.

\bibitem{reader2024scalarenrichmentcotracesbicategories}
Callum Reader.
\newblock Scalar enrichment and cotraces in bicategories, 2024.

\bibitem{existentialGraphs}
Don~D. Roberts.
\newblock {\em The Existential Graphs of Charles S. Peirce}.
\newblock De Gruyter Mouton, 1973.

\bibitem{Selinger2009}
P.~Selinger.
\newblock A {{Survey}} of {{Graphical Languages}} for {{Monoidal Categories}}.
\newblock 813:289--355, 2010.

\bibitem{shulman2020autonomous}
Michael Shulman.
\newblock *-autonomous envelopes and conservativity.
\newblock {\em arXiv preprint arXiv:2004.08487}, 2020.

\bibitem{tiurin2025equivalence}
Aleksei Tiurin, Chris Barrett, Dan~R Ghica, and Nick Hu.
\newblock Equivalence hypergraphs: Dpo rewriting for monoidal e-graphs.
\newblock In {\em 2025 40th Annual ACM/IEEE Symposium on Logic in Computer
  Science (LICS)}, pages 209--222. IEEE, 2025.

\bibitem{willerton}
Simon Willerton.
\newblock Surface diagrams for closed monoidal categories.
\newblock Private communication, 2025.

\end{thebibliography}
\appendix
\section{Formal construction of $\freeCat$}\label{app:formal c sigma}

In this section we give a detailed construction of the category $\freeCat$ of string diagrams with brackets. For convenience, we rely on a term-based syntax equivalent to the diagrammatic syntax, presented in Section~\ref{sec:presentation}.

Given a (unbiased) closed monoidal signature $(\sort, \sign)$, we generate terms from the following three layer grammar:
    \[
    \begin{array}{c c c@{\quad \mid \quad}c@{\quad \mid \quad}c@{\quad \mid \quad}c@{\quad \mid \quad}c@{\quad \mid \quad}c@{\quad \mid \quad}c@{\quad \mid \quad}c@{\quad \mid \quad}c}
        \toprule
        \multicolumn{10}{c}{\text{Covariant}}
        \\
        \midrule
        t & \coloneqq & 
        \brCup & \dXin & \dXout & \uXin & \uXout & \Xcup & \rightCap & \rightCup
        \\[5pt]
        & &
        \brCap & \YdXin & \YdXout & \YuXin & \YuXout & \YXcup & \YrightCap &\YrightCup 
        \\[5pt]
        & & \id_X & f & t ; t & t \otimes t & \sigma_{X,Y} & \multicolumn{3}{@{}l}{[\mathbf{t}]}
        \\
        \midrule
        \multicolumn{10}{c}{\text{Contravariant}}
        \\
        \midrule
        t^\ast & \coloneqq & 
        {\flip{\brCup}} & {\flip{\dXin}} & {\flip{\dXout}} & {\flip{\uXin}} & {\flip{\uXout}} & {\flip{\Xcup}} & {\flip{\rightCap}} & {\flip{\rightCup}}
        \\[5pt]
        & & {\XYflip{\brCap}} & {\XYflip{\dXin}} & {\XYflip{\dXout}} & {\XYflip{\uXin}} & {\XYflip{\uXout}} & {\XYflip{\Xcup}} & {\XYflip{\rightCap}} & {\XYflip{\rightCup}}
        \\[5pt]
        & & \id_{X^\ast} & f^\ast & t^\ast ; t^\ast & t^\ast \otimes t^\ast & \sigma_{Y^\ast,X^\ast} & \multicolumn{3}{@{}l}{[\mathbf{t}]^\ast}
        \\
        \midrule
        \multicolumn{10}{c}{\text{Mixed}}
        \\
        \midrule
        \mathbf{t} & \coloneqq & t & t^\ast & \mathbf{t} ; \mathbf{t} & \mathbf{t} \otimes \mathbf{t} & \sigma_{X,Y^\ast} & \multicolumn{3}{@{}l}{\sigma_{X^\ast,Y}}
        \\
        \bottomrule
    \end{array}
    \]
    where $f \colon X \to Y$ is a symbol in $\Sigma$ and $f^\ast \colon Y^\ast \to X^\ast$ is its corresponding dual.

    \medskip

    As conveyed by the diagrammatic representation of the grammar above (see Section~\ref{sec:presentation}), the contravariant terms are the covariant ones, reflected in the horizontal axis. This correspondence is made formal by the function $(-)^\ast$ that maps each term $t \colon X \to Y$ to its dual $t^\ast \colon Y^\ast \to X^\ast$ and vice versa, defined inductively as follows:
    \begin{itemize}
        \item on objects:
        \begin{itemize}
        \item for each $A \in \sort$, $(A)^\ast \coloneqq A^\ast$ and $(A^\ast)^\ast \coloneqq A$;
        \item for the empty word $I$, $(I)^\ast \coloneqq I$;
        \item for concatenation $(X\mathbf{Y})^\ast \coloneqq (\mathbf{Y})^\ast (X)^\ast$ and $(\mathbf{Y}X)^\ast \coloneqq (X)^\ast(\mathbf{Y})^\ast$; 
        \item for bracketing $([\mathbf{X}])^\ast \coloneqq [(\mathbf{X})^\ast]^\ast$ and $([\mathbf{X}]^\ast)^\ast \coloneqq [(\mathbf{X})^\ast]$.
        \end{itemize}
        \item on terms:
        \begin{itemize}
        \item for each covariant bracket symbol $s$, $(s)^\ast \coloneqq s^\ast$, where $s^\ast$ is the corresponding contravariant symbol. e.g. $(\brCup)^\ast \coloneqq \flip{\brCup}$;
        \item for each contravariant bracket symbol $s^\ast$, $(s^\ast)^\ast \coloneqq s$, where $s$ is the corresponding covariant symbol; e.g. $(\flip{\brCup})^\ast \coloneqq \brCup$;
        \item for symbols $f \in \Sigma$, $(f)^\ast \coloneqq f^\ast$ and $(f^\ast)^\ast \coloneqq f$;
        \item for identities $(\id_X)^\ast \coloneqq \id_{X^\ast}$ and $(\id_{X^\ast})^\ast \coloneqq \id_X$;
        \item for symmetries $(\sigma_{X,Y})^\ast \coloneqq \sigma_{Y^\ast, X^\ast}$, $(\sigma_{Y^\ast, X^\ast})^\ast \coloneqq \sigma_{X,Y}$, $(\sigma_{X,Y^\ast})^\ast \coloneqq \sigma_{Y, X^\ast}$ and $(\sigma_{X^\ast,Y})^\ast \coloneqq \sigma_{Y^\ast, X}$;
        \item for compositions $(t_1 ; t_2)^\ast \coloneqq (t_2)^\ast ; (t_1)^\ast$, $(t_1 \otimes t_2)^\ast \coloneqq (t_2)^\ast \otimes (t_1)^\ast$;
        \item and for bracketings $([\mathbf{t}])^\ast \coloneqq [(\mathbf{t})^\ast]^\ast$ and $([\mathbf{t}]^\ast)^\ast \coloneqq [(\mathbf{t})^\ast]$.
        \end{itemize}
    \end{itemize}

    Amongst the terms generated by the grammar above, we consider only those that can be typed according to the following typing rules:

    \noindent\scalebox{0.9}{$\renewcommand{\arraystretch}{1.5}
        \begin{array}{r@{\;}c@{\;}l r@{\;}c@{\;}l}
            \brCup &\colon& [I] \to I 
            & 
            \dXin_{X,\mathbf{Y}} &\colon& X [\mathbf{Y}] \to [X\mathbf{Y}]
            \\
            \dXout_{X,Y} &\colon& [XY] \to X[Y]
            &
            \uXin_{X^\ast,Y^\ast,\mathbf{Z},W^\ast} &\colon& [X^\ast Y^\ast [\mathbf{Z}] W^\ast] \to [X^\ast [Y^\ast \mathbf{Z}] W^\ast]
            \\
            \uXout_{\mathbf{X}, Y^\ast, \mathbf{Z}, \mathbf{W}} &\colon& [\mathbf{X} [Y^\ast \mathbf{Z}] \mathbf{W}] \to [\mathbf{X} Y^\ast [\mathbf{Z}] \mathbf{W}]
            & 
            \Xcup_{X,\mathbf{Y}} &\colon& X [X^\ast \mathbf{Y}] \to [\mathbf{Y}]
            \\ 
            \rightCap_{\mathbf{X}, Y, \mathbf{Z}} &\colon& [\mathbf{X} \mathbf{Z}] \to [\mathbf{X} Y^\ast Y \mathbf{Z}]
            & 
            \rightCup_{\mathbf{X}, Y, \mathbf{Z}} &\colon& [\mathbf{X} Y] [Y^\ast \mathbf{Z}] \to [\mathbf{X} \mathbf{Z}]
            \\
            \brCap &\colon& I \to [I]
            & 
            \YdXin_{\mathbf{Y}, X} &\colon& [\mathbf{Y}] X \to [\mathbf{Y} X]
            \\
            \YdXout_{Y,X} &\colon& [YX] \to [Y]X
            & 
            \YuXin_{W^\ast, \mathbf{Z}, Y^\ast, X^\ast} &\colon& [W^\ast [\mathbf{Z}] Y^\ast X^\ast] \to [W^\ast [\mathbf{Z} Y^\ast] X^\ast]
            \\
            \YuXout_{\mathbf{W}, \mathbf{Z}, Y^\ast, \mathbf{X}} &\colon& [\mathbf{W} [\mathbf{Z} Y^\ast] \mathbf{X}] \to [\mathbf{W} [\mathbf{Z}] Y^\ast \mathbf{X}]
            & 
            \YXcup_{\mathbf{Y}, X} &\colon& [\mathbf{Y} X^\ast] X \to [\mathbf{Y}]
            \\
            \YrightCap_{\mathbf{Z}, Y, \mathbf{X}} &\colon& [\mathbf{Z} \mathbf{X}] \to [\mathbf{Z} Y Y^\ast \mathbf{X}]
            &
            \YrightCup_{\mathbf{Z}, Y, \mathbf{X}} &\colon& [\mathbf{Z} Y^\ast] [Y \mathbf{X}] \to [\mathbf{Z} \mathbf{X}]
            \\
            \multicolumn{6}{@{}l}{
                \renewcommand{\arraystretch}{1.5}
                \begin{array}{l@{\qquad}l@{\qquad}l@{\qquad}l}
                    \multicolumn{4}{c}{
                        \begin{array}{l@{\qquad\qquad}l}
                            \id_X \colon X \to X
                            &
                            f \colon \arity(f) \to \coarity(f)
                        \end{array}
                    }
                    \\
                    \multicolumn{4}{c}{
                        \begin{array}{l@{\qquad\qquad}l@{\qquad\qquad}l}
                            \sigma_{X,Y} \colon XY \to YX
                            &
                            \sigma_{X,Y^\ast} \colon XY^\ast \to Y^\ast X
                            &
                            \sigma_{X^\ast,Y} \colon X^\ast Y \to Y X^\ast
                        \end{array}
                    }
                    \\
                    \inferrule{t \colon X \to Y}{(t)^\ast \colon (Y)^\ast \to (X)^\ast}
                    &
                    \inferrule{t_1 \colon X \to Y \quad t_2 \colon Y \to Z}{t_1 ; t_2 \colon X \to Z}
                    &
                    \inferrule{t_1 \colon X \to Y \quad t_2 \colon Z \to W}{t_1 \otimes t_2 \colon XZ \to YW}
                    &
                    \inferrule{\mathbf{t} \colon \mathbf{X} \to \mathbf{Y}}{[\mathbf{t}] \colon [\mathbf{X}] \to [\mathbf{Y}]}
                \end{array}
            }
        \end{array}$}

        \medskip

Terms are then quotiented by the equations of symmetric monoidal categories and those governing the additional structure of bracketing strings:

\medskip

\begin{center}
\noindent\scalebox{0.9}{$
    \begin{array}{@{}cc}
        \toprule
        \multicolumn{2}{c}{\text{Symmetric Monoidal Categories}}
        \\
        \midrule
        \id_X ; t = t = t ; \id_X
        &
        (t_1 ; t_2) ; t_3 = t_1 ; (t_2 ; t_3)
        \\
        \id_I \otimes t = t = t \otimes \id_I
        &
        (t_1 \otimes t_2) \otimes t_3 = t_1 \otimes (t_2 \otimes t_3)
        \\
        \multicolumn{2}{c}{
            (t_1 \otimes t_2) ; (t_3 \otimes t_4) = (t_1 ; t_3) \otimes (t_2 ; t_4)
        }
        \\
        \sigma_{XY} ; \sigma_{YX} = \id_{XY}
        &
        (t \otimes id_Z) ; \sigma_{YZ} = \sigma_{XZ} ; (\id_Z \otimes t)
        \\
        \multicolumn{2}{c}{
            (\mathbf{t}_1 \otimes \mathbf{t}_2) ; (\mathbf{t}_3 \otimes \mathbf{t}_4) = (\mathbf{t}_1 ; \mathbf{t}_3) \otimes (\mathbf{t}_2 ; \mathbf{t}_4)
        }
        \\
        \sigma_{XY^\ast} ; \sigma_{Y^\ast X} = \id_{XY^\ast}
        &
        \sigma_{X^\ast Y} ; \sigma_{YX^\ast} = \id_{X^\ast Y}
        \\
        (t \otimes id_{Z^\ast}) ; \sigma_{Y{Z^\ast}} = \sigma_{X{Z^\ast}} ; (\id_{Z^\ast} \otimes t)
        &
        (t^\ast \otimes id_Z) ; \sigma_{{Y^\ast}Z} = \sigma_{{X^\ast}Z} ; (\id_Z \otimes t^\ast)
        \\
        \midrule
        \multicolumn{2}{c}{\text{Crossing}}
        \\
        \midrule
        (\id_X \otimes \sigma_{\mathbf{Y} X^\ast}) ; \Xcup_{X, \mathbf{Y}} = \sigma_{X[\mathbf{Y} X^\ast]} ; \YXcup_{\mathbf{Y},X}
        &
        \rightCap_{\mathbf{X},Y,\mathbf{Z}} ; [\id_{\mathbf{X}} \otimes \sigma_{Y^\ast, Y} \otimes \id_{\mathbf{Z}}] = \YrightCap_{\mathbf{X}, Y, \mathbf{Z}}
        \\
        \dXin_{X,Y} ; \dXout_{X,Y} = \id_{X[Y]}
        &
        \dXout_{X,Y} ; \dXin_{X,Y} = \id_{[XY]}
        \\
        \uXin_{X^\ast, Y^\ast, \mathbf{Z}, W^\ast} ; \uXout_{X^\ast, Y^\ast, \mathbf{Z}, W^\ast} = \id_{[X^\ast Y^\ast [\mathbf{Z}] W^\ast]}
        &
        \uXout_{X^\ast, Y^\ast, \mathbf{Z}, W^\ast} ; \uXin_{X^\ast, Y^\ast, \mathbf{Z}, W^\ast} = \id_{[X^\ast [Y^\ast \mathbf{Z}] W^\ast]}
        \\
        \midrule
        \multicolumn{2}{c}{\text{Naturality}}
        \\
        \midrule
        \multicolumn{2}{c}{
            \begin{array}{c}
                (\mathbf{t}_0 \otimes [\id_{Y^\ast} \otimes \mathbf{t}_1]) ; \Xcup_{Y,\mathbf{W}} = (\id_X \otimes [\mathbf{t}_0^\ast \otimes \id_{\mathbf{Z}}]) ; \Xcup_{X,\mathbf{Z}} ; [\mathbf{t}_1]
                \\
                {[\id_{\mathbf{U}} \otimes \mathbf{t}_1] ; \rightCap_{\mathbf{U},Y,\mathbf{W}} ; [\id_{\mathbf{U}} \otimes t_0^\ast \otimes \id_{Y\mathbf{W}}] = \rightCap_{\mathbf{U},X,\mathbf{Z}} ; [\id_{\mathbf{U} X^\ast} \otimes t_0 \otimes \mathbf{t}_1]}
                \\
                (\Xcup_{X, \mathbf{Y}W} \otimes \id_{W^\ast \mathbf{U}}) ; ([\mathbf{t}_0 \otimes \id_{W}] \otimes [\id_{W^\ast} \otimes \mathbf{t}_1]) ; \rightCup_{\mathbf{Z},W,\mathbf{V}} = (\id_X \otimes \rightCup_{X^\ast \mathbf{Y},W,\mathbf{U}}) ; (\Xcup_{X,\mathbf{YU}}) ; [\mathbf{t}_0 \otimes \mathbf{t}_1]
            \end{array}
        }
        \\
        \midrule
        \multicolumn{2}{c}{\text{Yanking}}
        \\
        \midrule
        \rightCap_{I, X, \mathbf{Y}} ; \Xcup_{X,X\mathbf{Y}} = \dXin_{X,\mathbf{Y}}
        &
        \rightCap_{\mathbf{X},Y,[Y^\ast,\mathbf{Z}]\mathbf{W}} ; [\id_{{\mathbf{X}}Y^\ast} \otimes \Xcup_{Y, \mathbf{Z}} \otimes \id_{\mathbf{W}}] = \dXout_{\mathbf{X}, Y^\ast, \mathbf{Z}, \mathbf{W}}
        \\
        \multicolumn{2}{c}{
            (\id_{[X]} \otimes \rightCap_{I,X,\mathbf{Z}}) ; \rightCup_{I, X, X\mathbf{Z}} = \rightCup_{I,X,X\mathbf{Z}}
        }
        \\
        \midrule
        \multicolumn{2}{c}{\text{Pop, merge, yanking and functoriality of brackets}}
        \\
        \midrule
        \brCap ; \brCup = \id_I
        &
        \brCup_{\mathbf{X}} ; \brCap_{\mathbf{X}} = \id_{[\mathbf{X}]}
        \\
        (\brCap \otimes \id_{\mathbf{X}}) ; \rightCup_{I,I,\mathbf{X}} = [\id_{\mathbf{X}}]
        &
        [\mathbf{t}_0] ; [\mathbf{t}_1] = [\mathbf{t}_0 ; \mathbf{t}_1] \qquad\qquad [\id_{\mathbf{X}}] = \id_{[\mathbf{X}]}
        \\
        \bottomrule
    \end{array}
$}
\end{center}

\medskip

In particular, for each equation $t_1 = t_2$ above, we consider the set $\mathbb{E}$ of pairs $(t_1, t_2)$ and form the smallest congruence closure $\cong$ (w.r.t. $;$, $\otimes$, $(-)^\ast$ and $[-]$), inductively as follows:
\[
    \begin{array}{c}
        \begin{array}{c@{\qquad\quad}c@{\qquad\quad}c@{\qquad\quad}c}
        \inferrule{(t_1, t_2) \in \mathbb{E}}{t_1 \cong t_2}
        &
        \inferrule{-}{t \cong t}
        &
        \inferrule{t_1 \cong t_2}{t_2 \cong t_1}
        &
        \inferrule{t_1 \cong t_2 \qquad t_2 \cong t_3}{t_1 \cong t_3}
        \end{array}
        \\[15pt]
        \begin{array}{c@{\qquad\quad}c@{\qquad\quad}c@{\qquad\quad}c}
            \inferrule{t_1 \cong t_2 \qquad t_3 \cong t_4}{t_1 ; t_3 \cong t_2 ; t_4}
            &
            \inferrule{t_1 \cong t_2 \qquad t_3 \cong t_4}{t_1 \otimes t_3 \cong t_2 \otimes t_4}
            &
            \inferrule{t_1 \cong t_2}{(t_1)^\ast \cong (t_2)^\ast}
            &
            \inferrule{t_1 \cong t_2}{[t_1] \cong [t_2]}
        \end{array}    
    \end{array}
\]

\begin{definition}
    Given a (unbiased) closed monoidal signature $(\sort, \sign)$, we denote as $\freeCat$ the category whose objects are elements of $\sort^\sharp$ and morphisms are the typed covariant terms quotiented by $\cong$.
\end{definition}

\end{document}